\documentclass[11pt]{article}
\usepackage{rotating}
\usepackage{natbib}
\usepackage{amsfonts, amsmath,amssymb,bm, graphicx,color}
\usepackage{enumerate}
\usepackage{moreverb,booktabs,longtable}
\usepackage{amsmath,amssymb,bm, graphics,epsfig,color}
\usepackage{subfigure}
\usepackage{amsthm,bbm}
\usepackage{moreverb,booktabs,multirow,lscape,url}
\usepackage{bbm}
\usepackage{epstopdf}
\usepackage{pdfpages}
\usepackage[dvipsnames]{xcolor}
\usepackage{siunitx}
\usepackage{threeparttable}  
\usepackage{setspace}
\usepackage{arydshln}
\usepackage{comment}
\usepackage[linesnumbered,ruled,vlined]{algorithm2e}

\makeatletter
\renewenvironment{abstract}
{\small
	\begin{center}%
		{\bfseries \abstractname\vspace{-.5em}\vspace{\z@}}%
	\end{center}%
	\list{}{\leftmargin=0pt\rightmargin=0pt}%
	\item\relax}
{\endlist}
\makeatother

\usepackage[colorlinks=true,urlcolor=cyan,citecolor=blue,linkcolor=blue,bookmarks=true]{hyperref}
\newtheorem{The}{Theorem}

\numberwithin{The}{section}
\allowdisplaybreaks

\def\qed{\hfill \vrule height 6pt width 6pt depth 0pt}

\usepackage{footnote}
\usepackage[affil-it]{authblk}
\usepackage[english]{babel}

\begin{document}

\title{\bf Multi-Trigger Crypto CAT Bonds with On-Chain Settlement: Valuation and Optimal Design}
 \author{Yue Wang$^{1,\dagger}$, ~ Yijia Li$^{2,\dagger}$,  ~ Maochao Xu$^{3}$\thanks{Corresponding author: mxu2@ilstu.edu. $\dagger$ Both authors contributed equally to this work. }, ~ Xianyue Li$^{1}$\\
     $^{1}$    School of Mathematics and Statistics, Lanzhou University, China\\
    $^{2}$Department of Statistics and Finance, University of Science and Technology of China, China\\
        $^{3}$Department of Mathematics, Illinois State University, US
 
    }

\maketitle

\begin{abstract}
	
	\begin{itemize}
				
		\item \textbf{Purpose:} This study develops a pricing and contract design framework for cryptocurrency catastrophe (CAT) bonds to transfer extreme crypto-native risks, including protocol exploits, exchange breaches, and decentralized finance (DeFi) failures, to capital markets. The paper aims to address arbitrage-free valuation, sponsor-optimal contract design, and trustless settlement under the unique informational and operational features of blockchain systems.
		
		\item \textbf{Design/methodology/approach:} We propose a multi-trigger crypto CAT bond structure that jointly captures short-term catastrophic shocks and long-term systemic deterioration through oracle-reported loss metrics. An arbitrage-free valuation framework is developed under an incomplete-market setting using the minimal martingale measure, while sponsor-optimal contract design is formulated under a dual-measure framework. Empirically, crypto loss dynamics are modeled using generalized extreme value distributions and copula-based dependence structures, whereas financial risk factors are modeled through ARIMA--GARCH and vine copulas. A smart-contract-enabled on-chain settlement architecture is further introduced to automate trigger evaluation and cash-flow execution.
		
		\item \textbf{Findings:}  Empirical results based on REKT crypto incident data demonstrate strong dependence between monthly extreme and aggregate losses, with heterogeneous dependence structures across blockchain ecosystems. Simulation studies show that trigger and principal repayment designs substantially affect bond price distributions and tail risk exposures. Conservative trigger structures generate more stable bond valuations, whereas aggressive structures exhibit greater downside dispersion. The proposed framework supports economically viable risk transfer while enabling transparent and timely settlement through blockchain-based execution.
		
		\item \textbf{Originality/value:} This study develops, to the best of our knowledge, the first integrated framework for crypto-native catastrophe bonds that combines arbitrage-free pricing, sponsor-optimal contract design, and smart-contract-based on-chain settlement. Unlike traditional CAT bonds or cyber insurance-linked securities, the proposed framework explicitly incorporates oracle-based observability, crypto-specific dependence structures, and automated settlement, providing a novel mechanism for transferring systemic digital asset risks to capital markets.
		
	\end{itemize}
	
\end{abstract}
{\bf keywords:} {Crypto risks; Copula modeling; GARCH models; Smart contract}

\newpage
\section{Introduction and Motivation}
Cryptocurrency (commonly referred to as crypto) denotes digital assets secured by cryptographic techniques and maintained on decentralized, distributed ledger systems. These assets bypass traditional intermediaries and include a wide range of tokens with varied utilities. As of June 2025, the total crypto market capitalization exceeds USD 3.3 trillion, led by Bitcoin (USD 2.1 trillion) and Ethereum (USD 293 billion)\footnote{\url{https://coinmarketcap.com/}}. The decentralized finance (DeFi) sector, built atop programmable blockchains like  Ethereum (ETH), Binance Smart
Chain (BSC), supports applications in lending, trading, and yield farming. The total value locked (TVL) across DeFi protocols stands at approximately USD 241 billion\footnote{\url{https://defillama.com/}}. 

Despite its technological promise, the crypto ecosystem remains highly susceptible to cyber risks, including phishing attacks, compromised access controls, flash loan exploits, and rug pulls. For instance,  On October 6, 2022, the BNB Smart Chain's native cross-chain bridge between the BNB Beacon Chain and the BNB Smart Chain was exploited, leading to the unauthorized minting of 2,000,000 BNB (valued at approximately \$586 million) directly to the attacker’s address. Similarly, on February 21, 2025, Bybit, one of the largest crypto exchanges, suffered a major security breach resulting in the loss of approximately \$1.46 billion in ETH. The attack exploited a multisignature cold wallet vulnerability through a sophisticated combination of phishing tactics and smart contract manipulation. According to the REKT database\footnote{\url{https://de.fi/rekt-database}}, cumulative losses in the crypto space amounted to \$89,203,707,324 as of June 2025. These events underscore the urgent need for robust, scalable insurance solutions to mitigate such extreme  risks in the crypto ecosystem. It is worth noting that while several insurance products are currently available in the market for crypto-related risks, their coverage is often limited and excludes many significant threats, such as smart contract vulnerabilities, protocol failures, and systemic platform risks \citep{li2024framework}.

Catastrophe bonds (CAT) are insurance-linked securities (ILS) traditionally used to transfer low-probability, high-impact risks, such as natural disasters, from insurers to capital markets. Investors in CAT bonds receive coupon payments under normal conditions; however, if a predefined trigger event occurs, the principal is partially or fully forfeited and used to cover insured losses. The literature on CAT bonds is extensive, particularly regarding their design and pricing for traditional perils such as earthquakes, extreme mortality, wildfires, and hurricanes \citep{shao2015catastrophe,  li2023pricing,braun2023cyber,  huang2024storm, li2024mitigating}. A comprehensive overview of recent developments in CAT bond structures can be found in \cite{barrieu2024catastrophe}. Of particular relevance to our work is the emerging topic of cyber CAT bonds. \cite{xu2021data} proposed a multi-period CAT bond structure for data breach events, demonstrating that such instruments can serve as both effective risk transfer mechanisms and attractive investment products. \cite{liu2021extreme} employed extreme value theory (EVT) to design CAT bonds tailored to tail cybersecurity risks in power systems, showing that such instruments could significantly reduce the insolvency risk faced by insurers underwriting cyber insurance for critical infrastructure. \cite{kolesnikov2022cyber} developed a general framework for cyber bonds by fitting frequency–severity loss distributions to public cyber incident data and simulating bond prices, yields, and coupon-setting approaches. More recently, \cite{braun2023cyber} analyzed the feasibility of transferring cyber risks to capital markets through ILS. While these studies have advanced the understanding of cyber risk securitization, none address the distinctive features of crypto ecosystems.

Given the growing severity and frequency of extreme losses in the
cryptocurrency ecosystem, there is an increasing need for risk-transfer
instruments tailored to crypto-native catastrophe risk. In this paper, we
propose a  CAT bond structure designed specifically for the
unique loss dynamics of decentralized financial systems. Unlike traditional
CAT bonds, which primarily transfer natural disaster risk, or emerging cyber
CAT bonds that focus on enterprise data breaches, our framework targets
protocol exploits, cross-chain bridge failures, oracle manipulation, and
system-wide events intrinsic to decentralized finance
\citep{Kwock2022,li2024framework}. A crypto CAT bond leverages distinctive features of blockchain-based systems
along three dimensions. First, crypto losses are predominantly human-driven,
arising from code-level vulnerabilities, governance failures, and design
choices embedded in man-made protocols. Because many decentralized
applications rely on shared code bases or interoperable infrastructure, a
single exploit can rapidly propagate across platforms and generate
chain-wide losses within a short time horizon, a mechanism fundamentally
different from the natural perils underlying conventional ILS \citep{zhou2023sok}. Second, public blockchains generate
high-frequency, transparent, and tamper-resistant data, enabling the
construction of parametric triggers based on observable on-chain loss
metrics. This contrasts with traditional CAT bonds, which typically rely on
third-party loss indices or delayed post-event assessments. Third, while the
bond may be issued through a conventional bankruptcy-remote special purpose
vehicle (SPV), trigger evaluation, payout determination, and cash-flow
settlement can be executed on-chain via smart contracts. This hybrid architecture combines the
legal robustness and investor familiarity of traditional ILS issuance with
the automation, transparency, and trust-minimized execution of blockchain
systems \citep{gan2023decentralized,john2023smart}. Taken together, these
features render the crypto CAT bond not merely a digital analogue of existing
insurance-linked securities, but a distinct risk-transfer instrument whose
operational logic is tightly coupled to the informational structure of
decentralized networks \citep{biais2023advances}.

Motivated by this hybrid design, we focus on crypto CAT bonds that are issued
off-chain through a traditional SPV but whose trigger monitoring and
settlement are implemented on-chain. In such settings, loss realizations enter the contract exclusively through
decentralized oracles\footnote{In blockchain systems, decentralized oracles are
third-party mechanisms that transmit off-chain or aggregated information such
as price, event, or loss data to smart contracts, enabling on-chain contracts to
condition payoffs on external states.} rather than through direct verification
of the underlying economic loss \citep{john2023smart,biais2023advances}.
 Consequently,
bond payoffs depend on oracle-reported states. These features are modeled as intrinsic components of
the contract’s information environment and are explicitly incorporated into
the valuation and hedging framework developed in this paper. To the best of our knowledge, \emph{this study develops the first integrated pricing
and contract design framework for crypto-native catastrophe bonds that combine
traditional SPV-based issuance with on-chain trigger adjudication and
settlement, while explicitly incorporating oracle-based observability.}

{

These features raise fundamental questions regarding arbitrage-free valuation,
risk decomposition, and contract design for crypto CAT bonds operating under
oracle-based observability and market incompleteness. This paper addresses the
following interrelated research questions.
\begin{itemize}
    \item RQ1: {\em Arbitrage-free valuation under on-chain constraints.}
Does an arbitrage-free pricing framework exist for crypto CAT bonds whose
payoffs depend on non-tradable crypto-loss dynamics and on-chain execution
states? In particular, can one construct an equivalent martingale measure on
an enlarged filtration that incorporates both financial market information and
oracle-observed loss processes?

\item RQ2: {\em Pricing and hedging in incomplete crypto markets.}
In an inherently incomplete crypto market, which pricing measure provides a
principled benchmark for valuation and risk decomposition? How does the minimal
martingale measure associated with the tradable financial submarket separate
hedgeable financial risk from irreducible crypto-loss, and what
are the implications for hedging and residual risk?

\item RQ3: {\em Sponsor-optimal contract design under dual measures.}
How should a crypto CAT bond be designed when market valuation is governed by a
martingale pricing measure while sponsor losses are assessed under the physical
measure? What can be established about existence and implementability of
sponsor-optimal coupon and principal schedules under smart-contract constraints,
and how do candidate designs compare in sponsor tail risk?

\item RQ4: {\em Statistical calibration and simulation-based evaluation.}
How can the proposed framework be operationalized with data-driven models for
crypto-loss dynamics and financial discounting factors? Which marginal models
and dependence structures provide a reasonable joint fit, and how do the
calibrated inputs translate into simulated price/return distributions and
sponsor tail-risk metrics for candidate smart-contract designs?
\end{itemize}
The answers to these questions together deliver a unified framework for the
arbitrage-free valuation, contract structuring, and data-driven evaluation of
crypto CAT bonds under oracle-based observability and market incompleteness.

 The rest of the paper is organized as follows.
Section~\ref{sec:framework} introduces the proposed multi-trigger crypto CAT
bond, including the trigger design, cash-flow specification, and oracle-based
on-chain settlement architecture. Section~\ref{sec:na_chain} develops the
theoretical framework for arbitrage-free valuation in an incomplete market and
sponsor-optimal design. Section \ref{sec:statistics1} discusses statistical modeling methodologies: 
Section~\ref{sec:eda} presents exploratory data analysis, while
Sections~\ref{sec:estimation}--\ref{sec:statistics} describe the estimation and
statistical calibration of the joint crypto-loss and financial-factor models.
Section~\ref{sec:empirical} reports simulation-based pricing and design
implications, including distributions of bond prices/returns and sponsor tail
risk across candidate schedules. Section~\ref{sec:case-study} provides a
chain-wide ETH case study to illustrate feasibility. Finally,
Section~\ref{sec:conclusion} concludes; the \emph{Supplementary Material}
contains additional results.

\section{Framework for Crypto CAT Bonds and Pricing}
\label{sec:framework}

In this section, we introduce a market and information framework for the
pricing and design of multi-period crypto catastrophe bonds. The framework
distinguishes between (i) a tradable financial submarket, (ii) non-tradable
crypto-loss dynamics, and (iii) oracle-based observability that governs
on-chain trigger monitoring and settlement. Trading occurs over discrete
time intervals, and the bond is issued at time $t=0$ with maturity $T$.
Coupon payments and trigger evaluation are conducted at predetermined
monitoring dates.
{Within this framework, a crypto CAT bond is characterized by four components:
(i) a trigger mechanism based on oracle-reported loss states, (ii) a
multi-period cash flow structure for coupons and principal, (iii) an
arbitrage-free valuation methodology under an equivalent martingale measure,
and (iv) a smart-contract architecture that enables automated on-chain
execution of trigger adjudication and payments.
}

\paragraph{Market and Information Structure}
\label{subsec:market_structure}

We consider a discrete-time economy on a filtered probability space
$(\Omega,\mathcal{F},\mathbb{P},\mathbb{F}^{\mathrm{chain}})$, where
$\mathbb{F}^{\mathrm{chain}}=(\mathcal{F}^{\mathrm{chain}}_t)_{t=0,\ldots,T}$
represents the public information available to market participants over the bond’s lifetime.
Time is indexed by $t=0,1,\ldots,T$, corresponding to predetermined monitoring and payment dates.

The market comprises two components. The first is a traded financial submarket,
including standard instruments such as money market accounts and inflation- or interest-rate--linked securities.
Traded asset prices are assumed to be adapted to a subfiltration
$\mathbb{F}^{\mathrm{fin}}=(\mathcal{F}^{\mathrm{fin}}_t)_{t=0,\ldots,T}$ and are tradable without frictions.
The second component consists of crypto-related loss processes representing economic losses arising from protocol exploits,
governance failures, and other crypto-native events. These loss processes are non-tradable and therefore cannot be replicated
using the traded financial assets.

Loss realizations are not directly observed by the bond contract. Instead, contract-relevant loss information is provided through
decentralized oracle mechanisms that report aggregated or verified loss metrics on-chain. Let
$\mathbb{F}^{\mathrm{orc}}=(\mathcal{F}^{\mathrm{orc}}_t)_{t=0,\ldots,T}$
denote the filtration generated by oracle reports up to time $t$.
We define the full public information flow by the enlarged filtration
\(
\mathbb{F}^{\mathrm{chain}} = \mathbb{F}^{\mathrm{fin}} \vee \mathbb{F}^{\mathrm{orc}} .
\) Trading is restricted to the traded financial submarket. Accordingly, trading strategies
for the traded assets are required to be predictable with respect to $\mathbb{F}^{\mathrm{chain}}$.
This specification preserves the non-tradability of crypto losses while allowing market participants to use
publicly available oracle updates when adjusting their financial hedges. Because the crypto loss factors are non-tradable, the market is \emph{incomplete}. This incompleteness is central to the valuation and
risk-minimizing hedging of crypto CAT bonds developed in subsequent sections.

\paragraph{Crypto Loss Dynamics and Oracle Observability}
\label{subsec:loss_oracle}
Let $\{X_{k,j}\}$ denote the incident-level loss observations reported to the smart contract by decentralized
oracles. For each monitoring period $k=1,\ldots,T$, let $N_k\in\mathbb{N}_0$ be the  number of loss-causing
incidents observed during period $t$, and let
\(
\{X_{k,j}\}_{j=1}^{N_k}
\)
be the corresponding incident losses posted on-chain (e.g., protocol exploit losses, bridge hack losses, or
smart-contract breach losses). We assume $X_{k,j}$ is $\mathcal F^{\mathrm{orc}}_k$-measurable for all $k$ and $j$, and
that the collection $\{X_{k,j}: 1\le k\le t,\ 1\le j\le N_k\}$ constitutes the sole loss input available to the
contract for trigger evaluation and settlement. We assume the oracle layer delivers these observed incident losses
using multiple oracle sources; accordingly, we treat $\{X_{k,j}\}$ as the primitive loss process governing the
contract, consistent with the information available to the on-chain settlement logic.
 
From the incident-level reports, we form the state variables that drive the trigger. Define the period-$k$ maximum
incident loss
\[
Y_k  = \max_{1\le j\le N_k} X_{k,j},
\qquad k=1,\ldots,T,
\]
with the convention $Y_k =0$ when $N_k=0$, and define the cumulative loss across periods by
\[
Z_k= \sum_{s=1}^{k}X_{s}, 
\qquad k=1,\ldots,T,
\]
where $X_s=\sum_{j=1}^{N_s} X_{s,j}$ is the cumulative loss during time $s$. Both $Y_k$ and $Z_k$ are $\mathcal F^{\mathrm{orc}}_k$-measurable and therefore observable by the smart contract at the
end of each monitoring period. Here $Y_k$ captures the largest single-incident shock within period $k$, while $Z_k$
records cumulative loss across periods.

\paragraph{Trigger Mechanism}
\label{subsec:trigger}

The trigger mechanism governs adjustments to coupon payments and potential
reductions of principal based on oracle-reported loss information. Trigger
evaluation is conducted at the end of each monitoring period and depends
exclusively on the observable loss statistics  $(Y_k, Z_k)$. Let $\{\delta^M_\xi\}_{\xi=0}^{m_1}$ and $\{\delta^C_\eta\}_{\eta=0}^{m_2}$ denote
predefined {\color{blue}increasing} threshold levels for the maximum-loss and cumulative-loss metrics,
respectively, with $\delta^M_0 = \delta^C_0 = 0$. For each monitoring period
$k$, the joint trigger state is defined as
\[
{\color{blue}A_{\xi_k}B_{\eta_k}}
=
\left\{
Y_k \in [\delta^M_{\xi-1}, \delta^M_\xi), \;
Z_k \in [\delta^C_{\eta-1}, \delta^C_\eta)
\right\},
\quad
\xi = 1,\ldots,m_1, \;
\eta = 1,\ldots,m_2.
\]
{\color{blue}Trigger activation occurs if either of the following two conditions is satisfied}:
(i) the maximum reported loss $Y_k$ exceeds its corresponding threshold level,
or
(ii) the maximum reported loss remains below its threshold while the cumulative
reported loss $Z_k$ exceeds the specified cumulative threshold.
These conditions allow the contract to respond both to abrupt, extreme loss
events and to sustained accumulation of moderate losses over time. By combining short-horizon severity through $Y_k$ with long-horizon
accumulation through $Z_k$, the joint trigger mitigates the limitations of
single-metric designs. In particular, it remains sensitive to sudden
catastrophic shocks such as large-scale protocol exploits, while also capturing
persistent deterioration that may not manifest through isolated extreme
events. This dual structure is well suited to the loss patterns observed in
crypto-native environments and provides a flexible basis for subsequent cash
flow and pricing specifications.

\paragraph{{\color{blue}Cash Flow Structure}} Let $F$ denote the face value of the bond. The cash flow received at monitoring
date $k$ is a deterministic function of the reference rate $R_k$ and the
oracle-reported trigger statistics $(Y_k,Z_k)$, and is given by
\begin{equation}
\label{eq:cashflow}
d(R_k,Y_k,Z_k)
=
\begin{cases}
F\, f_{\xi_k,\eta_k}(R_k), & k = 1,2,\ldots,T-1, \\[4pt]
F\, f_{\xi_k,\eta_k}(R_k) + F\, f_{\eta_k}(Z_k), & k = T,
\end{cases}
\end{equation}
{\color{blue}where \((\xi_k,\eta_k)\) denotes the realized trigger state at time \(k\), and $Y_k$ is associated with the the trigger $A_{\xi_k}$.} Coupon payments depend on both short-term and cumulative loss information,
while principal repayment is determined exclusively at maturity. The coupon adjustment function $f_{\xi,\eta}(\cdot)$ is specified as
\begin{equation}
\label{eq:coupon-with-spread}
f_{\xi_k,\eta_k}(R_k)
=
\Delta(A_{\xi_k}B_{\eta_k})\,(R_k + s),
\qquad
\xi = 1,\ldots,m_1,\;\eta = 1,\ldots,m_2,
\end{equation}
where $\Delta(A_{\xi_k}B_{\eta_k}) \in [0,\infty)$ denotes the coupon multiplier
associated with trigger state $A_{\xi_k}B_{\eta_k}$, and $s \in \mathbb{R}$ is a
contractually specified spread calibrated at issuance. This structure allows
coupon payments to scale with both prevailing market rates and the severity of
realized crypto losses.

Figure~\ref{fig:coupon} shows an illustrative example of the multiplier matrix with \( m_1 = 4 \) and \( m_2 = 3 \), represented as a heatmap.
 \begin{figure}[htb!]
    \centering
    \subfigure[Coupon]{
    \includegraphics[width=0.4\linewidth]{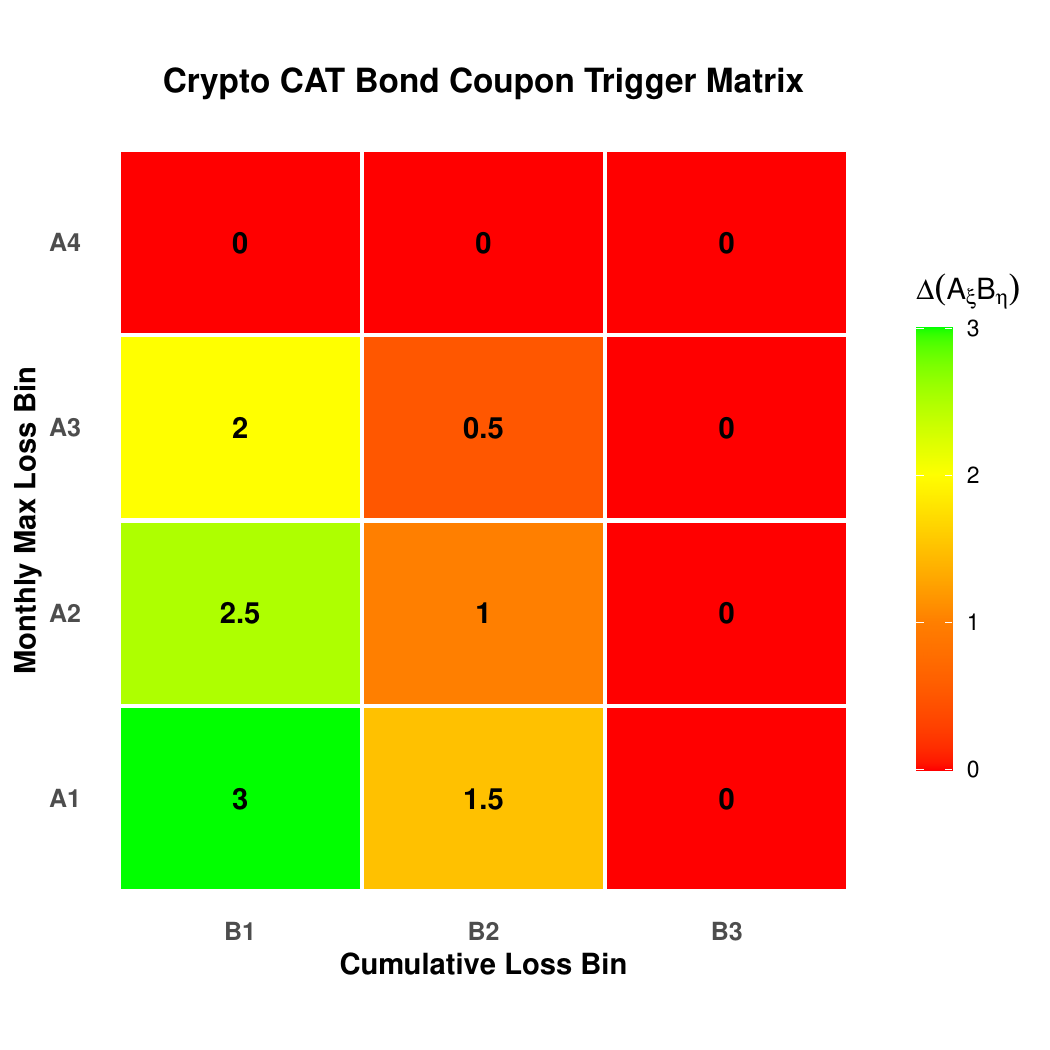} 
    \label{fig:coupon}}
    \subfigure[Principal]{
    \includegraphics[width=0.4\linewidth]{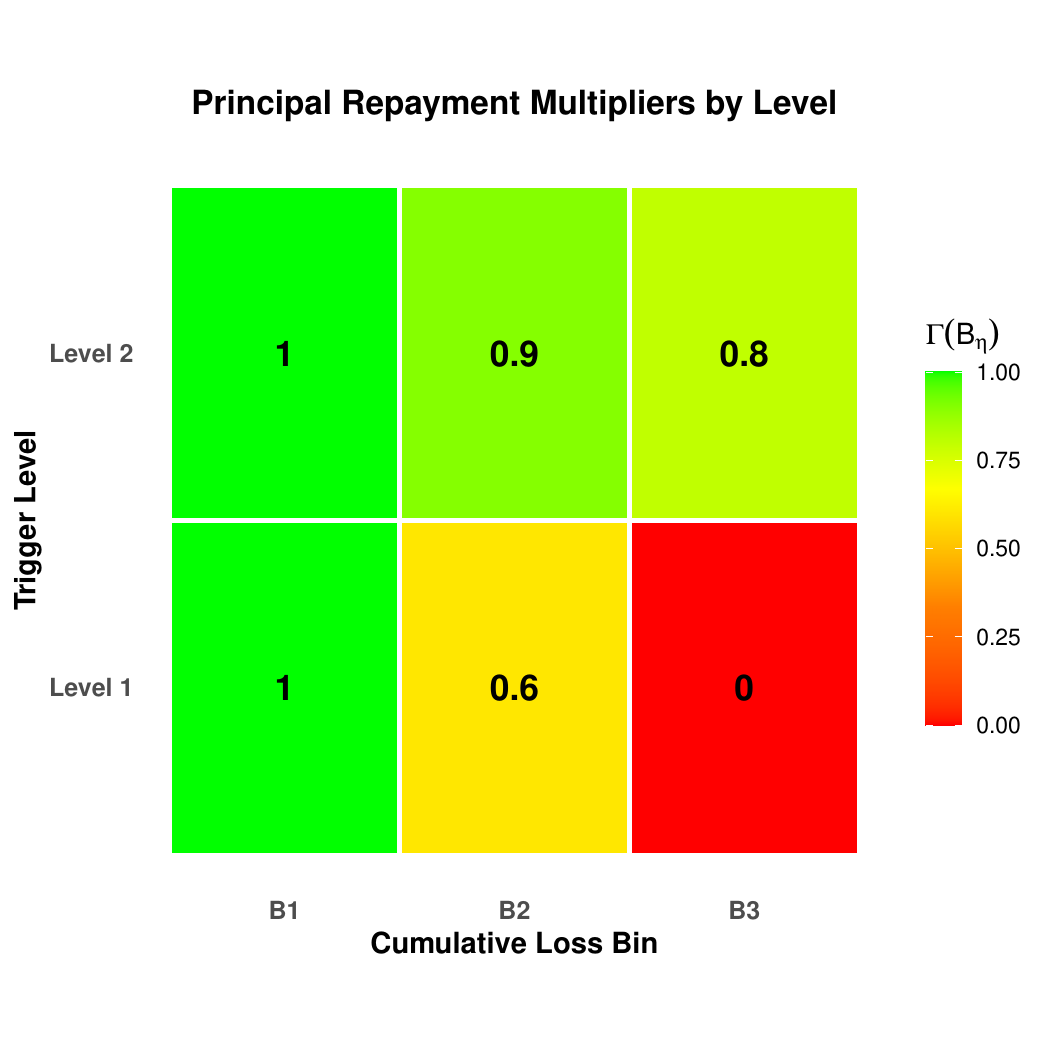}\label{fig:principle}}
    \caption{Crypto CAT bond coupon and  principal multipliers $\Delta(A_\xi B_\eta)$, and $\Gamma(B_\eta)$, respectively, $\xi=1,\ldots, 4$, and $\eta=1,2, 3$.}
 \end{figure}
The green-shaded cells on the bottom left of the heatmap correspond to favorable market conditions, where both cumulative losses and monthly maximum losses are low, yielding the highest coupon multipliers (e.g., \( \Delta = 3.0 \)). As one moves upward or to the right, reflecting increasingly severe losses in either dimension, the multiplier decreases progressively. The red-shaded region indicates the most adverse outcomes, where coupon payments are fully suspended. This severity-sensitive structure ensures that investor compensation scales with the underlying cryptosystem risk, aligning payouts with both short-term shock events and long-term systemic stress.

For the principal repayment function, it is determined solely by the cumulative loss. Specifically,
\begin{equation}\label{eq:principle}
    f_{\eta_T}(Z_T) = \Gamma(B_{\eta_T}),
\end{equation}
where  $\Gamma(B_{\eta_T})$ denotes the principal repayment multiplier associated with the cumulative loss bin \( B_{\eta_T} \), $\eta_T=1,2, 3$. To capture structural variation in bond design, we extend the principal repayment function to reflect multiple evaluation scenarios, each corresponding to a distinct trigger level. Figure~\ref{fig:principle} illustrates the repayment multipliers \( \Gamma(B_{\eta_T}) \) across cumulative loss bins \( B_1, B_2, B_3 \) for two such design levels. These levels do not represent tranches within a single bond, but rather alternative configurations that yield different bond valuations under stress. Level~1 reflects an aggressive structure: while it still provides full repayment in \(B_1\), repayment declines to 60\% in \(B_2\) and is entirely forfeited under severe cumulative losses in \(B_3\). By contrast, Level~2 is more conservative, with full repayment under low cumulative loss (\(B_1\)) and only a modest reduction ($90\%$) under moderate losses (\(B_2\)). This setup enables flexible scenario analysis and facilitates comparative pricing of alternative CAT bond designs tailored to different risk tolerances.

{
\paragraph{Arbitrage-Free Valuation and Pricing Measures}
\label{subsec:valuation}

We now formalize the valuation framework for crypto CAT bonds within the market
and information structure introduced above. 

In an incomplete market, the equivalent martingale measure is generally not
unique. To obtain a canonical pricing benchmark, we adopt the minimal
martingale measure (MMM) associated with the traded financial submarket
\citep{follmer1990hedging,mcneil2015quantitative}. Under the MMM, discounted
prices of tradable assets are martingales, while the distribution of
non-tradable crypto-loss and oracle risks remains unchanged (see Section \ref{sec:na_chain}). This choice yields
an arbitrage-free valuation framework that does not impose artificial risk
premia on sources of risk that cannot be hedged through financial trading. Let $\mathbb{Q}^{\mathrm{M}}$ denote the minimal martingale measure on
$(\Omega,\mathcal{F},\mathbb{F})$. The time-$0$ value of the crypto CAT bond is
given by the discounted expectation of its cash flows under
$\mathbb{Q}^{\mathrm{M}}$,
\[
V_0
=
\mathbb{E}^{\mathbb{Q}^{\mathrm{M}}}
\left[
\sum_{k=1}^{T}
\frac{d(R_k,Y_k,Z_k)}{D_k}
\right],
\]
provided the expectation is well defined, where $D_k$ denotes is the
money-market num\'eraire at time $k$ \citep{bjork2009arbitrage}. This valuation rule ensures consistency with
the absence of arbitrage in the traded financial submarket while fully
accounting for the impact of non-tradable crypto-loss dynamics and oracle-based
observability. The MMM also induces a natural decomposition of the bond
payoff into hedgeable and unhedgeable components. While the hedgeable component
can be partially offset through trading in financial assets, residual risk
arising from crypto losses must be borne by investors (see Section \ref{sec:na_chain} for details).

}

\paragraph{{\color{blue}On-Chain Settlement Architecture}} 
We next describe the on-chain settlement architecture through which trigger
evaluation and cash-flow execution are implemented, and which gives rise to
oracle-based observability in practice.

As illustrated in Figure~\ref{fig:cat_diagram}, the architecture is organized into four functional layers: \emph{Data and Pricing}, \emph{Trigger and Governance}, \emph{Capital Flow}, and \emph{Bond and Investor}. This framework serves as the execution engine for the pricing model derived in Section 2, ensuring that the cash flow function $d(R_k, Y_k, Z_k)$ is strictly enforced without requiring a central administrator. By embedding the calibrated payout maps $(\Delta,\Gamma)$ directly into the settlement logic, the system guarantees transparent, tamper-resistant execution, thereby reducing the agency costs and settlement latency inherent in off-chain issuances.

\begin{figure}[htb!]
    \centering
    \includegraphics[width=.8\textwidth]{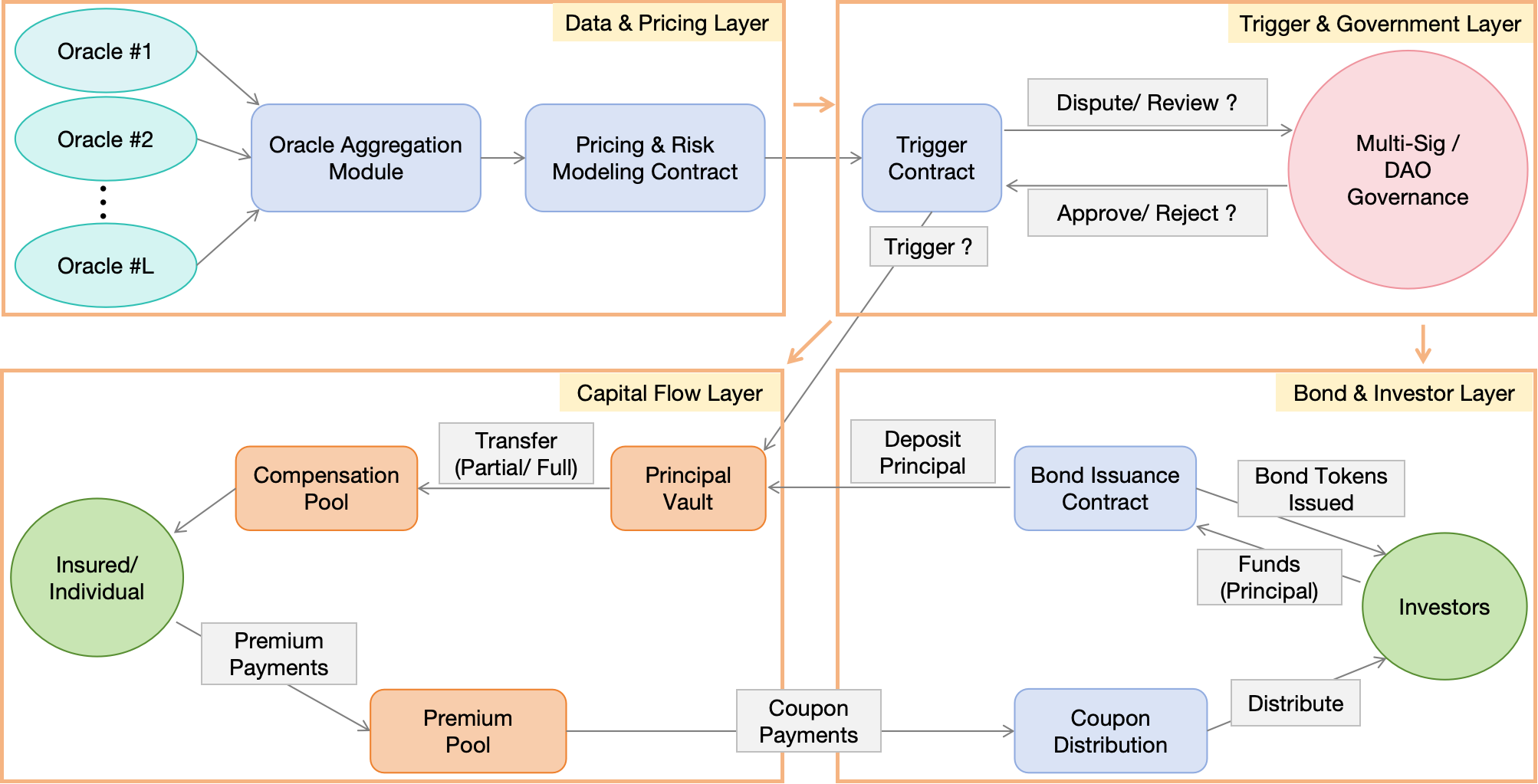}
    \caption{Operational architecture for trustless settlement. The automated flow of capital ensures full collateralization.}
    \label{fig:cat_diagram}
\end{figure}

We briefly describe the economic function of each layer.

\begin{itemize}
\item \textbf{Data and Pricing layer.}
This layer governs the acquisition and aggregation of loss-related information.
A decentralized oracle network aggregates event and loss data from multiple
independent sources, mitigating reliance on any single reporting entity. The
resulting oracle-reported loss metrics are passed to the pricing and risk
modeling contract, which encodes the pre-calibrated payout functions. At each
monitoring date, the realized trigger state $(Y_k,Z_k)$ and the corresponding
coupon multiplier are recorded on-chain, providing a publicly verifiable record
of the inputs determining bond cash flows.

\item \textbf{Trigger and Governance layer.}
The trigger contract maps the realized loss state $(Y_k,Z_k)$ to the
corresponding trigger cell $A_\xi B_\eta$, thereby determining coupon
adjustments and, at maturity, the applicable principal repayment multiplier
$\Gamma(B_{\eta_T})$. Under standard conditions, this mapping is deterministic and
automated. To accommodate rare boundary cases or data anomalies that may arise
in practice, an auxiliary governance mechanism may be invoked, allowing for
transparent resolution through predefined consensus procedures recorded
on-chain.

\item \textbf{Capital Flow layer.}
This layer manages the segregation and transfer of funds associated with the
bond. Capital raised at issuance is deposited into a principal vault that
functions as a collateral account backing investor claims, while coupon
payments are funded through a separate premium pool. Upon trigger activation,
the settlement logic reallocates a contractually specified portion of the
collateral to a compensation pool designated for sponsor recovery. At maturity,
the remaining balance in the principal vault is returned to investors, and any
accumulated compensation is released to the sponsor, enabling timely and
auditable settlement.

\item \textbf{Bond and Investor layer.}
The bond and investor layer governs the interaction between the contract and
market participants. Coupon payments are calculated based on the on-chain
reference rate and the applicable coupon multiplier, and are distributed
automatically according to contract rules. Tokenization of bond positions
facilitates ownership tracking and may support secondary market transfers,
subject to applicable regulatory and contractual constraints.
\end{itemize}

This architecture supports an end-to-end on-chain execution
process encompassing trigger evaluation, cash-flow determination, and
settlement, while remaining consistent with the information structure and
valuation framework developed earlier.

 \section{Theoretical Results}\label{sec:na_chain}
This section addresses RQ1--RQ3 by establishing an arbitrage-free pricing
framework under oracle-based observability, deriving a martingale-based risk
decomposition in the resulting incomplete market, and formulating sponsor-optimal
contract design under the dual measures $(\mathbb{Q},\mathbb{P})$.

\subsection{Arbitrage-free valuation under smart-contract constraints}

{
We work on a discrete-time probability space
$(\Omega,\mathcal{F},\mathbb{P})$ that supports three components of uncertainty:
(i) tradable financial market risk factors, (ii) oracle-reported crypto loss dynamics
relevant for contract settlement, and (iii) on-chain contract-state information.
Specifically, we consider the product space
\[
(\Omega,\mathcal{F},\mathbb{P})
=
(\Omega^{F}\times\Omega^{L}\times\Omega^{C},
\;\mathcal{F}^{F}\otimes\mathcal{F}^{L}\otimes\mathcal{F}^{C},
\;\mathbb{P}^{F}\otimes\mathbb{P}^{L}\otimes\mathbb{P}^{C}),
\]
{\color{blue}where the superscripts $F$, $L$, and $C$ correspond to financial market
risk factors, oracle-reported loss dynamics, and on-chain contract-state
information, respectively.} Let $\mathbb{F}^{\mathrm{fin}}=\{\mathcal{F}^F_k\}_{k=0}^T$ denote the filtration generated by the
tradable financial risk factors, and let $\mathbb{F}^{\mathrm{chain}}=\{\mathcal{F}^{\mathrm{chain}}_k\}_{k=0}^T$
denote the on-chain filtration defined by
\[
\mathcal{F}^{\mathrm{chain}}_k
=
\mathcal{F}^F_k \,\vee\, \mathcal{G}^O_k,
\qquad
\mathcal{G}^O_k
=
\sigma\big((Y_1,Z_1),\ldots,(Y_k,Z_k)\big)\,\vee\, \mathcal{F}^C_k,
\]
where $(Y_k,Z_k)$ are the oracle-reported loss statistics introduced in Section~2 and
$\mathcal{F}^C_k$ captures oracle commits and contract-state updates observable on-chain up to time $k$.
For notational convenience, write $\mathbb{F}^{\mathrm{orc}}=\{\mathcal{G}^O_k\}_{k=0}^T$, so that
$\mathbb{F}^{\mathrm{chain}}=\mathbb{F}^{\mathrm{fin}}\vee \mathbb{F}^{\mathrm{orc}}$.
 
We assume that the tradable financial risk factors are independent of the on-chain information relevant for
contract settlement. In our empirical implementation, the financial submarket is driven by macro factors
(Treasury rates, inflation, and the Secured Overnight Financing Rate (SOFR)), whereas crypto catastrophe
incidents (e.g., protocol exploits) are primarily driven by
endogenous blockchain mechanisms and vulnerabilities. Since these crypto incidents are contract-specific
and small relative to macro market depth, we adopt independence as a scale-separation approximation.
We acknowledge that independence is a convenient sufficient condition for theoretical tractability and may
not hold exactly during rare periods of joint stress; nevertheless, it provides a reasonable baseline in our
setting.

We assume the traded financial submarket is arbitrage free and admits a  MMM 
$\mathbb{Q}^{F}\sim \mathbb{P}^{F}$ such that all discounted traded asset prices are $\mathbb{Q}^{F}$-martingales
with respect to $\mathbb{F}^{\mathrm{fin}}$ \citep{jeanblanc2009mathematical}.
Trading is restricted to the traded financial assets. Accordingly, trading strategies are
assumed to be predictable with respect to the enlarged filtration $\mathbb{F}^{\mathrm{chain}}$.
Oracle publication and smart-contract state updates are treated as occurring at the monitoring dates, so that
strategies cannot condition on within-period intermediate execution states.  {\color{blue}Let $\{S_k^i\}_{k=0}^T$, $i=1,\ldots,d$, denote the price processes of $d$ traded financial assets, adapted to the
financial filtration $\mathbb F^{\mathrm{fin}}$.}    In the next theorem we show that the financial MMM extends naturally to the enlarged filtration and yields an
arbitrage-free valuation for $\mathbb{F}^{\mathrm{chain}}$-measurable cash flows.}
 
\begin{The}\label{thm:emm_chain_combined}
Define the product measure
\[
\mathbb{Q}=\mathbb{Q}^{F}\otimes \mathbb{P}^{L}\otimes \mathbb{P}^{C}.
\]
Then $\mathbb{Q}\sim\mathbb{P}$ on $(\Omega,\mathcal{F})$, and all discounted traded asset prices are
$\mathbb{Q}$-martingales with respect to the enlarged on-chain filtration
$\{\mathcal{F}^{\mathrm{chain}}_k\}_{k=0}^T$. Moreover, for any cash-flow stream $\{CF_k\}_{k=1}^T$ with $CF_k$ being
$\mathcal{F}^{\mathrm{chain}}_k$-measurable and satisfying
\[
\mathbb{E}^{\mathbb{Q}}
\!\left[
\sum_{k=1}^T \frac{|CF_k|}{D_k}
\right]
<\infty,
\]
the arbitrage-free ex-dividend value process is given by the risk-neutral pricing formula
\[
V_k
=
D_k\,
\mathbb{E}^{\mathbb{Q}}
\!\left[
\sum_{t=k+1}^T \frac{CF_t}{D_t}\,\middle|\,
\mathcal{F}^{\mathrm{chain}}_k
\right],
\qquad k=0,\ldots,T.
\]
\end{The}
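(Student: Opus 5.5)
We prove the three assertions in order: equivalence, the martingale property on the enlarged filtration, and the pricing formula. For \emph{equivalence}, the density is $\frac{d\mathbb{Q}}{d\mathbb{P}}(\omega^F,\omega^L,\omega^C)=\frac{d\mathbb{Q}^F}{d\mathbb{P}^F}(\omega^F)$. This holds because $\mathbb{Q}^F\sim\mathbb{P}^F$ gives a strictly positive Radon--Nikodym derivative $Z^F$, and on the product space $\mathbb{Q}$ agrees with $Z^F\,d\mathbb{P}$ on measurable rectangles. Since rectangles form a $\pi$-system generating $\mathcal{F}^F\otimes\mathcal{F}^L\otimes\mathcal{F}^C$, Dynkin's lemma extends the equality to all of $\mathcal{F}$. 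Strict positivity of $Z^F$ then yields $\mathbb{Q}\sim\mathbb{P}$.

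For the \emph{martingale property}, we use two structural facts. First, $\mathcal{F}^F_k$ depends only on the $\omega^F$ coordinate, while $\mathcal{G}^O_k$ depends only on $(\omega^L,\omega^C)$; this is implicit in the product-space modelling of the oracle statistics $(Y_k,Z_k)$ and of $\mathcal{F}^C_k$. Second, under $\mathbb{Q}$ the coordinate $\sigma$-algebras are independent, so $\mathcal{F}^F_T$ is $\mathbb{Q}$-independent of $\mathcal{G}^O_T$. Fix a discounted price $\tilde S^i_{k+1}=S^i_{k+1}/D_{k+1}$, which is $\mathcal{F}^F_{k+1}$-measurable and $\mathbb{Q}^F$-integrable; $D_k$ is $\mathbb{F}^{\mathrm{fin}}$-adapted, being a traded num\'eraire. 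We check $\mathbb{E}^{\mathbb{Q}}[\tilde S^i_{k+1}\mathbf 1_{A\cap G}]=\mathbb{E}^{\mathbb{Q}}[\tilde S^i_{k}\mathbf 1_{A\cap G}]$ for $A\in\mathcal{F}^F_k$ and $G\in\mathcal{G}^O_k$. By independence the left side equals $\mathbb{E}^{\mathbb{Q}^F}[\tilde S^i_{k+1}\mathbf 1_A]\,\mathbb{Q}(G)$. The $\mathbb{Q}^F$-martingale property replaces $\tilde S^i_{k+1}$ by $\tilde S^i_k$, and independence is used again to recombine. Sets of the form $A\cap G$ form a $\pi$-system generating $\mathcal{F}^{\mathrm{chain}}_k$, and integrability follows from the $\mathbb{Q}^F$-integrability of $\tilde S^i_{k+1}$. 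A monotone class argument therefore extends the identity to all of $\mathcal{F}^{\mathrm{chain}}_k$, giving $\mathbb{E}^{\mathbb{Q}}[\tilde S^i_{k+1}\mid\mathcal{F}^{\mathrm{chain}}_k]=\tilde S^i_k$. This is the standard ``independent enlargement preserves martingales'' lemma.

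For the \emph{pricing formula}, we define $V_k$ by the displayed expression. The integrability assumption makes $M_k:=V_k/D_k+\sum_{t\le k}CF_t/D_t=\mathbb{E}^{\mathbb{Q}}[\sum_{t=1}^T CF_t/D_t\mid\mathcal{F}^{\mathrm{chain}}_k]$ a well-defined $(\mathbb{Q},\mathbb{F}^{\mathrm{chain}})$-martingale, and $V_T=0$. The discounted gains process of the claim is therefore a $\mathbb{Q}$-martingale. Adjoining this claim to the traded assets, $\mathbb{Q}$ remains an equivalent martingale measure for the extended market on $\mathbb{F}^{\mathrm{chain}}$. Absence of arbitrage then follows from the easy direction of the discrete-time fundamental theorem of asset pricing (Dalang--Morton--Willinger). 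Concretely, for any $\mathbb{F}^{\mathrm{chain}}$-predictable self-financing strategy, the discounted value is a martingale transform under $\mathbb{Q}$. In discrete time such a transform is a local martingale, and it becomes a true martingale once it is bounded below at the terminal date. Hence a strategy with zero initial value and nonnegative terminal value must have zero terminal value $\mathbb{Q}$-a.s., and so $\mathbb{P}$-a.s.

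The \emph{main obstacle} is the martingale step, specifically the justification that $\mathcal{F}^F_k$ and $\mathcal{G}^O_k$ live on separate coordinates. The paper defines $(Y_k,Z_k)$ in Section~2 without explicitly placing them on $\Omega^L$, so this must be stated as the modelling interpretation of the product space. A second, minor point is to emphasise that the value is ``arbitrage-free'' relative to the chosen $\mathbb{Q}$, not unique, since the market is incomplete.
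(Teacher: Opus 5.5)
Your proof is correct and follows the same route as the paper. Equivalence comes from the product structure with $\mathbb{Q}^F\sim\mathbb{P}^F$. The martingale property comes from $\mathbb{E}^{\mathbb{Q}}[\tilde S^i_{k+1}\mid\mathcal{F}^{\mathrm{chain}}_k]=\mathbb{E}^{\mathbb{Q}^F}[\tilde S^i_{k+1}\mid\mathcal{F}^{F}_k]$, because the oracle and contract-state information sits on the independent $(\omega^L,\omega^C)$ coordinates. Your $\pi$-system argument makes rigorous the paper's one-line remark that ``the additional on-chain information is irrelevant.'' The paper relies on the same implicit coordinate assumption you flag.

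For the pricing formula, the paper only defines $V_k=D_k\tilde V_k$ and notes that it is finite. Your extra step shows the claim's discounted gains process is a $(\mathbb{Q},\mathbb{F}^{\mathrm{chain}})$-martingale and then invokes the easy direction of the fundamental theorem of asset pricing. That step supplies the justification for ``arbitrage-free'' which the paper leaves implicit.
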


\begin{proof}
Since $\mathbb{Q}^{F}\sim\mathbb{P}^{F}$ on $(\Omega^{F},\mathcal{F}^{F})$ and
$\mathbb{P}^{L},\mathbb{P}^{C}$ are unchanged, the product measure
$\mathbb{Q}=\mathbb{Q}^{F}\otimes\mathbb{P}^{L}\otimes\mathbb{P}^{C}$ satisfies
$\mathbb{Q}\sim\mathbb{P}$ on $(\Omega,\mathcal{F})$. Let $\tilde S^i_k=S^i_k/D_k$
be a discounted traded asset price, which is $\mathcal{F}^F_k$-adapted. Then, for
$k=0,\ldots,T-1$,
\[
\mathbb{E}^{\mathbb{Q}}\!\left[\tilde S^i_{k+1}\mid \mathcal{F}^{\mathrm{chain}}_k\right]
=
\mathbb{E}^{\mathbb{Q}}\!\left[\tilde S^i_{k+1}\mid \mathcal{F}^{F}_k\right]
=
\mathbb{E}^{\mathbb{Q}^{F}}\!\left[\tilde S^i_{k+1}\mid \mathcal{F}^{F}_k\right]
=
\tilde S^i_k,
\]
where the first equality uses that $\tilde S^i_{k+1}$ depends only on the
$\Omega^F$ coordinate (hence the additional on-chain information is irrelevant),
and the second uses that $\mathbb{Q}$ coincides with $\mathbb{Q}^F$ on the
financial component. Thus discounted traded prices are
$(\mathbb{Q},\mathcal{F}^{\mathrm{chain}}_k)$-martingales.

Next, let $\{CF_k\}_{k=1}^T$ be an $\mathbb{F}^{\mathrm{chain}}$-adapted cash-flow stream such that
$\mathbb{E}^{\mathbb{Q}}[\sum_{k=1}^T |CF_k|/D_k]<\infty$, and define the ex-dividend discounted value process by
\[
\tilde V_k
=
\mathbb{E}^{\mathbb{Q}}
\!\left[
\sum_{t=k+1}^T \frac{CF_t}{D_t}\,\middle|\,
\mathcal{F}^{\mathrm{chain}}_k
\right],
\qquad k=0,\ldots,T.
\]
By integrability, $\tilde V_k$ is finite a.s., and the time-$k$ ex-dividend value is then
\(
V_k =D_k\,\tilde V_k.
\)
This completes the proof.
  \qed
\end{proof}

{\bf Remark:} Theorem~\ref{thm:emm_chain_combined} answers RQ1 by establishing an arbitrage-free valuation framework for
on-chain-settled crypto CAT bonds whose cash flows are measurable with respect to the enlarged on-chain
filtration. In particular, the product-form pricing measure $\mathbb Q$ extends the financial submarket's martingale
measure to $\mathbb F^{\mathrm{chain}}$, and the bond price is obtained by the standard risk-neutral discounted
conditional expectation under $\mathbb Q$.

\medskip
While Theorem~\ref{thm:emm_chain_combined} establishes arbitrage-free valuation,
it does not by itself characterize which sources of risk can be hedged through
financial trading. As mentioned before, since the market is incomplete, the EMM is not unique. In what follows, we adopt the MMM $\mathbb{Q}^M$ in the sense of \citet{follmer1990hedging, follmer2010minimal}:
among all pricing measures, we select the one that changes measure only on the
tradable financial coordinate and leaves the law of the non-traded sources of
risk unchanged. Equivalently, we work with the extension of the financial MMM to the full product space, i.e.,
\(
\mathbb{Q}^{M}
=
\mathbb{Q}^{F}\otimes \mathbb{P}^{L}\otimes \mathbb{P}^{C}.
\) For brevity, we write $\mathbb{Q}=\mathbb{Q}^{M}$ throughout the remainder of this section.

 \paragraph{Discounted payoff and hedgeable projection.}
Let $\mathbb Q=\mathbb Q^{M}$ denote the extended minimal martingale measure on the full product space, so that
the discounted traded asset prices $\tilde{\mathbf S}=(\tilde S^1,\ldots,\tilde S^d)$ form a square-integrable
$\mathbb Q$-martingale with respect to the enlarged filtration
$\mathbb F^{\mathrm{chain}}=\{\mathcal F^{\mathrm{chain}}_k\}_{k=0}^T$.
Let the  discounted  total payoff be
\(
H = \sum_{t=1}^{T} \frac{CF_t}{D_t} \in L^2(\mathbb Q).
\)
Define the discounted terminal gain space generated by self-financing trading in traded assets as
\[
\mathcal K
 =
\left\{
\sum_{k=0}^{T-1}\boldsymbol\vartheta_k^\top(\tilde{\mathbf S}_{k+1}-\tilde{\mathbf S}_k)
:\ 
\boldsymbol\vartheta_k\in L^0(\mathcal F^{\mathrm{chain}}_k;\mathbb R^d),\
\sum_{k=0}^{T-1}\boldsymbol\vartheta_k^\top(\tilde{\mathbf S}_{k+1}-\tilde{\mathbf S}_k)\in L^2(\mathbb Q)
\right\},
\]
and let $\overline{\mathcal K}$ be its $L^2(\mathbb Q)$-closure. Set
\(
\mathcal H  = \mathbb R \oplus \overline{\mathcal K},
\)
which is a closed linear subspace of $L^2(\mathbb Q)$.
Define the mean-square hedgeable projection and residual by
\[
H^{F} =\operatorname{Proj}_{\mathcal H}(H),\qquad H^\perp =H-H^{F}.
\]

\begin{The}\label{thm:decomp_contract_design}
Let $\tilde V_k$ be the  {ex-dividend} discounted value process  as in Theorem~\ref{thm:emm_chain_combined}.
Then
\[
\tilde V_k = \tilde V_k^{F} + M_k,
\qquad
\tilde V_k^{F} =\mathbb E^{\mathbb Q}\!\left[H^{F}\,\middle|\,\mathcal F^{\mathrm{chain}}_k\right]
-\sum_{t=1}^k \frac{CF_t}{D_t},
\qquad
M_k =\mathbb E^{\mathbb Q}\!\left[H^{\perp}\,\middle|\,\mathcal F^{\mathrm{chain}}_k\right].
\]
Moreover, for each traded asset $i=1,\ldots,d$ and all $k=0,\ldots,T-1$,
\[
\mathbb E^{\mathbb Q}\!\left[(M_{k+1}-M_k)\,(\tilde S^i_{k+1}-\tilde S^i_k)\,\middle|\,\mathcal F^{\mathrm{chain}}_k\right]=0.
\]
Consequently, $M$ represents the component of contract risk that cannot be eliminated in mean-square sense by any
self-financing trading strategy in the traded financial submarket.
\end{The}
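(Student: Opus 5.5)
The plan has two parts: first verify the decomposition by linearity of conditional expectation, then obtain the orthogonality from the projection property. For the decomposition, write $H = H^F + H^\perp$. By the definition in Theorem~\ref{thm:emm_chain_combined},
\[
\tilde V_k = \mathbb E^{\mathbb Q}\!\left[\sum_{t=k+1}^T \frac{CF_t}{D_t}\,\middle|\,\mathcal F^{\mathrm{chain}}_k\right]
= \mathbb E^{\mathbb Q}[H\mid\mathcal F^{\mathrm{chain}}_k]-\sum_{t=1}^k \frac{CF_t}{D_t}.
\]
Here we use that $CF_t/D_t$ is $\mathcal F^{\mathrm{chain}}_k$-measurable for $t\le k$, since $D_t$ is financial-adapted and $CF_t$ is chain-adapted. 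Splitting $\mathbb E^{\mathbb Q}[H\mid\cdot]$ into the conditional expectations of $H^F$ and $H^\perp$ gives $\tilde V_k=\tilde V_k^F+M_k$ immediately. Square integrability of $H$ ensures that all of these expectations are well defined, and $M$ is an $L^2(\mathbb Q)$-martingale.

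For the orthogonality, fix $i$, $k$, and a bounded $\mathcal F^{\mathrm{chain}}_k$-measurable random variable $\xi$. Consider the single-period strategy $\boldsymbol\vartheta_k=\xi e_i$, with zero holdings at all other dates. Its gain $G=\xi(\tilde S^i_{k+1}-\tilde S^i_k)$ lies in $\mathcal K$, because $\tilde S$ is square integrable and $\xi$ is bounded. Since $H^\perp$ is orthogonal to $\mathcal H\supseteq\mathcal K$, we have $\mathbb E^{\mathbb Q}[H^\perp G]=0$. We then rewrite this using the tower property. Because $G$ is $\mathcal F^{\mathrm{chain}}_{k+1}$-measurable,
\[
\mathbb E^{\mathbb Q}[H^\perp G]=\mathbb E^{\mathbb Q}[M_{k+1}G].
\]
Because $\tilde S^i$ is a $(\mathbb Q,\mathbb F^{\mathrm{chain}})$-martingale by Theorem~\ref{thm:emm_chain_combined},
\[
\mathbb E^{\mathbb Q}[M_kG]=\mathbb E^{\mathbb Q}\!\left[M_k\xi\,\mathbb E^{\mathbb Q}[\tilde S^i_{k+1}-\tilde S^i_k\mid\mathcal F^{\mathrm{chain}}_k]\right]=0.
\]
Subtracting yields
\[
\mathbb E^{\mathbb Q}\!\left[\xi\,(M_{k+1}-M_k)(\tilde S^i_{k+1}-\tilde S^i_k)\right]=0
\]
for every bounded $\mathcal F^{\mathrm{chain}}_k$-measurable $\xi$. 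Hence the conditional covariance vanishes.

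The main technical point is integrability. The product $(M_{k+1}-M_k)(\tilde S^i_{k+1}-\tilde S^i_k)$ is in $L^1$ by Cauchy--Schwarz, since both factors are in $L^2(\mathbb Q)$, so its conditional expectation exists. Restricting to bounded $\xi$ keeps $G$ in $L^2$, which is exactly what membership in $\mathcal K$ requires; letting $\xi$ range over indicators of sets in $\mathcal F^{\mathrm{chain}}_k$ then identifies the conditional expectation as zero.

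The final "consequently" statement follows from the same orthogonality. Every element of $\overline{\mathcal K}$ is an $L^2$-limit of sums of such one-period gains, so $H^\perp\perp\mathcal H$. Therefore, for any constant $c$ and any $G\in\overline{\mathcal K}$,
\[
\mathbb E^{\mathbb Q}[(H-c-G)^2]\ge\mathbb E^{\mathbb Q}[(H^\perp)^2].
\]
Thus $M_T=H^\perp$ cannot be reduced in mean square by any self-financing strategy, which is the claimed interpretation.
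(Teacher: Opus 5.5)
Your proposal is correct and follows essentially the same route as the paper. The decomposition comes from linearity together with $\mathcal{F}^{\mathrm{chain}}_k$-measurability of the cash flows already paid. The covariance identity comes from the tower property, the $(\mathbb{Q},\mathbb{F}^{\mathrm{chain}})$-martingale property of $\tilde S^i$ (which removes the $M_k$ term), and orthogonality of $H^\perp$ to the one-step gains $1_A(\tilde S^i_{k+1}-\tilde S^i_k)$. Your unconditional test-variable formulation, the explicit Cauchy--Schwarz integrability checks, and the Pythagorean bound for the final interpretive claim are only modest refinements. One small correction: $H^\perp\perp\mathcal H$ holds directly by the definition of the orthogonal projection, so you do not need the remark about approximating elements of $\overline{\mathcal K}$ by one-period gains.
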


\begin{proof}
Recall $H=\sum_{t=1}^T CF_t/D_t \in L^2(\mathbb Q)$ and the decomposition $H=H^F+H^\perp$.
Since $\sum_{t=1}^k CF_t/D_t$ is $\mathcal F^{\mathrm{chain}}_k$-measurable, we have
\begin{align*}
\tilde V_k
&=
\mathbb{E}^{\mathbb{Q}}
\!\left[
\sum_{t=k+1}^T \frac{CF_t}{D_t}\,\middle|\,
\mathcal{F}^{\mathrm{chain}}_k
\right]
=
\mathbb E^{\mathbb Q}\!\left[H-\sum_{t=1}^k \frac{CF_t}{D_t}\,\middle|\,\mathcal F^{\mathrm{chain}}_k\right] \\
&=
\mathbb E^{\mathbb Q}\!\left[H\,\middle|\,\mathcal F^{\mathrm{chain}}_k\right]
-\sum_{t=1}^k \frac{CF_t}{D_t}
=
\mathbb E^{\mathbb Q}\!\left[H^{F}\,\middle|\,\mathcal F^{\mathrm{chain}}_k\right]
+\mathbb E^{\mathbb Q}\!\left[H^{\perp}\,\middle|\,\mathcal F^{\mathrm{chain}}_k\right]
-\sum_{t=1}^k \frac{CF_t}{D_t} \\
&=
\left(\mathbb E^{\mathbb Q}\!\left[H^{F}\,\middle|\,\mathcal F^{\mathrm{chain}}_k\right]
-\sum_{t=1}^k \frac{CF_t}{D_t}\right)
+
\mathbb E^{\mathbb Q}\!\left[H^{\perp}\,\middle|\,\mathcal F^{\mathrm{chain}}_k\right]
=
\tilde V_k^{F}+M_k.
\end{align*}

We next prove orthogonality. Let $\Delta \tilde S^i_{k+1}=\tilde S^i_{k+1}-\tilde S^i_k$.
Using
\[
M_{k+1}-M_k
=
\mathbb{E}^{\mathbb{Q}}\!\left[H^\perp\mid\mathcal{F}^{\mathrm{chain}}_{k+1}\right]
-
\mathbb{E}^{\mathbb{Q}}\!\left[H^\perp\mid\mathcal{F}^{\mathrm{chain}}_{k}\right],
\]
we compute
\begin{align*}
\mathbb{E}^{\mathbb{Q}}\!\left[(M_{k+1}-M_k)\Delta \tilde S^i_{k+1}\mid\mathcal{F}^{\mathrm{chain}}_k\right]
&=
\mathbb{E}^{\mathbb{Q}}\!\left[
\mathbb{E}^{\mathbb{Q}}\!\left[H^\perp\mid\mathcal{F}^{\mathrm{chain}}_{k+1}\right]\Delta \tilde S^i_{k+1}
\mid\mathcal{F}^{\mathrm{chain}}_k\right]\\
&\quad-
\mathbb{E}^{\mathbb{Q}}\!\left[
\mathbb{E}^{\mathbb{Q}}\!\left[H^\perp\mid\mathcal{F}^{\mathrm{chain}}_{k}\right]\Delta \tilde S^i_{k+1}
\mid\mathcal{F}^{\mathrm{chain}}_k\right].
\end{align*}
For the first term, since $\Delta \tilde S^i_{k+1}$ is $\mathcal{F}^{\mathrm{chain}}_{k+1}$-measurable,
\[
\mathbb{E}^{\mathbb{Q}}\!\left[
\mathbb{E}^{\mathbb{Q}}\!\left[H^\perp\mid\mathcal{F}^{\mathrm{chain}}_{k+1}\right]\Delta \tilde S^i_{k+1}
\mid\mathcal{F}^{\mathrm{chain}}_k\right]
=
\mathbb{E}^{\mathbb{Q}}\!\left[
H^\perp\,\Delta \tilde S^i_{k+1}\mid\mathcal{F}^{\mathrm{chain}}_k\right].
\]
For the second term, $\mathbb{E}^{\mathbb{Q}}[H^\perp\mid\mathcal{F}^{\mathrm{chain}}_{k}]$ is
$\mathcal{F}^{\mathrm{chain}}_k$-measurable, hence it can be pulled out:
\[
\mathbb{E}^{\mathbb{Q}}\!\left[
\mathbb{E}^{\mathbb{Q}}\!\left[H^\perp\mid\mathcal{F}^{\mathrm{chain}}_{k}\right]\Delta \tilde S^i_{k+1}
\mid\mathcal{F}^{\mathrm{chain}}_k\right]
=
\mathbb{E}^{\mathbb{Q}}\!\left[H^\perp\mid\mathcal{F}^{\mathrm{chain}}_{k}\right]
\mathbb{E}^{\mathbb{Q}}\!\left[\Delta \tilde S^i_{k+1}\mid\mathcal{F}^{\mathrm{chain}}_k\right].
\]
Since $\tilde S^i$ is a $\mathbb{Q}$-martingale with respect to the enlarged filtration, we have
$\mathbb{E}^{\mathbb{Q}}[\Delta \tilde S^i_{k+1}\mid\mathcal{F}^{\mathrm{chain}}_k]=0$,
and thus the second term vanishes. Therefore,
\[
\mathbb{E}^{\mathbb{Q}}\!\left[(M_{k+1}-M_k)\Delta \tilde S^i_{k+1}\mid\mathcal{F}^{\mathrm{chain}}_k\right]
=
\mathbb{E}^{\mathbb{Q}}\!\left[
H^\perp\,\Delta \tilde S^i_{k+1}\mid\mathcal{F}^{\mathrm{chain}}_k\right].
\]
By construction,
$H^\perp\perp\mathcal H$ and $\mathcal H$ contains the one-step gains
$1_A\,\Delta\tilde S^i_{k+1}$ for every $A\in\mathcal F^{\mathrm{chain}}_k$.
Hence
\[
\mathbb E^{\mathbb Q}\!\left[H^\perp\,1_A\,\Delta \tilde S^i_{k+1}\right]=0,
\qquad \forall\,A\in\mathcal F^{\mathrm{chain}}_k.
\]
This is equivalent to
\[
\mathbb E^{\mathbb Q}\!\left[H^\perp\,\Delta \tilde S^i_{k+1}\,\middle|\,\mathcal F^{\mathrm{chain}}_k\right]=0,
\]
which completes the proof. \qed
\end{proof}

{\bf Remark:} The decomposition result answers RQ2 by showing that crypto CAT bond values
naturally split into a hedgeable component driven by tradable financial risk
factors and a residual component driven by non-tradable crypto-loss and oracle
execution risk. Under the MMM, only the hedgeable
component is affected by financial trading, while the residual component
remains orthogonal to all admissible trading strategies. This separation
provides a canonical risk decomposition and clarifies which sources of risk can
be mitigated through market instruments and which must be borne by investors.

Theorem~\ref{thm:decomp_contract_design} implies that contract parameters
$(\Delta,\Gamma)$ influence not only the price level via $\mathbb{E}^{\mathbb{Q}}[H]$,
but also the non-hedgeable  risk component $H^\perp$ that investors
must bear. This motivates incorporating risk-sensitive constraints (e.g.,  TVaR
or tail penalties) when optimizing the payout maps under sponsor objectives.

\subsection{Sponsor-optimal contract design}
\label{sec:rq3_tail}
This section formulates sponsor-optimal bond
design with particular emphasis on tail risk. Throughout, we treat the issuance size as exogenous. Specifically, the sponsor
issues a fixed number of notes $N\in\mathbb{N}$ with per-note face value $F>0$,
so that the total collateralized principal is $F^{\mathrm{tot}} = N F$.
Accordingly, the design problem studied here concerns the {shape} of the
on-chain payout schedule (i.e., $(\Delta,\Gamma)$), while the total risk
transfer limit $F^{\mathrm{tot}}$ is taken as given. As in Section~\ref{sec:framework}, a crypto CAT bond design is fully determined
by the on-chain payout maps
\[
\Delta=\{\Delta(A_\xi B_\eta)\}_{\xi=1,\ldots,m_1;\,\eta=1,\ldots,m_2},
\qquad
\Gamma=\{\Gamma(B_\eta)\}_{\eta=1,\ldots,m_2},
\]
which specifies coupon and principal multipliers on the joint trigger grid.  

\paragraph{Sponsor loss and tail objective under $\mathbb{P}$.}
Let $L=Z_T$ denote the sponsor’s aggregate economic loss over the bond horizon,
modeled under the physical measure $\mathbb{P}$. The crypto CAT bond provides
protection through the on-chain settlement logic described in
Section~\ref{sec:framework}: a portion of the collateralized principal is
released from the principal vault to the sponsor’s compensation pool when the
terminal trigger state is adverse. Under the maturity principal rule Eq. \eqref{eq:principle},
the realized principal repayment multiplier is $\Gamma(B_{\eta_T})$, where
$B_{\eta_T}$ is the cumulative-loss bin determined by $Z_T$. Accordingly, we
define the sponsor’s realized protection as
\begin{equation}\label{eq:protection_def}
P_N(\Delta,\Gamma)
=
F^{\mathrm{tot}}\Bigl(1-\Gamma(B_{\eta_T})\Bigr)
=
NF\Bigl(1-\Gamma(B_{\eta_T})\Bigr).
\end{equation}
We then define the sponsor’s net shortfall as
\[
X_N(\Delta,\Gamma)=\bigl(L-P_N(\Delta,\Gamma)\bigr)_{+}.
\]
To focus on extreme outcomes, we adopt a tail-risk objective and measure
sponsor exposure by the TVaR at confidence level $\alpha\in(0,1)$:
\[
J_\alpha(\Delta,\Gamma)
=
\mathrm{TVaR}^{\mathbb{P}}_{\alpha}\!\left(X_N(\Delta,\Gamma)\right).
\]
This choice is natural in catastrophe-risk applications and directly targets
the sponsor’s residual exposure in severe loss scenarios. For a given design $(\Delta,\Gamma)\in\mathcal{D}$ and coupon spread $s$, the
time-$0$ \emph{per-note} market value of the bond under the pricing measure
$\mathbb{Q}$ is
\[
V_0(\Delta,\Gamma;s)
=
\mathbb{E}^{\mathbb{Q}}
\!\left[
\sum_{k=1}^{T}\frac{CF_k(\Delta,\Gamma;s)}{D_k}
\right],
\]
where $CF_k(\Delta,\Gamma;s)$ follows the cash-flow rule in Eqs.
\eqref{eq:cashflow}--\eqref{eq:principle} and depends on $s$ through the coupon
rate $(R_k+s)$. To ensure investor marketability, we impose a par-issuance
condition and calibrate the spread via
\begin{equation}\label{eq:par_tail}
V_0(\Delta,\Gamma;s)=F,
\end{equation}
where $F$ denotes the per-note face value.  

The sponsor chooses an admissible design and corresponding calibrated spread to
minimize tail risk under $\mathbb{P}$ while satisfying market valuation under
$\mathbb{Q}$:
\begin{equation}\label{eq:tail_design_problem}
\min_{(\Delta,\Gamma)\in\mathcal{D},\, s\in\mathbb{R}}
\quad
\mathrm{TVaR}^{\mathbb{P}}_{\alpha}\!\left(X_N(\Delta,\Gamma)\right)
\quad
\text{s.t.}\quad
V_0(\Delta,\Gamma;s)=F.
\end{equation}
This formulation makes explicit the dual-measure nature of crypto CAT bond
design: the pricing measure $\mathbb{Q}$ governs investor participation through
the par constraint, while the physical measure $\mathbb{P}$ governs the
sponsor’s exposure to extreme crypto losses.

 In the following, we prove the existence of sponsor-optimal tail design.

\begin{The}\label{thm:tail_exist}
Fix $N\in\mathbb{N}$ and $F>0$, and let $P_N(\Delta,\Gamma)$ be defined by
Eq. \eqref{eq:protection_def}. Assume $L\in L^1(\mathbb{P})$ and that $\mathcal D$ is compact and satisfies
\(
\inf_{(\Delta,\Gamma)\in\mathcal D} \mathbb E^{\mathbb Q}\!\big[\sum_{k=1}^T  \Delta(A_{\xi_k}B_{\eta_k}) /{D_k}\big]\;\ge\;\underline V \;>\;0.
\)
Then the tail-risk design problem
Eq. \eqref{eq:tail_design_problem} admits at least one optimal solution
$(\Delta^\star,\Gamma^\star,s^\star)$.
\end{The}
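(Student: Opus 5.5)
The plan is to eliminate the spread $s$ by solving the par constraint explicitly, which reduces the problem to minimizing a lower semicontinuous function over the compact set $\mathcal D$; Weierstrass then gives existence.

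\emph{Step 1: solve the par constraint for $s$.} From Eqs. \eqref{eq:cashflow}--\eqref{eq:principle}, the per-note value is affine in $s$:
\[
V_0(\Delta,\Gamma;s)=F\,a(\Delta,\Gamma)+F\,s\,b(\Delta),
\]
where
\[
b(\Delta)=\mathbb E^{\mathbb Q}\!\Big[\sum_{k=1}^T \Delta(A_{\xi_k}B_{\eta_k})/D_k\Big],
\]
\[
a(\Delta,\Gamma)=\mathbb E^{\mathbb Q}\!\Big[\sum_{k=1}^T \Delta(A_{\xi_k}B_{\eta_k})R_k/D_k+\Gamma(B_{\eta_T})/D_T\Big].
\]
Both expectations are finite; I will assume the integrability of $R_k/D_k$ and $1/D_k$ that is implicit in the valuation formula of Theorem~\ref{thm:emm_chain_combined}. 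Because $b(\Delta)\ge\underline V>0$ on $\mathcal D$, the constraint \eqref{eq:par_tail} has the unique solution
\[
s(\Delta,\Gamma)=\frac{1-a(\Delta,\Gamma)}{b(\Delta)}.
\]
Hence every design in $\mathcal D$ is feasible, and the feasible set of \eqref{eq:tail_design_problem} is the graph of $s(\cdot)$ over $\mathcal D$.

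\emph{Step 2: note the objective does not involve $s$.} The quantity $X_N=(L-NF(1-\Gamma(B_{\eta_T})))_+$ depends only on $\Gamma$. So the problem reduces to minimizing $J_\alpha(\Gamma)$ over $\mathcal D$, and any minimizer $(\Delta^\star,\Gamma^\star)$ yields the optimal triple with $s^\star=s(\Delta^\star,\Gamma^\star)$. We view $\mathcal D$ as a compact subset of the finite-dimensional space $\mathbb R^{m_1m_2+m_2}$ of multipliers.

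\emph{Step 3: show $J_\alpha$ is continuous.} For $\Gamma,\Gamma'$, the map $x\mapsto(x)_+$ is $1$-Lipschitz and $\Gamma(B_{\eta_T})$ takes one of finitely many coordinate values. Therefore
\[
|X_N(\Gamma)-X_N(\Gamma')|\le NF\max_\eta|\Gamma(B_\eta)-\Gamma'(B_\eta)|
\]
pointwise. Since $0\le X_N\le L+NF\max|\Gamma|$ and $L\in L^1$, each $X_N$ lies in $L^1(\mathbb P)$, so TVaR is finite. TVaR is monotone and translation-equivariant, hence $1$-Lipschitz with respect to the sup norm: $X\le X'+c$ implies $\mathrm{TVaR}_\alpha(X)\le \mathrm{TVaR}_\alpha(X')+c$. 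It follows that $J_\alpha$ is Lipschitz in $\Gamma$, in particular continuous on $\mathcal D$.

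\emph{Step 4: apply Weierstrass.} A continuous function on the compact set $\mathcal D$ attains its minimum, giving $(\Delta^\star,\Gamma^\star)$. Setting $s^\star=s(\Delta^\star,\Gamma^\star)$ yields an optimal solution, since any feasible $(\Delta,\Gamma,s)$ has objective $J_\alpha(\Gamma)\ge J_\alpha(\Gamma^\star)$.

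The main obstacle is Step 1. Existence of $s^\star$ requires the finiteness of $a(\Delta,\Gamma)$ and $b(\Delta)$, and the lower bound $\underline V>0$ is exactly what prevents division by zero. The step also relies on $s$ being unrestricted in $\mathbb R$, so that no existence issue arises in the $s$ direction. If a further restriction such as $s$ in a compact interval were imposed, continuity of $a$ and $b$ in the multipliers (they are linear, hence continuous) would make the feasible set closed in the compact set $\mathcal D\times[\text{interval}]$, and the same argument would go through.
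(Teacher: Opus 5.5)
Your proof is correct. It shares the paper's skeleton: use the affine dependence of $V_0$ on $s$ to solve the par constraint explicitly, prove continuity of $J_\alpha$, and invoke compactness. (Your $s=(1-a)/b$ is the paper's $s=(F-V_0(\Delta,\Gamma;0))/V(\Delta)$ after dividing by $F$.) The two technical steps are handled differently.

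\textbf{Continuity.} The paper takes a convergent sequence of designs and gets a.s.\ convergence of $X_N$. It upgrades this to $L^1(\mathbb P)$ convergence by dominated convergence, then uses the dual representation of TVaR to bound $|J_\alpha(\Delta_n,\Gamma_n)-J_\alpha(\Delta,\Gamma)|$ by $(1-\alpha)^{-1}\mathbb E^{\mathbb P}|X_N(\Delta_n,\Gamma_n)-X_N(\Delta,\Gamma)|$. You instead combine a deterministic pointwise bound with monotonicity and cash-invariance of TVaR. This gives the explicit global estimate $|J_\alpha(\Gamma)-J_\alpha(\Gamma')|\le NF\max_\eta|\Gamma(B_\eta)-\Gamma'(B_\eta)|$ with no limit theorem. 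Your domination $X_N\le L+NF\max|\Gamma|$, using $L=Z_T\ge 0$, also does not need the multipliers to lie in $[0,1]$, which the paper's $0\le P_N\le NF$ tacitly assumes.

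\textbf{The constraint.} The paper proves that $(\Delta,\Gamma)\mapsto s(\Delta,\Gamma)$ is continuous, which requires continuity of $V_0(\cdot;0)$ and $V(\cdot)$ via dominated convergence. It uses this to show the feasible set $\mathcal F\subset\mathcal D\times\mathbb R$ is compact and then minimizes over $\mathcal F$. You observe that the objective depends on neither $s$ nor $\Delta$, and that every design is feasible. So you minimize over $\mathcal D$ alone and attach $s^\star$ afterwards.

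\textbf{What each route buys.} Yours is shorter and shows that the stated theorem only uses finiteness of $a$ and $b$ and positivity of $b(\Delta)$ at each design. The uniform bound $\underline V$ and continuity of the pricing functionals play no real role. The paper's compactness of $\mathcal F$ is the form that would survive if the sponsor's objective were changed to depend on the calibrated spread, for example by adding expected coupon cost. Even then, your remark that $a$ and $b$ are linear in the finitely many multipliers would supply the needed continuity of $s(\cdot)$ more directly than dominated convergence does.
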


\begin{proof}
Fix $N\in\mathbb{N}$ and $F>0$. Since $0\le P_N(\Delta,\Gamma)\le NF$, we have
\[
0\le X_N(\Delta,\Gamma)\le L+NF,
\qquad (\Delta,\Gamma)\in\mathcal D,
\]
and thus $X_N(\Delta,\Gamma)\in L^1(\mathbb P)$ for all $(\Delta,\Gamma)\in\mathcal D$ because $L\in L^1(\mathbb P)$.
Consequently, $J_\alpha(\Delta,\Gamma)=\mathrm{TVaR}^{\mathbb P}_\alpha(X_N(\Delta,\Gamma))$ is finite on $\mathcal D$.

We next show that $(\Delta,\Gamma)\mapsto J_\alpha(\Delta,\Gamma)$ is continuous on $\mathcal D$.
Let $(\Delta_n,\Gamma_n)\to(\Delta,\Gamma)$ in $\mathcal D$. By the definition of $P_N(\Delta,\Gamma)$
 and the boundedness of multipliers on $\mathcal D$, we have
$P_N(\Delta_n,\Gamma_n)\to P_N(\Delta,\Gamma)$ almost surely, hence
$X_N(\Delta_n,\Gamma_n)\to X_N(\Delta,\Gamma)$ almost surely. Moreover,
\[
|X_N(\Delta_n,\Gamma_n)-X_N(\Delta,\Gamma)| \le X_N(\Delta_n,\Gamma_n)+X_N(\Delta,\Gamma)\le 2(L+NF)\in L^1(\mathbb P),
\]
so dominated convergence yields $X_N(\Delta_n,\Gamma_n)\to X_N(\Delta,\Gamma)$ in $L^1(\mathbb P)$ \citep{billingsley2013convergence}.

Using the dual representation of $\mathrm{TVaR}$, for any $X\in L^1(\mathbb P)$,
\[
\mathrm{TVaR}^{\mathbb P}_\alpha(X)
=
\sup_{Z\in\mathcal Z_\alpha}\mathbb E^{\mathbb P}[Z\,X],
\qquad
\mathcal Z_\alpha=\Big\{Z\in L^\infty:\ 0\le Z\le \tfrac{1}{1-\alpha}\ \text{a.s.},\ \mathbb E^{\mathbb P}[Z]=1\Big\}.
\]
Therefore, for any $Z\in\mathcal Z_\alpha$,
\[
\Big|\mathbb E^{\mathbb P}[Z\,X_N(\Delta_n,\Gamma_n)]-\mathbb E^{\mathbb P}[Z\,X_N(\Delta,\Gamma)]\Big|
\le \frac{1}{1-\alpha}\,\mathbb E^{\mathbb P}\!\big|X_N(\Delta_n,\Gamma_n)-X_N(\Delta,\Gamma)\big|.
\]
Taking the supremum over $Z\in\mathcal Z_\alpha$ yields
\[
|J_\alpha(\Delta_n,\Gamma_n)-J_\alpha(\Delta,\Gamma)|
\le \frac{1}{1-\alpha}\,\mathbb E^{\mathbb P}\!\big|X_N(\Delta_n,\Gamma_n)-X_N(\Delta,\Gamma)\big|
\to 0,
\]
so $J_\alpha$ is continuous on $\mathcal D$.

Under the cashflow rule Eqs. \eqref{eq:cashflow}--\eqref{eq:principle}, the spread $s$ enters linearly through the coupon
rate $(R_k+s)$, hence for each fixed $(\Delta,\Gamma)\in\mathcal D$ the time-$0$ value is affine in $s$:
\[
V_0(\Delta,\Gamma;s)=V_0(\Delta,\Gamma;0)+s\,V(\Delta),
\]
where
\[
V(\Delta)=
\mathbb{E}^{\mathbb{Q}}
\!\left[
\sum_{k=1}^{T}\frac{F\,\Delta(A_{\xi_k}B_{\eta_k})}{D_k}
\right].
\]
Under the integrability condition $\mathbb{E}^{\mathbb{Q}}[\sum_{k=1}^{T}(R_k+1)/D_k]<\infty$ and boundedness of the
payout multipliers on $\mathcal D$, dominated convergence implies that $V_0(\Delta,\Gamma;0)$ and $V(\Delta)$ are finite
and continuous in $(\Delta,\Gamma)$. Therefore, for each $(\Delta,\Gamma)\in\mathcal D$ the par condition $V_0(\Delta,\Gamma;s)=F$ admits a
unique solution
\begin{equation}\label{eq:spead}
s(\Delta,\Gamma)=\frac{F-V_0(\Delta,\Gamma;0)}{V(\Delta)}.
\end{equation}
By continuity of $V_0(\Delta,\Gamma;0)$ and $V(\Delta)$ and the uniform lower bound on $V(\Delta)$, the mapping
$(\Delta,\Gamma)\mapsto s(\Delta,\Gamma)$ is continuous. Since $\mathcal D$ is compact, $s(\Delta,\Gamma)$ is bounded on
$\mathcal D$, and thus the feasible set
\[
\mathcal{F}
=
\{(\Delta,\Gamma,s)\in\mathcal{D}\times\mathbb{R}:\ V_0(\Delta,\Gamma;s)=F\}
\]
is compact  \citep{rudin1987real}. Define $\widetilde{J}_\alpha(\Delta,\Gamma,s)=J_\alpha(\Delta,\Gamma)$ on $\mathcal F$. Since $\widetilde{J}_\alpha$ is
continuous  on the compact set $\mathcal F$, it attains its minimum. Hence there exists
$(\Delta^\star,\Gamma^\star,s^\star)\in\mathcal F$ solving Eq. \eqref{eq:tail_design_problem}. \qed
\end{proof}

{\bf Remark.} The existence result formalizes sponsor-side design under dual
measures: within the class of smart-contract-feasible grid-based designs
$(\Delta,\Gamma)\in\mathcal{D}$ and fixed issuance size $N$, the sponsor
minimizes tail shortfall under the physical measure $\mathbb{P}$, while the
market-clearing spread is determined endogenously by the par constraint under
the pricing measure $\mathbb{Q}$. In particular, the result ensures that an
optimal payout schedule exists even though valuation and risk objectives are
evaluated under distinct measures.

In the empirical study, we restrict attention to a finite family of
smart-contract-feasible designs
\[
\mathcal{D}_N=\{(\Delta^{(j)},\Gamma^{(j)}): j=1,\ldots,N\}\subset\mathcal{D},
\]
constructed by varying trigger thresholds and payout maps. For each candidate
design $(\Delta,\Gamma)\in\mathcal{D}_N$, the coupon spread $s(\Delta,\Gamma)$ is
calibrated by the par condition
\(
V_0(\Delta,\Gamma;s)=F,
\)
yielding a market-clearing contract under the pricing measure $\mathbb{Q}$.
}
\section{Statistical Calibration of Loss and Discounting Dynamics} \label{sec:statistics1}
While Sections~\ref{sec:framework}--\ref{sec:na_chain} develop the pricing and
design theory for crypto CAT bonds, the practical performance of any proposed
contract depends on how accurately the underlying crypto-loss and financial
risk drivers are characterized. This section aims to answer RQ4 by introducing a data-driven statistical model that captures
the salient empirical properties of both components: heavy-tailed
crypto losses, and time-varying market conditions. We begin with exploratory evidence
to motivate the modeling choices, then estimate the marginal dynamics and
dependence structure used to simulate joint paths and compute (i) market values
under $\mathbb{Q}$ and (ii) sponsor tail exposure under $\mathbb{P}$ in the
subsequent section.

\subsection{Exploratory Data Analysis}\label{sec:eda}
\subsubsection{Crypto Risks}
 One of the most comprehensive sources of DeFi incident data is the REKT database, which documents security events with details such as the occurrence date, affected blockchain, vulnerability type, and loss amount. The dataset analyzed in this study covers the period from January 2020 to December 2023, providing a representative view of the evolving threat landscape in the DeFi ecosystem. Data from 2024 is reserved for assessment and replay analysis.  The data is categorized into three groups based on the underlying blockchain: ETH,  BSC, and Other.  This categorization serves two purposes. First, it reflects the dominant ecosystems within the DeFi space:  ETH  and  BSC  have consistently hosted the largest and most active DeFi platforms, while the Other category aggregates incidents from smaller or emerging blockchains. Second, grouping incidents by blockchain type allows us to capture different risk profiles, smart contract practices, and ecosystem maturity levels, critical factors for designing catastrophe bond structures tailored to specific risk pools.  The ETH series exhibits gaps between January and July 2020; we therefore restrict it to August 2020–December 2023, yielding 41 months. The BSC series is continuous from October 2020 to December 2023, yielding 39 months of data. The Other series runs from June 2021 to December 2023, comprising 31 months. 

\begin{table}[htb!]
    \centering
    \caption{Summary statistics of monthly maximum losses and {monthly aggregate loss} of crypto incidents from the REKT database, where `SD' denotes standard deviation, and `Log' represents log scaled loss.}
    \label{tab:monthly_maximum_loss_long}
    \resizebox{\textwidth}{!}{
    \begin{tabular}{llccccccc}
        \toprule
        Category & Scale & Min  & Median & Mean & Max  & SD & Count & Period \\
        \midrule
        \multicolumn{9}{c}{{\color{blue}Monthly Maximum Loss}}\\
        \midrule
        \multirow{2}{*}{ETH}
        & Original & $4.94\times10^5$ & $2.74\times10^7$ & $7.21\times10^7$ & $6.02\times10^8$ & $1.08\times10^8$ 
        & \multirow{2}{*}{41} & \multirow{2}{*}{8/2020--12/2023} \\
        & Log      & 13.11 & 17.13 & 16.84 & 20.22 & 1.94
        & & \\
        \midrule
        \multirow{2}{*}{BSC}
        & Original & $1.11\times10^5$ & $3.68\times10^6$ & $6.50\times10^7$ & $1.30\times10^9$ & $2.24\times10^8$ 
        & \multirow{2}{*}{39} & \multirow{2}{*}{10/2020--12/2023} \\
        & Log       & 11.62 & 15.12 & 15.62 & 20.99 & 2.16 
        & & \\
        \midrule
        \multirow{2}{*}{Other}
        & Original & $2.64\times10^5$ & $3.60\times10^7$ & $4.10\times10^8$ & $3.60\times10^9$ & $8.45\times10^8$ 
        & \multirow{2}{*}{31} & \multirow{2}{*}{6/2021--12/2023} \\
        & Log     & 12.48 & 17.40 & 17.59 & 22.00 & 2.50 
        & & \\
           \midrule
        \multicolumn{9}{c}{{Monthly Aggregate Loss}}\\
        \midrule
         \multirow{2}{*}{ETH}
        & Original & $1.12\times10^6$ & $5.19\times10^7$ & $1.05\times10^8$ & $6.96\times10^8$ & $1.44\times10^8$ 
        & \multirow{2}{*}{41} & \multirow{2}{*}{8/2020--12/2023} \\
        & Log       & 13.93 & 17.76 & 17.43 & 20.36 & 1.71 
        & & \\
        \midrule
        \multirow{2}{*}{BSC}
        & Original & $1.11\times10^5$ & $1.01\times10^7$ & $7.59\times10^7$ & $1.40\times10^9$ & $2.39\times10^8$ 
        & \multirow{2}{*}{39} & \multirow{2}{*}{10/2020--12/2023} \\
        & Log       & 11.62 & 16.12 & 16.23 & 21.06 & 1.99 
        & & \\
        \midrule
        \multirow{2}{*}{Other}
        & Original & $3.06\times10^5$ & $5.54\times10^7$ & $5.14\times10^8$ & $4.86\times10^9$ & $1.10\times10^9$ 
        & \multirow{2}{*}{31} & \multirow{2}{*}{6/2021--12/2023} \\
        & Log     & 12.63 & 17.83 & 17.89 & 22.30 & 2.45
        & & \\
        \bottomrule
    \end{tabular}
    }
\end{table}

Based on the summary statistics presented in Table~\ref{tab:monthly_maximum_loss_long}, the distributions of monthly maximum losses across all three categories exhibit heavy right-skewness, indicative of the presence of extreme loss events. For the ETH category, the mean monthly maximum loss is $7.21\times10^7$, which is substantially higher than the median value of $2.74\times10^7$. The maximum observed loss of $6.02\times10^8$ and the standard deviation of $1.08\times10^8$ further suggest a heavy-tailed distribution. Similarly, the BSC category shows a mean monthly maximum loss of $6.50\times10^7$, which, although the lowest among the three categories, still  exceeds the median of $3.68\times10^6$. The maximum loss value of $1.30\times10^9$ and a high standard deviation of $2.24\times10^8$ confirm the presence of substantial outliers within the BSC loss data. For the Other category, the mean monthly maximum loss is $4.10\times10^8$, while the median stands at $3.60\times10^7$. The maximum recorded loss of $3.60\times10^9$ and a standard deviation of $8.45\times10^8$ further indicate significant dispersion and the influence of extreme values. For the monthly aggregate loss,  the mean of ETH is $1.05\times10^8$, which is substantially higher than the median value of $5.19\times10^7$. The maximum recorded loss reaches $6.96\times10^8$, with a standard deviation of $1.44\times10^8$.  In the BSC category, the mean is $7.59\times10^7$, the lowest among the three categories, yet still significantly exceeds the median of $1.01\times10^7$. The maximum loss observed is $1.40\times10^9$, and the standard deviation is $2.39\times10^8$.
For the Other category, the mean monthly aggregate loss amounts to $5.14\times10^8$, the highest across all categories, and also exceeds its corresponding median of $5.54\times10^7$. The maximum recorded loss is as high as $4.86\times10^9$, with a standard deviation of $1.10\times10^9$, indicating considerable dispersion and heavy-tailed characteristics.

To address the  skewness and stabilize the variability of the loss distributions, a log  transformation was applied to the loss data. It is observed from Table~\ref{tab:monthly_maximum_loss_long} that for both the maximum monthly loss and the aggregated monthly loss, the log transformed losses exhibit substantially reduced dispersion compared to the original scale.  Specifically, the standard deviations decrease significantly across all categories after the transformation. Moreover, the differences between the mean and median values are significantly narrower in the transformed losses, indicating a considerable reduction in the right skewness.



\begin{figure}[htb!]
    \centering
    \subfigure[ETH]{
        \includegraphics[width=.3\textwidth]{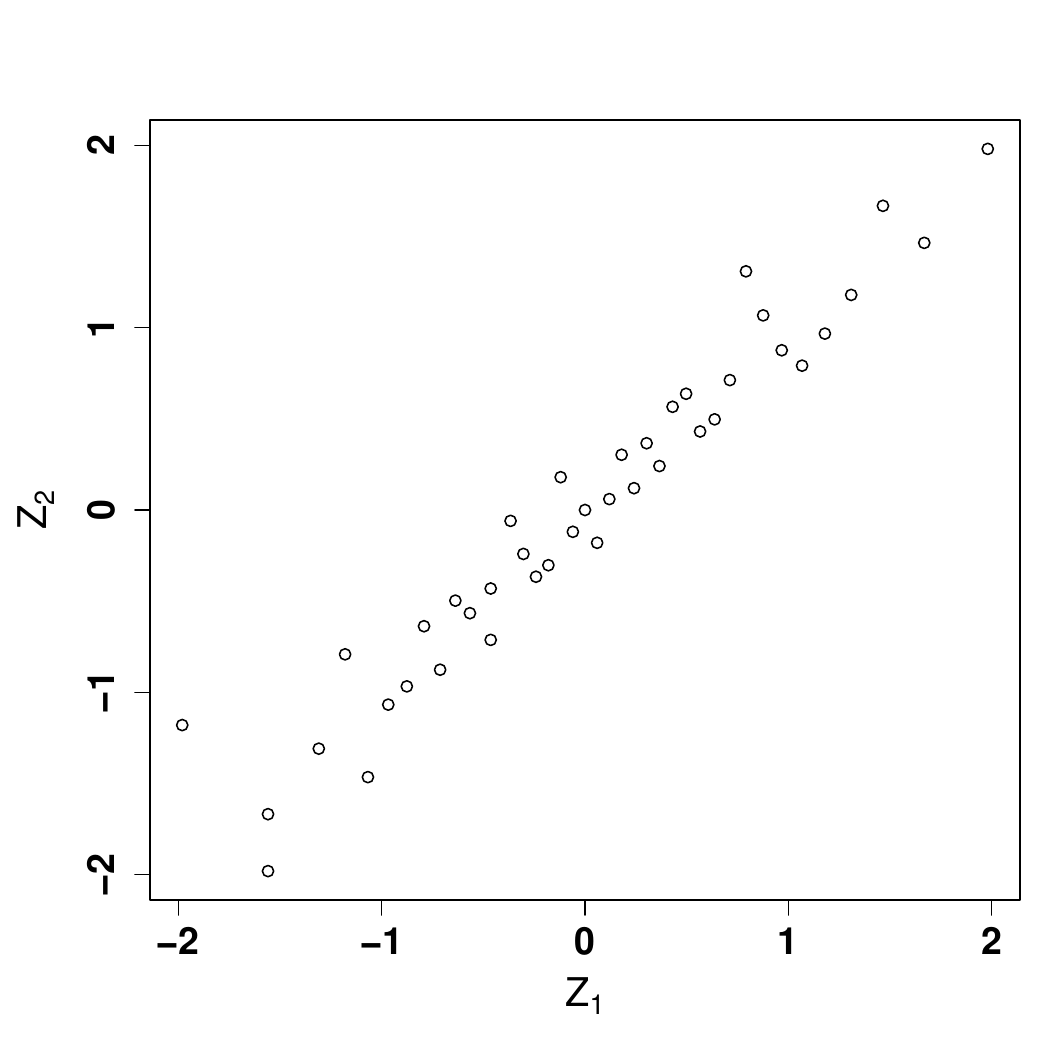}
        \label{fig:eth_normal_score}
    }
    \subfigure[BSC]{
        \includegraphics[width=.3\textwidth]{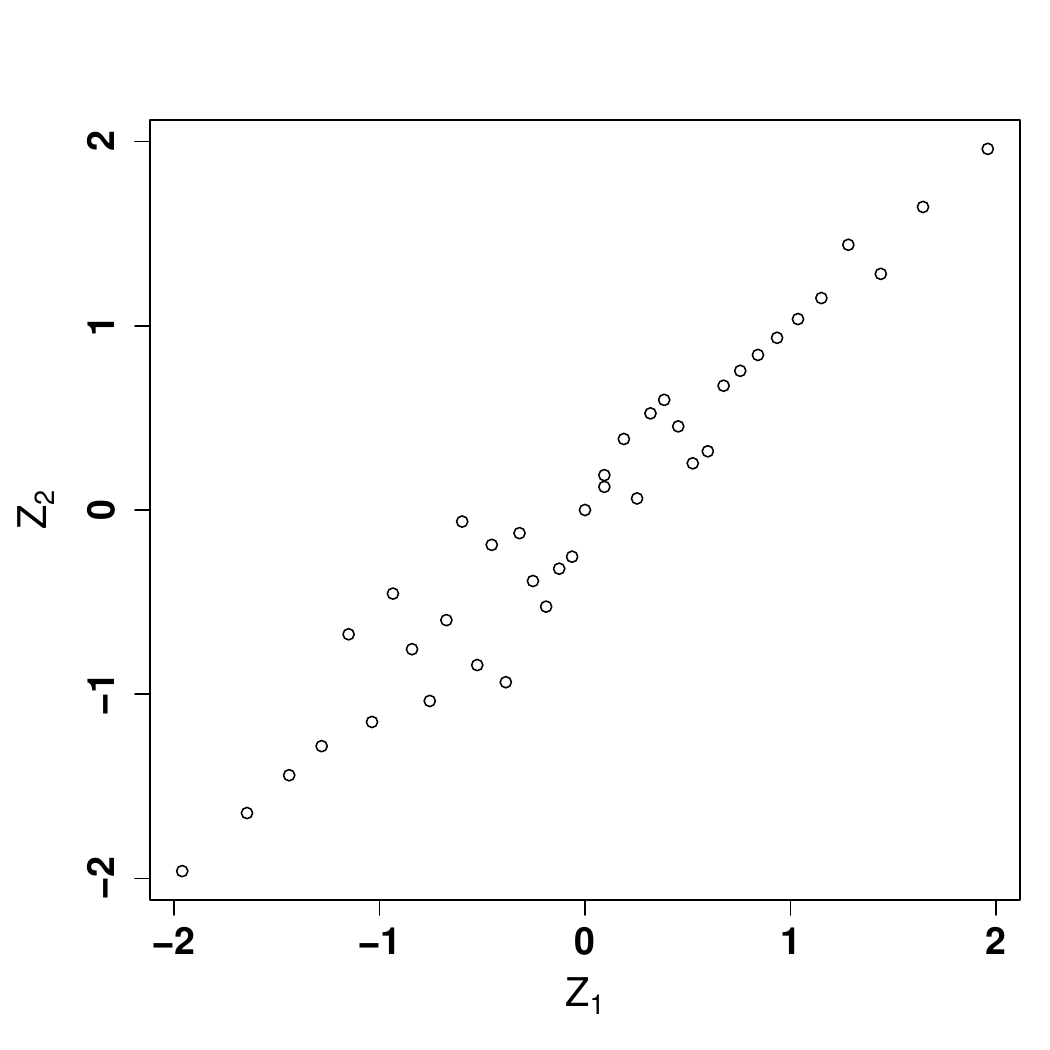}
        \label{fig:bsc_normal_score}
    }
    \subfigure[Other]{
        \includegraphics[width=.3\textwidth]{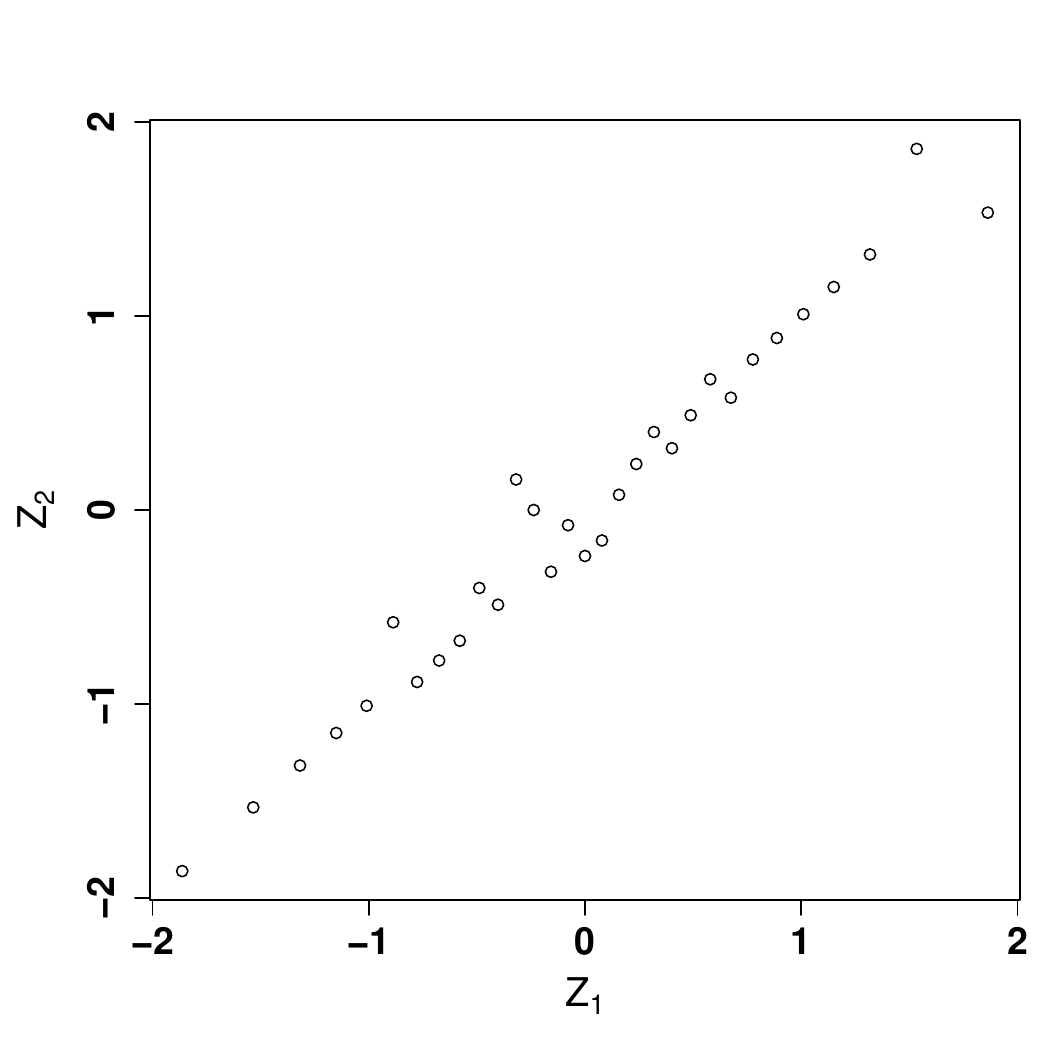}
        \label{fig:other_normal_score}
    }
    \caption{Normal score plots for all categories.\label{fig:normalscore}}
\end{figure}

To examine the dependence structure between monthly maximum losses and monthly aggregate losses, we present the normal score plots in Figure~\ref{fig:normalscore} \citep{joe2014dependence}. 
It is seen that all three categories exhibit strong positive dependence, as the points in each plot align closely along the diagonal line. In the ETH category Figure~\ref{fig:eth_normal_score}, the dependence appears uniform across the distribution, with no visible sign of tail dependence. The estimated Kendall’s \(\tau\) is 0.88, indicating strong association. For BSC and Other categories in Figures~\ref{fig:bsc_normal_score} and~\ref{fig:other_normal_score}, the dependence remains consistently strong, and Kendall’s \(\tau\) values are 0.87 and 0.92, respectively.   These findings suggest that the dependence between monthly maximum losses and monthly aggregate losses is substantial and should be explicitly accounted for in modeling across all categories.

\subsubsection{Financial Risks}
In our bond evaluation, we incorporate financial market risks from three sources: 
inflation,\footnote{\url{https://www.usInflationcalculator.com/Inflation/current-Inflation-rates/}} 
Treasury,\footnote{\url{https://fred.stlouisfed.org/series/DGS1}} 
and SOFR.\footnote{\url{https://fred.stlouisfed.org/series/SOFR}} 
The Treasury and inflation series span from January 2000 to December 2023, while SOFR is available from April 2018 through December 2023. 
All series are converted to monthly rates to ensure consistent temporal granularity and to facilitate subsequent joint modeling and analysis.  

Figure~\ref{fig:financial_ts_plot} displays the time series of these three risk factors.  

\begin{figure}[htb!]
    \centering
    \includegraphics[width=0.4\linewidth]{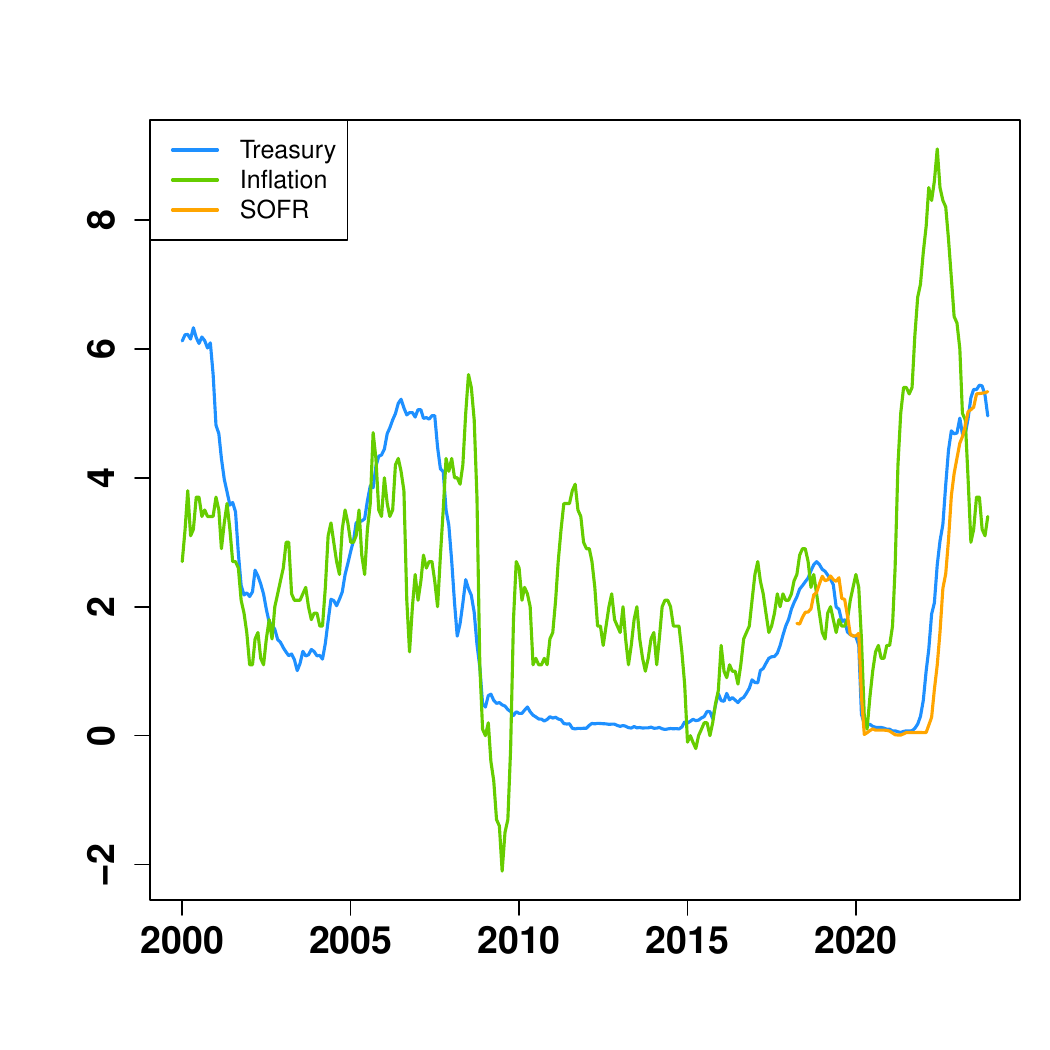}
    \caption{Time series of monthly financial risk factors: Treasury, inflation, and SOFR.}
    \label{fig:financial_ts_plot}
\end{figure}

All three financial series exhibit substantial volatility over time,   particularly around the 2008 financial crisis and the post-2020 inflation spike. This motivates the use of GARCH-type models to capture their time-varying conditional variances \citep{tsay2010analysis}; see Section 1.1 of the Supplementary Material.
Moreover, the plot reveals evidence of potential dependence among the series. For example, in the post-2020 period, all three rates rise sharply in response to macroeconomic conditions and monetary policy tightening.  These observations indicate that financial risks do not evolve independently but are driven by shared economic forces. From the perspective of crypto CAT bond pricing, this dependence is critical: bond cash flows are discounted jointly by these financial variables, so ignoring their correlation would misrepresent both expected values and tail behavior. To address this, we employ a multivariate time-series framework, with dependence captured through a copula construction (see Section 1.2 of the Supplementary Material), ensuring that correlated fluctuations in Treasury, inflation, and SOFR are coherently integrated into the distribution of bond prices.

\subsection{Estimation Procedure}\label{sec:estimation}
In this section, we present the estimation procedure.  As seen in Section \ref{sec:eda},  the monthly maximum loss \( Y_k \) and the monthly aggregate loss \( X_k \) are dependent. Therefore, the dependence between \( Y_k \)  and  \( Z_k \) should be accommodated. However, \( Y_k \) reflects localized extreme events and tends to exhibit heavy tails, while \( Z_k \) is an accumulated sum of losses and tends to exhibit smoother, more aggregated growth patterns. Therefore,  directly modeling \( (Y_k, Z_k) \) as a joint distribution would be overly restrictive and may fail to capture the distinct marginal features of each component. To address this, we instead focus on modeling the dependence between \(Y_k \) and $X_k$, which are observed on the same time scale and maintain a more interpretable relationship. This approach allows for greater modeling flexibility and facilitates the use of copula-based methods to capture complex dependencies between extreme and aggregate monthly losses.

To capture the dependence structure between the monthly maximum loss \( Y_k \) and the monthly aggregate loss \( X_k \), we adopt a \emph{copula} modeling approach \citep{joe2014dependence,shi2018pair,li2025ebicop}, which provides the flexibility to represent complex, potentially nonlinear dependencies beyond linear correlation. That is
\begin{equation}\label{eq:c1}
    P(Y_k\le y_k,X_k\le x_k)=C_1(F_k(y_k),G_k(x_k)),
\end{equation}
where \( C_1(\cdot) \) denotes the copula function capturing the dependence structure,  and \( F_k(y_k) \) and \( G_k(x_k) \) are the corresponding marginal distribution functions. The joint density function of \( Y_k \) and \( X_k \) is expressed as:
\[
f(y_k, x_k) = c_1(F_k(y_k), G_k(x_k)) \, f_k(y_k) \, g_k(x_k),
\]
where \( c_1(\cdot) \) denotes the copula density associated with $C_1(\cdot)$, and \( f_k(y_k) \) and \( g_k(x_k) \) are the marginal density functions of \( Y_k \) and \( X_k \), respectively. Given a sample of observed pairs \( \{(y_k, x_k)\}_{k=1}^T \), the log-likelihood function of the joint model is given by 
\begin{equation}\label{eq:crypo-ll}
  \ell(\bm{\theta}_1) 
= \sum_{k=1}^T  \log c_1\left( F_k(y_k), G_k(x_k)\right) + \sum_{k=1}^T \left[\log f_k(y_k) + \log g_k(x_k) \right],  
\end{equation}
where \( \bm{\theta}_1  \) represents the full parameter vector.

For the financial risk time series Treasury--\(\mathrm{TR}_k\), SOFR--\(\mathrm{R}_k\), and inflation rate--\(\mathrm{IF}_k\), as previously discussed, the data exhibit both high volatility and potential cross-series dependence. To model these characteristics, we adopt a two-stage approach:
\begin{itemize}
    \item We fit univariate ARIMA-GARCH models to each of the three series to capture autocorrelation, nonstationarity, and volatility clustering. The standardized residuals from these models are then transformed via their estimated distribution functions to yield uniform variables on the unit interval \([0,1]\).
    
    \item The uniform values are used as inputs to a multivariate copula model, which captures the residual dependence structure among the three series.
\end{itemize}
This two-stage framework separates marginal dynamics from cross-series dependence, allowing each component’s distinct time series behavior to be modeled while capturing the cross-series dependence \citep{nikoloulopoulos2012vine, peng2018modeling}. Let \({\bf W}_k = (W_{1,k}, W_{2,k}, W_{3,k})\) denote the vector of standardized residuals obtained from the fitted ARIMA-GARCH models for   \(\mathrm{TR}_k\), \(\mathrm{R}_k\), and    \(\mathrm{IF}_k\), respectively. We assume that \({\bf W}_k\) follows the joint distribution specified by a copula model:
\begin{equation}\label{copula-model}
\mathbb{P}({\bf W}_k \le {\bf w}_k) = C_2\left(\Psi_1(\text{w}_{1,k}), \Psi_2(\text{w}_{2,k}), \Psi_3(\text{w}_{3,k})\right),
\end{equation}
where \(\Psi_j\) is the marginal distribution function of \(W_{j,k}\) for \(j = 1,2,3\), and \(C_2\) is a trivariate copula function that captures the dependence structure among the standardized residuals. The corresponding joint log-likelihood function can be expressed as:
\begin{align*}
 \ell(\bm{\theta}_2)  = \sum_{k=1}^n \Big[ 
&\log c_2\left(\Psi_1(\text{w}_{1,k}), \Psi_2(\text{w}_{2,k}), \Psi_3(\text{w}_{3,k})\right) 
- \log(\sigma_{1,k}) - \log(\sigma_{2,k}) - \log(\sigma_{3,k}) \\
&+ \log \psi_1(\text{w}_{1,k}) + \log \psi_2(\text{w}_{2,k}) + \log \psi_3(\text{w}_{3,k}) 
\Big],
\end{align*}
where $\bm{\theta}_2$ is the parameter vector; \(c_2(\cdot)\) is the joint copula density associated with \(C_2(\cdot)\); \(\sigma_{j,t}\) is the conditional standard deviation of the \(j\)-th series at time \(k\) (see Eq. (2) of  the Supplementary Material); and {$\psi_ j(\cdot)$ denotes the marginal density function of $W_{j,k}$}, for \(j = 1, 2, 3\). 

We adopt the Inference Functions for Margins (IFM) method \citep{Joe1997} to estimate the parameters \((\bm{\theta}_1, \bm{\theta}_2)\). The IFM method provides a two-step semiparametric estimation procedure: in the first step, the parameters of the marginal models are estimated independently, and in the second step, the copula parameters governing the joint dependence structure are estimated conditional on the fitted marginals. The use of IFM is justified on both theoretical and practical grounds. 
Theoretically, it is consistent and asymptotically normal under standard conditions 
\citep{Joe1997,joe2014dependence}, and attains the same first-order efficiency as full maximum likelihood 
while being far less computationally demanding. Practically, IFM aligns with our model design: univariate time-series dynamics are handled at the marginal level, while copulas capture cross-series dependence. 
A full joint likelihood across all financial and crypto components would be computationally burdensome, whereas IFM avoids this complexity by decoupling marginal and dependence estimation. Moreover, 
in simulation-based bond pricing, where joint trigger probabilities drive cash flows, the ability to separately estimate and simulate from marginals and copulas is highly desirable. 

\subsection{Statistical Modeling of Crypto and Financial Risks}\label{sec:statistics}
This section presents the statistical framework for modeling crypto-related losses and financial risk factors. We specify marginal distributions, dependence structures, and time-series models that form the basis for bond pricing and simulation in subsequent sections.

\subsubsection{Modeling of Monthly Maximum and Aggregate Losses}

In this section, we describe the modeling framework applied to the log-transformed monthly maximum losses ($Y_k$) and monthly aggregate losses ($X_k$). According to classical extreme value theory  \citep{haan2006extreme}, the distribution of block maxima converges to the generalized extreme value (GEV) distribution:
\[
G(x) = \exp\left\{ -\left[1 + \xi \frac{x - \mu}{\sigma} \right]^{-1/\xi}_+ \right\},
\]
where $a_+ = \max\{a, 0\}$, $\mu \in \mathbb{R}$ is the location parameter, $\sigma > 0$ is the scale parameter, and $\xi \in \mathbb{R}$ is the shape parameter. To account for temporal variation in the loss process, the location and scale parameters may be modeled either as constants or as functions of time. We consider the following four modeling scenarios:

\begin{itemize}
    \item[$M_1$:] Both location and scale parameters are time-invariant:
    \(
    \mu(t) = \mu_0, \quad \sigma(t) = \sigma_0.
    \)

    \item[$M_2$:] The location parameter depends on time, while the scale parameter is constant:
    \(
    \mu(t) = \beta_0 + \beta_1 \log(t), \quad \sigma(t) = \sigma_0.
    \)

    \item[$M_3$:] The scale parameter varies linearly with time, while the location parameter is constant:
    \(
    \mu(t) = \beta_0, \quad \sigma(t) = \sigma_0 + \sigma_1 t.
    \)

    \item[$M_4$:] Both location and scale parameters are time-dependent:
    \(
    \mu(t) = \beta_0 + \beta_1 \log(t), \quad \sigma(t) = \sigma_0 + \sigma_1 t.
    \)
\end{itemize}


We focus on the GEV distribution for modeling both monthly maxima ($Y_k$) and monthly aggregate losses ($X_k$), given its flexibility and ability to capture extreme behavior.\footnote{To assess robustness, we also compare  the GEV fit with classical alternatives on the transformed data, including Normal, Gamma, and Weibull, with full results deferred to Section 2 of the Supplementary Material.} 




\begin{table}[htb!]
\centering
\caption{Comparison of transformed monthly  {maximum} loss modeling results for ETH, BSC, and Other categories. `Est.' denotes the parameter estimate; `SE' is the standard error. Stars indicate significance: *** $<0.01$; ** $<0.05$; * $<0.1$.}
\label{tab:combined_max_loss_models}
\resizebox{\textwidth}{!}{
\begin{tabular}{l|l|rr|rr|rr|rr}
\toprule
Category & Parameter & M1-Est. & M1-SE & M2-Est. & M2-SE & M3-Est. & M3-SE & M4-Est. & M4-SE \\
\midrule
\multirow{6}{*}{ETH} 
& $\beta_0$     & 16.3626*** & 0.3477 & 15.3871*** & 1.0763 & 15.3069*** & 1.1789 & 16.3833*** & 0.3706 \\
& $\beta_1$     & ---        & ---    & 0.3447     & 0.3624 & 0.3777     & 0.3929 & ---         & ---    \\
& $\sigma_0$  & 2.0573***  & 0.2676 & 2.0170***  & 0.2558 & 0.7996**   & 0.3338 & 0.7645**    & 0.2850 \\
& $\sigma_1$  & ---        & ---    & ---        & ---    & -0.0039    & 0.0120 & -0.0017     & 0.0100 \\
& $\xi$       & -0.4958*** & 0.0958 & -0.4729*** & 0.0897 & -0.5020*** & 0.1304 & -0.5113     & 0.1343 \\
& AIC         & {171.4153}   & ---    & 172.5443   & ---    & 174.4387   & ---    & 173.3865    & ---    \\
\midrule
\multirow{6}{*}{BSC} 
& $\beta_0$     & 14.7757*** & 0.3540 & 13.2818*** & 0.8105 & 14.3265*** & 1.6963 & 14.6074*** & 0.3303 \\
& $\beta_1$     & ---        & ---    & 0.5142*    & 0.2631 & 0.0998     & 0.5977 & ---         & ---    \\
& $\sigma_0$  & 1.9749***  & 0.2511 & 1.7697***  & 0.2568 & 0.9258**   & 0.4224 & 0.9823***   & 0.2395 \\
& $\sigma_1$  & ---        & ---    & ---        & ---    & -0.0171    & 0.0196 & -0.0198*    & 0.0105 \\
& $\xi$       & -0.1768    & 0.1135 & -0.0629    & 0.1670 & -0.1016    & 0.1379 & -0.1038     & 0.1337 \\
& AIC         & 174.4558   & ---    & 173.4231   & ---    & 174.7475   & ---    & {172.7748}   & ---    \\
\midrule
\multirow{6}{*}{Other} 
& $\beta_0$     & 16.8266*** & 0.5178 & 21.7279*** & 0.8931 & 21.1567*** & 0.8181 & 16.6845*** & 0.6534 \\
& $\beta_1$     & ---        & ---    & -2.0008*** & 0.3555 & -1.7521*** & 0.3583 & ---         & ---    \\
& $\sigma_0$  & 2.5261***  & 0.3943 & 1.6398***  & 0.2615 & 0.0807     & 0.3449 & 1.0612***   & 0.3460 \\
& $\sigma_1$  & ---        & ---    & ---        & ---    & 0.0225     & 0.0175 & -0.0063     & 0.0152 \\
& $\xi$       & -0.3728**  & 0.1634 & -0.0401    & 0.1785 & 0.0193     & 0.2020 & -0.4140**   & 0.1689 \\
& AIC         & 148.7564   & ---    & {135.4969}   & ---    & 135.9914   & ---    & 150.6172    & ---    \\
\bottomrule
\end{tabular}
}
\end{table}

Table~\ref{tab:combined_max_loss_models} presents the detailed estimation results and AIC values for the four GEV-based models fitted to the log-transformed monthly maximum losses across the ETH, BSC, and Other categories.  For ETH, model M1 yields the lowest AIC (171.42) among the four GEV models. Further, the QQ plot in  Figure~\ref{fig:ETH M1 QQ plot} demonstrates well alignment along the diagonal with minimal deviation across the distribution. 
For BSC, model M4 achieves the lowest AIC (172.77); however, visual inspection of its QQ plot ({see Section 2 of the Supplementary Material}) reveals  right-tail deviation. Model M2, with a slightly higher AIC (173.42), exhibits a more balanced fit across the distribution in  Figure~\ref{fig:BSC M2 QQ plot}. Thus,  M2 is selected for its better tail behavior and interpretability with a time-varying location parameter. For the Other category, model M2 clearly outperforms all other GEV models in terms of AIC (135.50 vs.\ 148.76 for M1). Its QQ plot in Figure~\ref{fig:Other M2 QQ plot} shows strong adherence to the diagonal and captures both central and extreme values well. All estimated parameters in M2 are also statistically significant at the 1\% level except for $\xi$, lending further support to its selection.

 \begin{figure}[htb!]
    \centering
    \subfigure[ETH-M1]{
        \includegraphics[width=.3\textwidth]{
        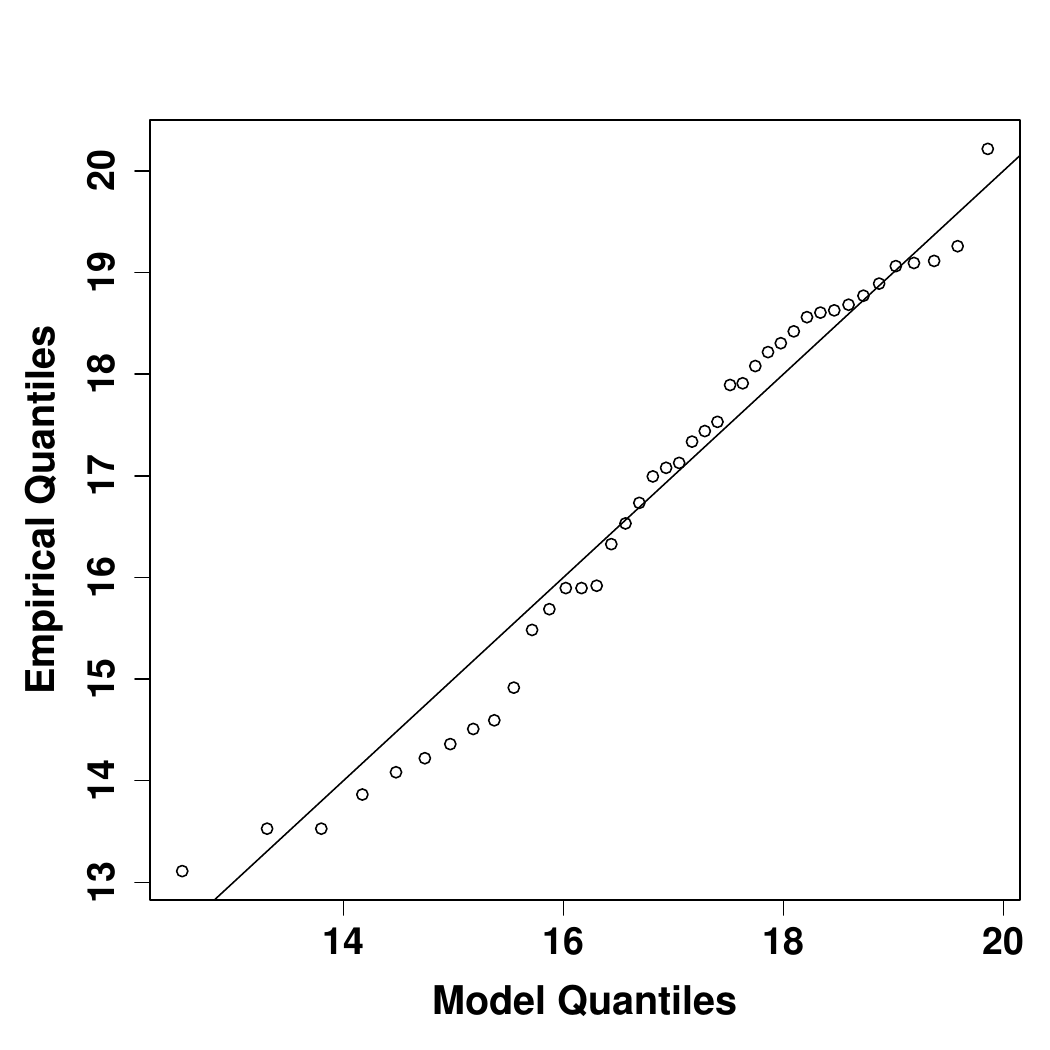}
        \label{fig:ETH M1 QQ plot}}
    \subfigure[BSC-M2]{
        \includegraphics[width=.3\textwidth]{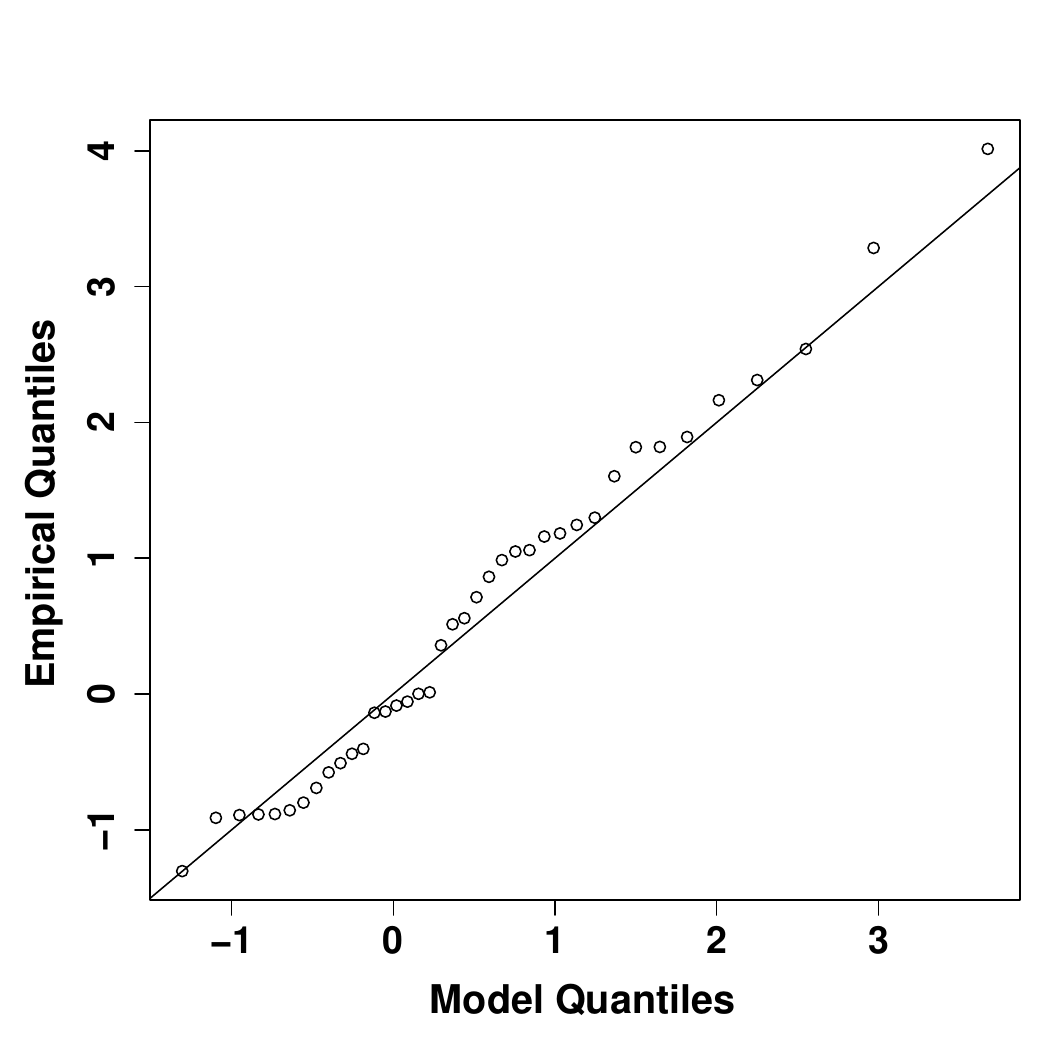}
        \label{fig:BSC M2 QQ plot}
    }
    \subfigure[Other-M2]{
        \includegraphics[width=.3\textwidth]{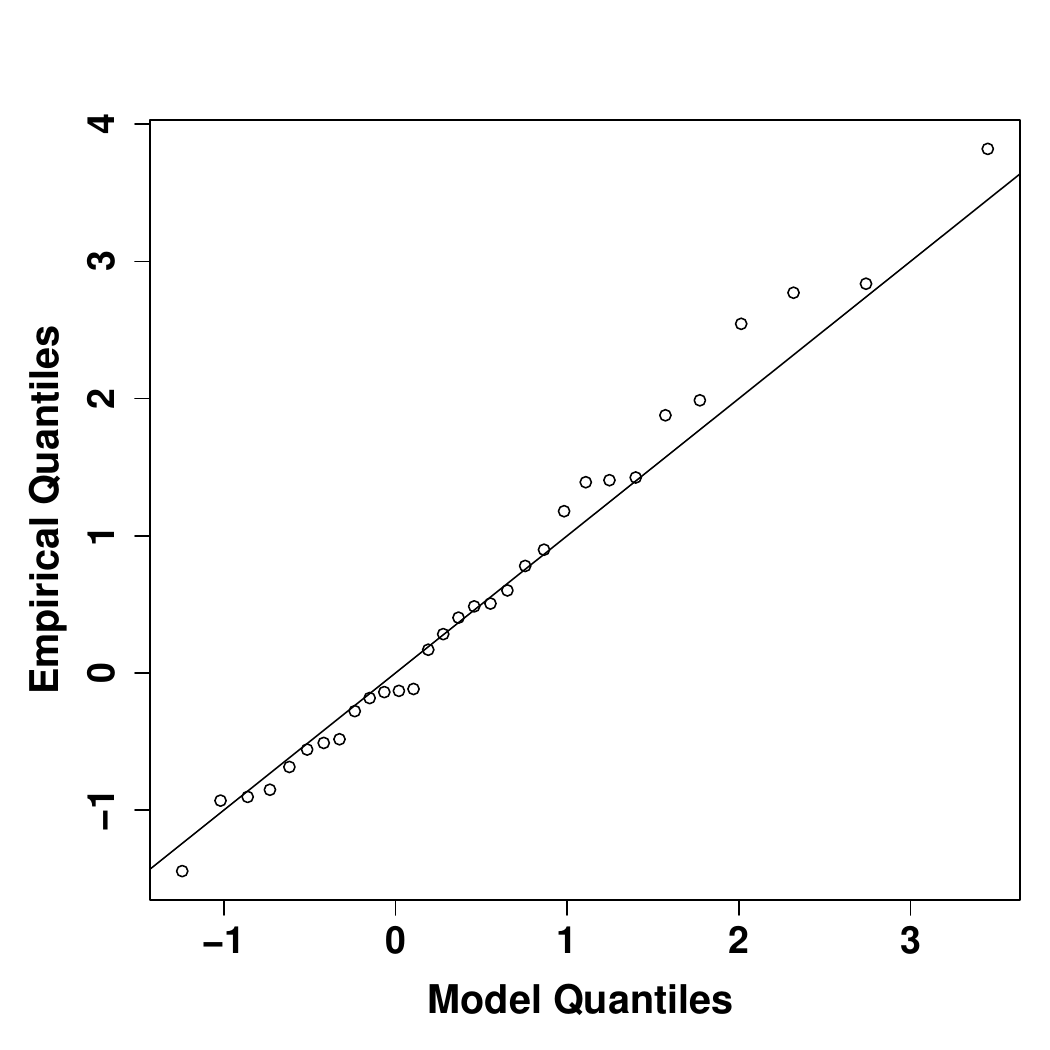}
        \label{fig:Other M2 QQ plot}
    }
    \caption{QQ plots for transformed monthly maximum losses of all categories.\label{fig:qqplot_month_maximum}}
\end{figure}

\begin{table}[htb!]
\centering
\caption{Comparison of monthly transformed  {aggregate} loss modeling results for ETH, BSC, and Other Categories. `Est.' denotes the parameter estimate; `SE' is the standard error. Stars indicate significance: *** $<0.01$; ** $<0.05$; * $p<0.1$.}
\label{tab:combined_sum_loss_models}
\resizebox{\textwidth}{!}{
\begin{tabular}{l|l|rr|rr|rr|rr}
\toprule
Category & Parameter & M1-Est. & M1-SE & M2-Est. & M2-SE & M3-Est. & M3-SE & M4-Est. & M4-SE \\
\midrule
\multirow{6}{*}{ETH} 
& $\beta_0$     & 17.0181*** & 0.3094 & 15.9685*** & 0.9377 & 15.9140*** & 1.0124 & 17.0181*** & 0.3245 \\
& $\beta_1$     & ---        & ---    & 0.3714     & 0.3160 & 0.3929     & 0.3404 & ---         & ---    \\
& $\sigma_0$  & 1.8191***  & 0.2393 & 1.7711***  & 0.2248 & 0.6347*    & 0.3183 & 0.5982**    & 0.2573 \\
& $\sigma_1$  & ---        & ---    & ---        & ---    & -0.0025    & 0.0115 & 0      & 0.0090 \\
& $\xi$       & -0.5003*** & 0.1040 & -0.4733*** & 0.0927 & -0.4913*** & 0.1258 & -0.5003***  & 0.1271 \\
& AIC         & {161.2360}   & ---    & 161.9099   & ---    & 163.8625   & ---    & 163.2360    & ---    \\
\midrule
\multirow{6}{*}{BSC} 
& $\beta_0$     & 15.5371*** & 0.3488 & 12.9501*** & 0.7444 & 14.3522*** & 1.5245 & 15.5021*** & 0.2637 \\
& $\beta_1$     & ---        & ---    & 0.9061***  & 0.2434 & 0.3783     & 0.5069 & ---         & ---    \\
& $\sigma_0$  & 2.0066***  & 0.2356 & 1.5853***  & 0.2144 & 0.9427**   & 0.3528 & 1.1268***   & 0.2211 \\
& $\sigma_1$  & ---        & ---    & ---        & ---    & -0.0226    & 0.0155 & -0.0309***  & 0.0095 \\
& $\xi$       & -0.2802*** & 0.0814 & -0.0838    & 0.1393 & -0.1628*** & 0.1067 & -0.1755*    & 0.0960 \\
& AIC         & 170.0244   & ---    & 163.7313   & ---    & 163.9808   & ---    & {162.4838}    & ---    \\
\midrule
\multirow{6}{*}{Other} 
& $\beta_0$     & 17.1663*** & 0.5047 & 21.6919*** & 0.9788 & 21.0939*** & 0.8410 & 17.2293*** & 0.6663 \\
& $\beta_1$     & ---        & ---    & -1.8464*** & 0.3857 & -1.5707*** & 0.3741 & ---         & ---    \\
& $\sigma_0$  & 2.5028***  & 0.3788 & 1.7461***  & 0.2663 & 0.1162     & 0.3359 & 0.8613*     & 0.4417 \\
& $\sigma_1$  & ---        & ---    & ---        & ---    & 0.0239     & 0.0170 & 0.0027      & 0.0196 \\
& $\xi$       & -0.3869**  & 0.1468 & -0.1117    & 0.1551 & -0.0572    & 0.1909 & -0.3726*    & 0.1948 \\
& AIC         & 147.5213   & ---    & {136.7736}  & ---    & 137.0258   & ---    & 149.5002    & ---    \\
\bottomrule
\end{tabular}
 }
\end{table}

Table~\ref{tab:combined_sum_loss_models} presents the parameter estimates and AIC values for GEV-based models M1 through M4, fitted to the log-transformed monthly aggregate losses across ETH, BSC, and Other categories.  For ETH, model M1 achieves the lowest AIC (161.24) among the GEV-based candidates and displays good alignment in the QQ plot of Figure~\ref{fig:ETH M1_sum QQ plot}. 
 For BSC, models M2 and M4 perform similarly in terms of AIC, with M4 slightly preferred (AIC 162.48 vs.\ 163.73 for M2). However, the QQ plot of M4 (see Section 2 of the Supplementary Material) shows noticeable tail discrepancies, while the QQ plot of M2 in Figure~\ref{fig:BSC M2_sum QQ plot} offers a more consistent fit across the distribution. As a result,  M2 is selected as the optimal model for BSC aggregate losses. {In the Other category, model M2 achieves the lowest AIC (136.77), substantially outperforming all other GEV models and classical alternatives. Its QQ plot in Figure~\ref{fig:Other M2_sum QQ plot} shows strong overall alignment with the theoretical quantiles, with only a minor deviation observed in the upper tail.
}

\begin{figure}[htb!]
    \centering
    \subfigure[ETH-M1]{
        \includegraphics[width=.3\textwidth]{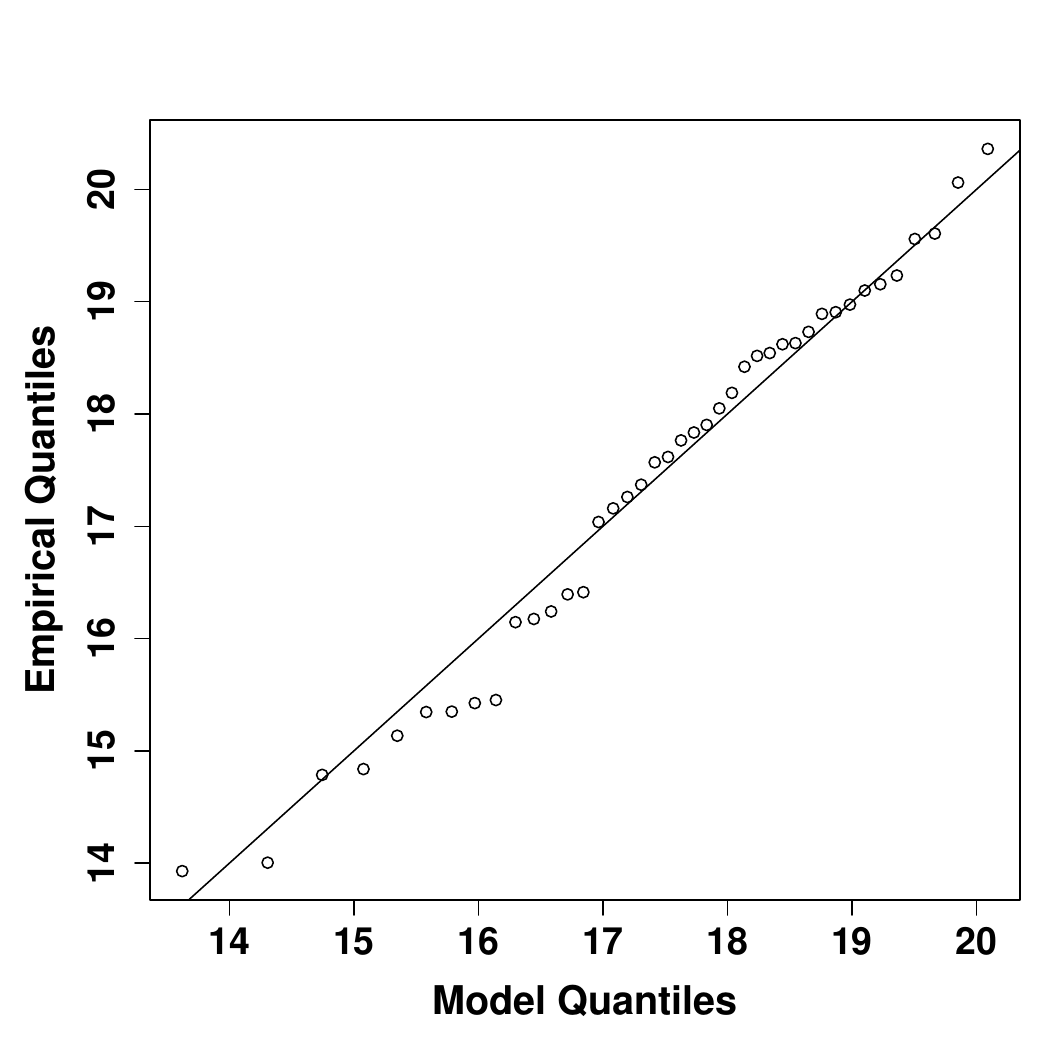}
        \label{fig:ETH M1_sum QQ plot}}
    \subfigure[BSC-M2]{
        \includegraphics[width=.3\textwidth]{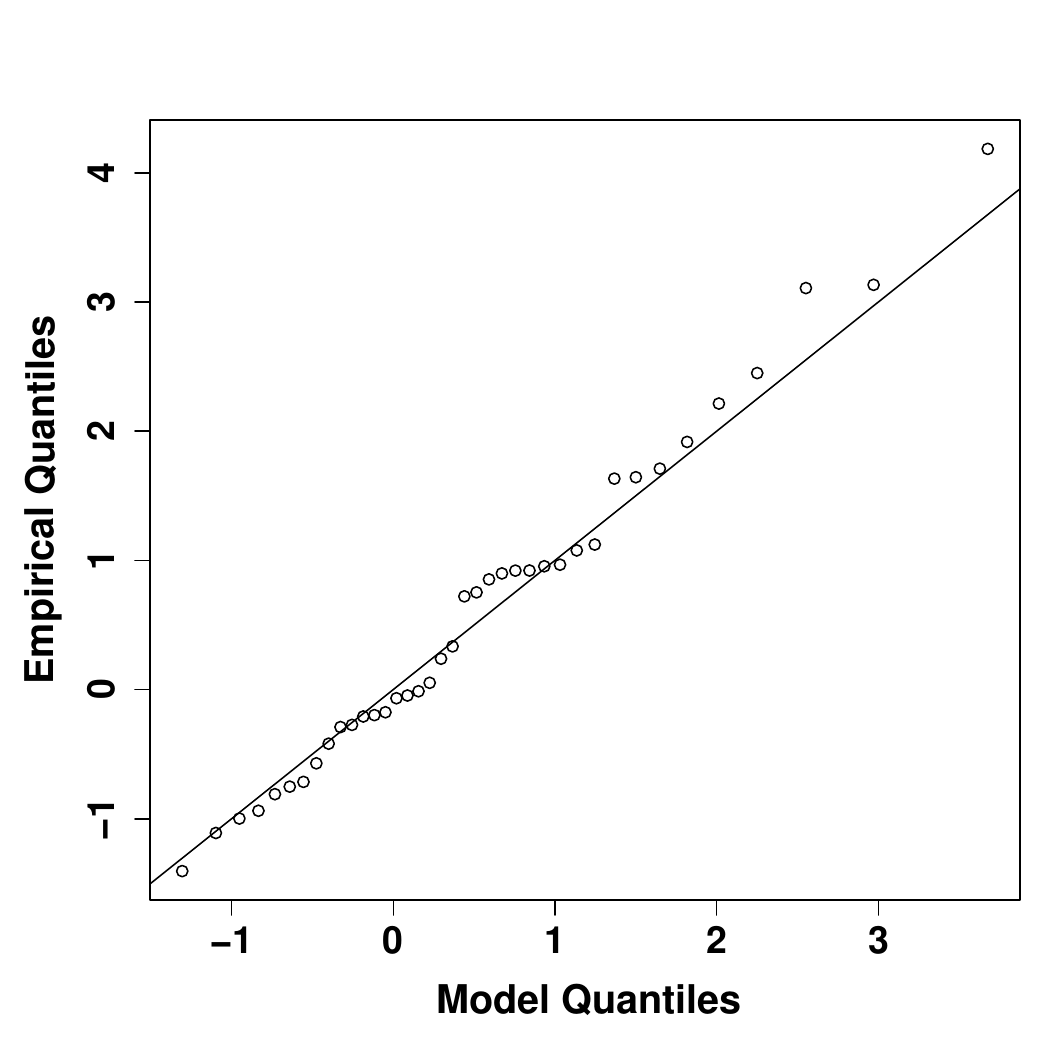}
        \label{fig:BSC M2_sum QQ plot}
    }
    \subfigure[Other-M2]{
        \includegraphics[width=.3\textwidth]{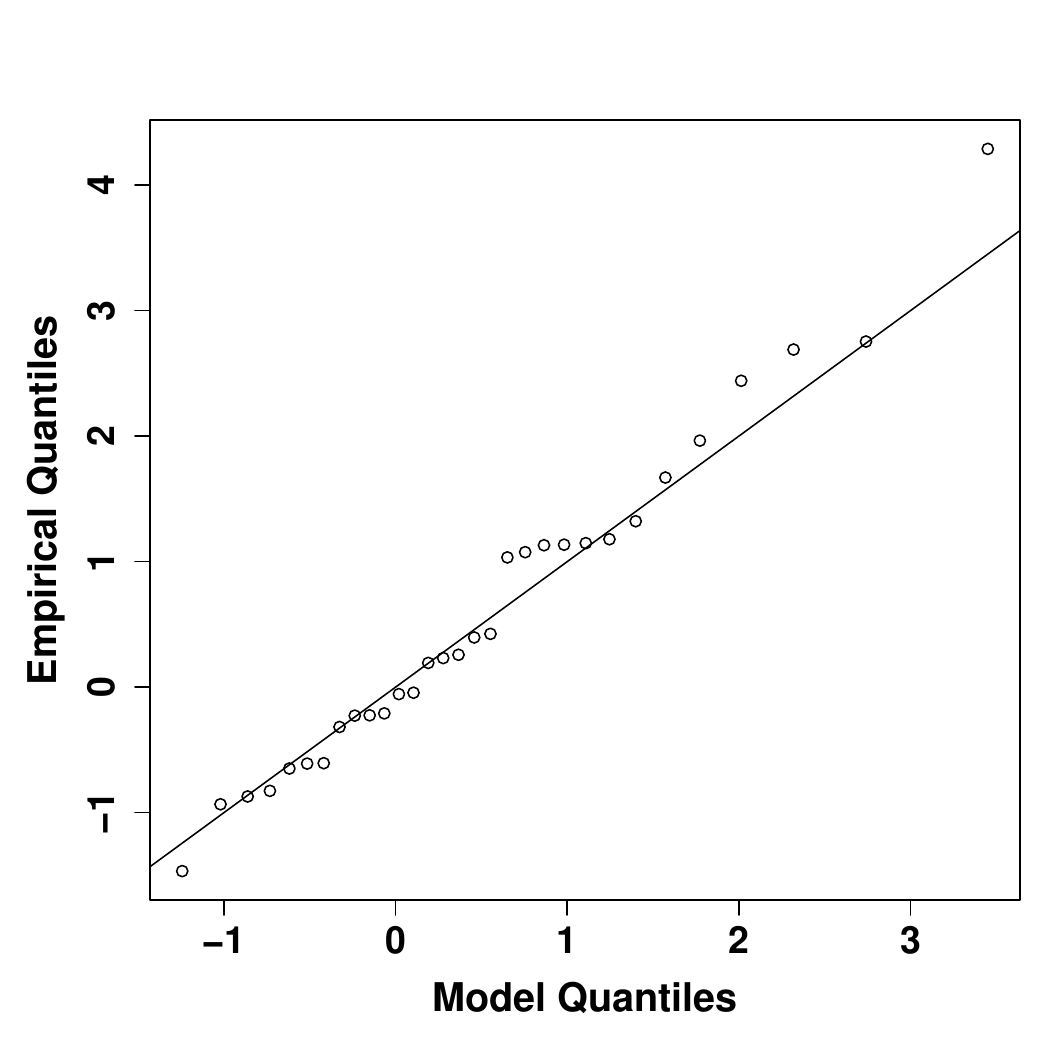}
        \label{fig:Other M2_sum QQ plot}
    }
    \caption{QQ plots for transformed monthly aggregate losses of all categories.\label{fig:qqplot_month_aggregate}}
\end{figure}

\subsubsection{Modeling the dependence between the joint
triggers}

To characterize the dependence between the monthly maximum losses and the monthly aggregate losses, we consider various bivariate copula structures \citep{nagler2022rvinecopulib}. It is quite interesting to observe that the best-fitting copula varies across categories as shown in Table \ref{tab:AIC}, highlighting heterogeneous dependence structures in different ecosystems. For the ETH category, the Gaussian copula achieves the lowest AIC. This elliptical dependence suggests that extreme monthly maxima in ETH are closely aligned with higher aggregate losses, but without distinct tail asymmetry. In contrast, the Gumbel copula performs best for BSC. The Gumbel family specifically emphasizes upper-tail dependence, indicating that extreme maximum losses on BSC are especially likely to coincide with elevated aggregate monthly losses.  For the Other category, the BB1 copula yields the lowest AIC. This copula captures both upper  and lower tail dependence, pointing to a more complex structure in smaller or emerging blockchains. Extreme maximum losses align with large aggregate outcomes. Low or moderate maxima also tend to coincide with relatively small aggregate losses.

\begin{table}[htbp]
\centering
\caption{AIC comparison of copula models for three data categories.}
\label{tab:AIC}
\resizebox{\textwidth}{!}{
\begin{tabular}{lcccccccccc}
\toprule
Type & Gaussian & T & Clayton & Gumbel & Frank & Joe & BB1 & BB6 & BB7 & BB8 \\
\midrule
ETH   & \textbf{-117.1474} & -115.3058 & -94.5175  & -112.9092 & -114.7765 & -94.4804  & -113.6769 & -110.8951 & -104.3914 & -91.8479 \\
BSC   & -119.1594 & -118.9340 & -108.7545 & \textbf{-121.3248} & -112.1043 & -108.7128 & -120.3079 & -118.4725 & -111.1547 & -97.9340 \\
Other & -117.7073  & -116.2552  & -105.4100  & -117.7305  & -103.1421  & -105.3604  & \textbf{-118.2391} & -113.2548 & -102.2504 & -83.6134 \\
\bottomrule
\end{tabular}
}
\end{table}
Table~\ref{tab:loss copula prams} reports the estimated copula parameters and corresponding Kendall’s $\tau$ for the best-fitting copulas identified in Table~\ref{tab:AIC}. It is observed that all the estimated parameters are significant with large $\tau$ values. 

\begin{table}[htbp]
\centering
\caption{Copula parameter estimates and Kendall’s $\tau$ for different categories' loss. Stars indicate significance: *** $<0.01$; ** $<0.05$; * $<0.1$.}
\label{tab:loss copula prams}
\begin{tabular}{lcccc}
\toprule
Category & Copula Family & Par1 & Par2 & Kendall’s $\tau$ \\
\midrule
ETH   & Gaussian & $0.97^{***}$ & --   & 0.85 \\ 
BSC   & Gumbel   & $7.32^{***}$ & --   & 0.86 \\ 
Other & BB1      & $1.51^{**}$ & $5.97^{***}$ & 0.90 \\ 
\bottomrule
\end{tabular}
\end{table}

To further visualize the dependence structures, Figure~\ref{fig:contour} presents the selected contour plots, where the blue points represent the normal scores obtained via the probability integral transform of the marginals. For the ETH category in Figure~\ref{fig:ETH fitted contour}, the elliptical contours align well with the point cloud, supporting the earlier AIC result that favored the Gaussian copula. In the BSC plot in Figure~\ref{fig:BSC fitted contour}, the contours exhibit noticeable asymmetry with heavier density toward the upper-right quadrant, consistent with the Gumbel copula's upper-tail dependence. In the Other category in Figure~\ref{fig:Other fitted contour}, the contour shape is more flexible and accommodates both tails, aligning with the BB1 copula selection, which captures more complex tail dependencies. Overall, these visualizations confirm the model selection results. They also show the need for category-specific copulas to capture distinct dependence patterns in cyber loss data across blockchain ecosystems.

\begin{figure}[htb!]
    \centering
    \subfigure[ETH]{
        \includegraphics[width=.3\textwidth]{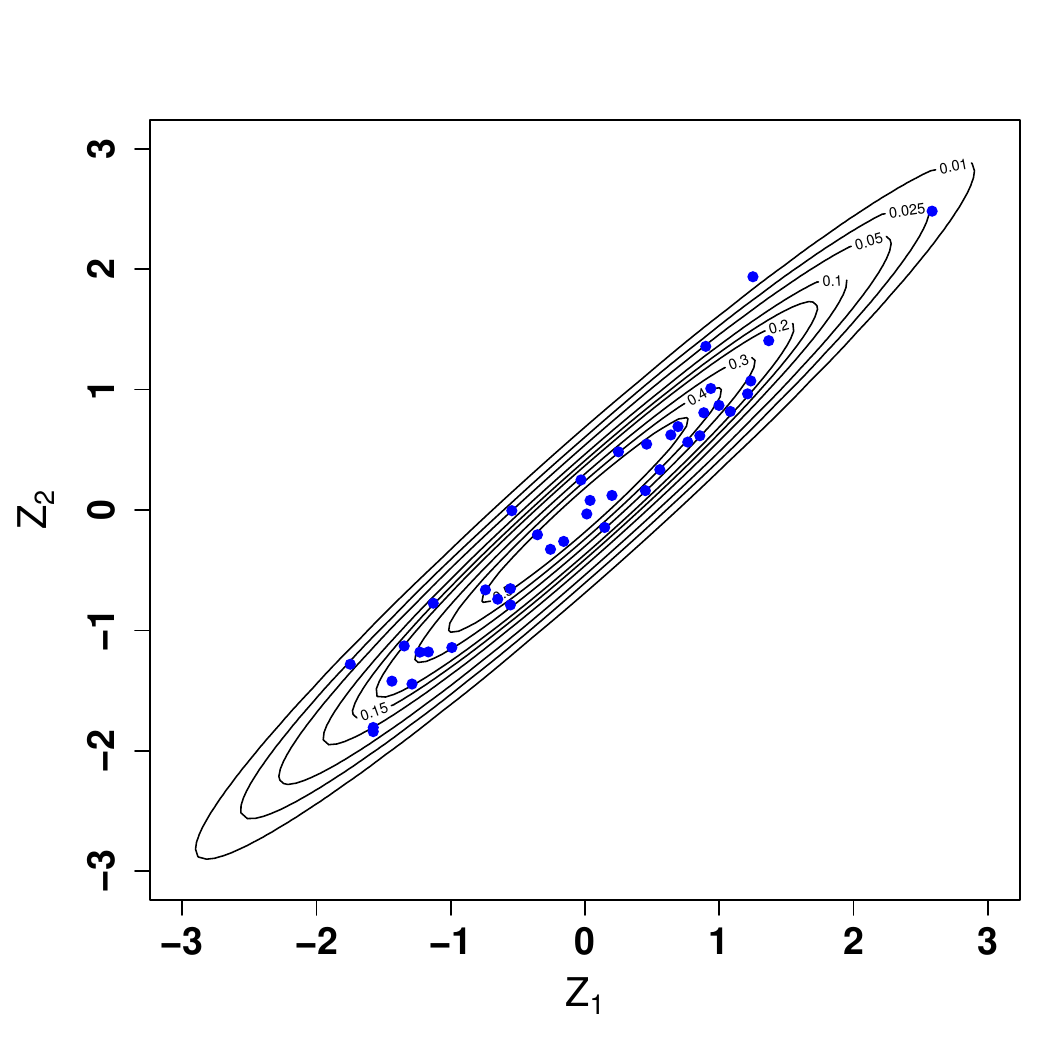}
        \label{fig:ETH fitted contour}}
    \subfigure[BSC]{
        \includegraphics[width=.3\textwidth]{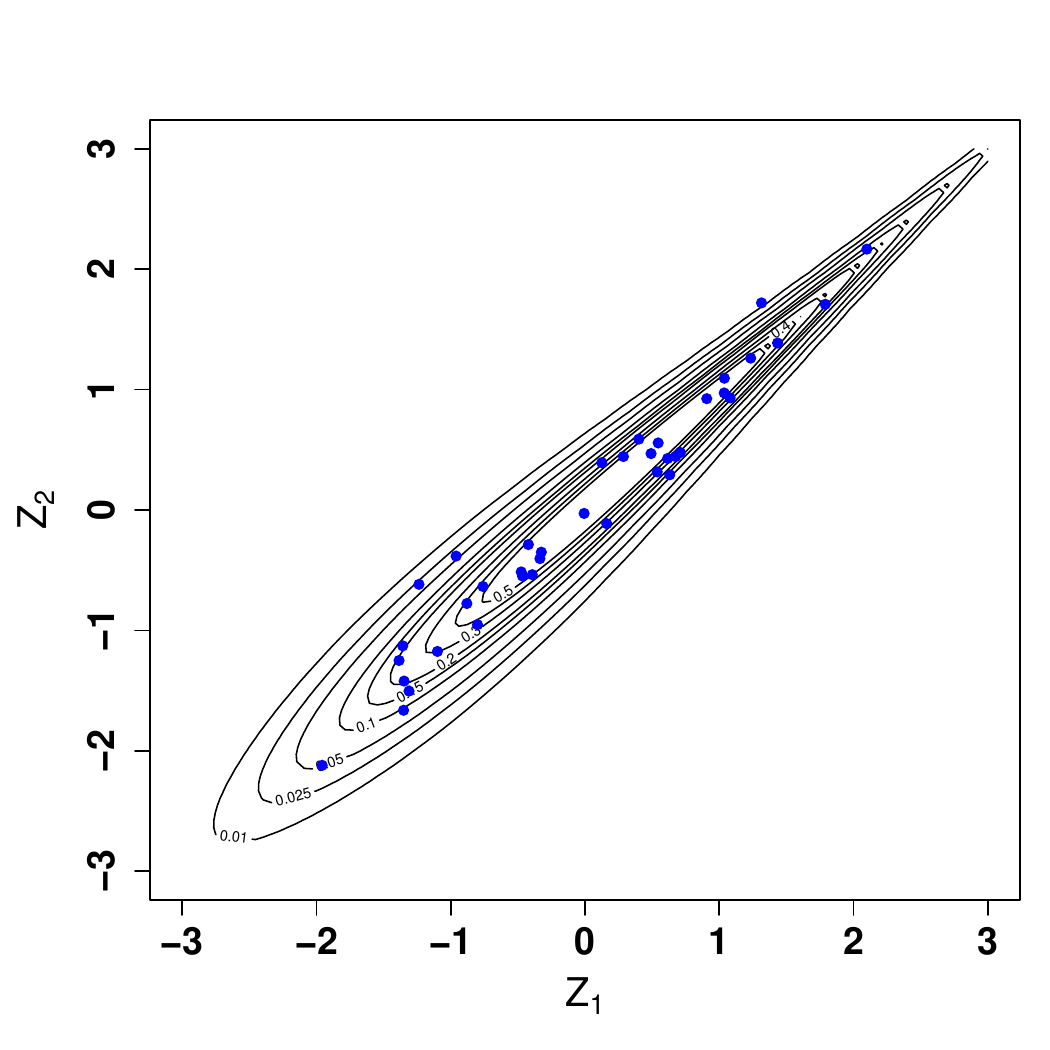}
        \label{fig:BSC fitted contour}
    }
    \subfigure[Other]{
        \includegraphics[width=.3\textwidth]{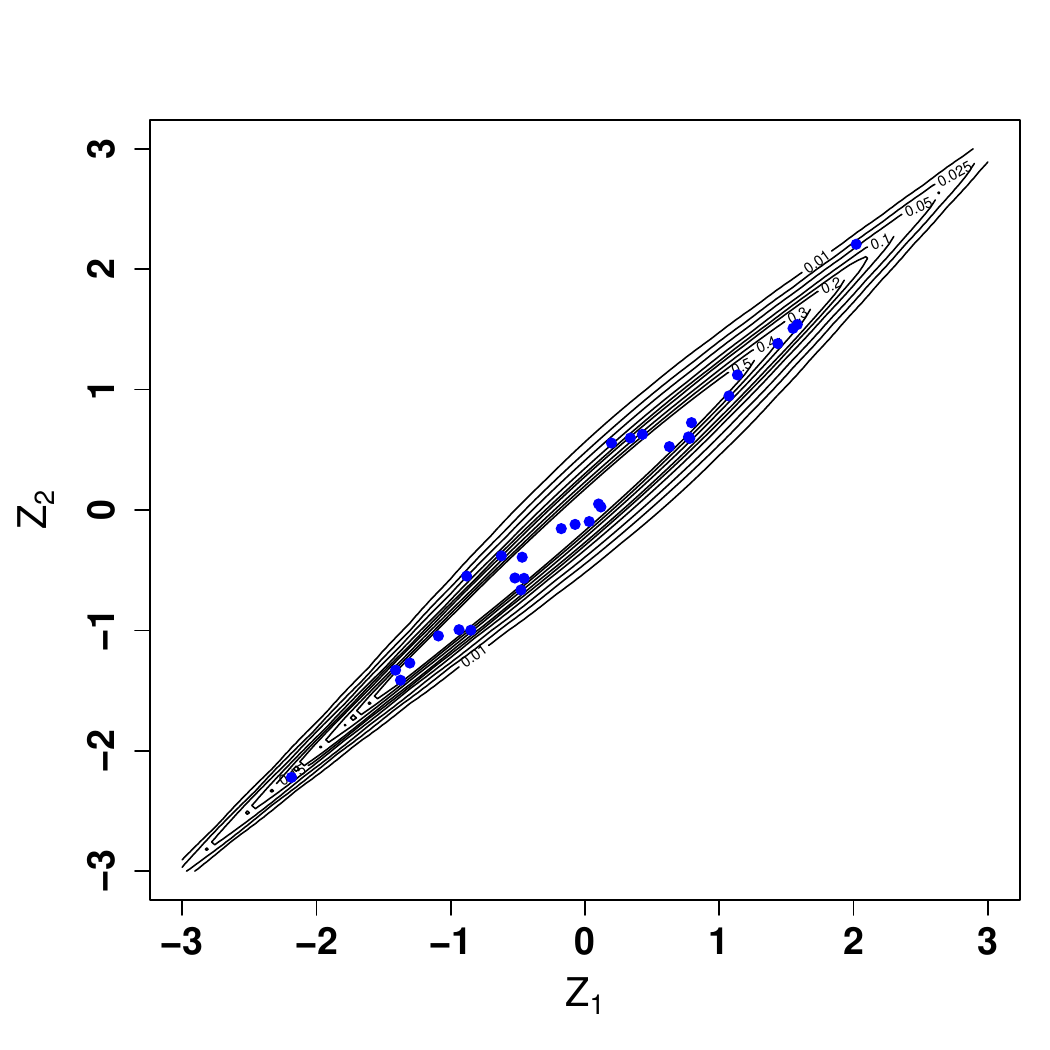}
        \label{fig:Other fitted contour}
    }
    \caption{Contour plots of all blockchain categories, where the blue points are  normal scores.\label{fig:contour}}
\end{figure}

\subsubsection{Modeling financial risks}

\paragraph{Marginal fitting.}
For the financial risk data, we first model each time series individually using ARIMA-GARCH specifications. Model selection for the ARIMA\((p,d,q)\) structure is based on the AIC, with the autoregressive order \(p\) and moving average order \(q\) allowed to vary between 0 and 5. When multiple models yield comparable AIC values, we favor the more parsimonious specification to avoid overfitting. A well-specified ARIMA-GARCH model should capture both the serial correlation in the conditional mean and the volatility clustering in the conditional variance. Accordingly, if the model is adequate, no significant autocorrelation should remain in either the standardized residuals or their squared values. Diagnostic checks on the residuals suggest that a GARCH(1,1) structure is sufficient to capture the volatility dynamics for all series considered. This finding is consistent with previous studies indicating that higher-order GARCH models often do not offer substantial improvements over GARCH(1,1) in empirical applications \citep{peng2018modeling, nikoloulopoulos2012vine}. To account for potential non-normality and asymmetry in the innovation distribution, we consider a range of candidate error distributions for the GARCH innovations \( W_t \), including the normal (NORM), generalized error distribution (GED), Student's \(t\) (STD), skewed Student's \(t\) (SSTD), skewed generalized error distribution (SGED), and skewed normal (SNORM) families \citep{ghalanos2020introduction}. This flexible error specification enhances the robustness of our marginal models to heavy tails and skewness, which are commonly observed in financial time series.

\begin{table}[htb!]
\centering
\caption{Parameter estimates for ARIMA-GARCH models fitted to the Treasury, inflation, and SOFR. `EST.' represents the estimates, and `SE' represents the standard errors. Stars indicate significance: *** $<0.01$; ** $<0.05$; * $<0.1$.}
\begin{tabular}{lccccccccc}
\toprule
 & \multicolumn{3}{c}{Treasury} & \multicolumn{3}{c}{Inflation} & \multicolumn{3}{c}{SOFR} \\
\cline{2-4} \cline{5-7} \cline{8-10}
Para. & EST. & SE & & EST. & SE & & EST. & SE & \\
\midrule
$\phi_1$ & 0.5283$^{***}$ & 0.0176 & & -0.0486$^{***}$ & 0.0003 & & -0.7701$^{***}$ & 0.0531 & \\
$\phi_2$ & 0.3266$^{***}$ & 0.0157 & & -0.3751$^{***}$ & 0.0002 & & 0.6673$^{***}$ & 0.0357 & \\
$\phi_3$ & - & - & & 0.5057$^{***}$ & 0.0002 & & 0.6079$^{***}$ & 0.0486 & \\
$\theta_1$ & -0.1439$^{***}$ & 0.0212 & & 0.6061$^{***}$ & 0.0001 & & 1.1632$^{***}$ & 0.0019 & \\
$\theta_2$ & -0.4094$^{***}$ & 0.0303 & & 0.6612$^{***}$ & 0.0001 & & -0.4079$^{***}$ & 0.0018 & \\
$\theta_3$ & - & - & & -0.4104$^{***}$ & 0.0001 & & -0.5952$^{***}$ & 0.0017 & \\
$\omega$ & 0.0003$^{***}$ & 0.0000 & & 0.0179 & 0.0227 & & 0.0014 & 0.0012 & \\
$\alpha_1$ & 0.3920$^{***}$ & 0.0265 & & 0.2011 & 0.1545 & & 0.8049$^{***}$ & 0.1854 & \\
$\beta_1$ & 0.6070$^{***}$ & 0.0324 & & 0.6789$^{**}$ & 0.2932 & & 0.1940$^{*}$ & 0.1036 & \\
$\text{skew}$ & 0.8996$^{***}$ & 0.0297 & & 0.8730$^{***}$ & 0.0792 & & 0.8840$^{***}$ & 0.1064 & \\
$\text{shape}$ & 0.9156$^{***}$ & 0.0321 & & 1.8530$^{***}$ & 0.2800 & & - & - & \\ \hline
$p$ & 2&- & &3&-& &3&-\\
$d$ & 1 &-& &1&-& &1&-\\
$q$& 2&-& &3&-& &3&-\\ \hline
$\text{LB: z}$ & 0.1798 & - & & 0.7001 & - & & 0.6015 & - & \\
$\text{LB: z}^2$ & 0.8302 & - & & 0.1049 & - & & 0.7996 & - & \\
\bottomrule
\end{tabular}
\label{tab:financial fitting information}
\end{table}

{Table~\ref{tab:financial fitting information} presents the model selection results and parameter estimates for the Treasury, inflation, and SOFR series. For the Treasury, the selected model is ARIMA(2,1,2) with GARCH(1,1) errors and SGED  innovations. All estimated parameters are statistically significant at the 0.01 level. The Ljung--Box (LB) test \(p\)-values for the standardized residuals and their squares are 0.1798 and 0.8302, respectively, suggesting that the model has adequately captured both the linear and volatility dynamics of the Treasury series. For the Inflation, the best-fitting model is ARIMA(3,1,3) + GARCH(1,1), also with SGED-distributed innovations. Most parameters are statistically significant, except for the GARCH parameters \(\omega\) and \(\alpha_1\). The LB test \(p\)-values for the standardized residuals and squared residuals are 0.7001 and 0.1049, respectively, indicating no significant autocorrelation remains and suggesting that the model provides a reasonable fit to the data. For SOFR, we select an ARIMA(3,1,3) + GARCH(1,1) model with SNORM innovation distribution. Except for $\omega$, all other estimated parameters are statistically significant. The LB test \(p\)-values for the standardized and squared residuals are 0.6015 and 0.7996, respectively, indicating that the model successfully accounts for both autocorrelation and conditional heteroskedasticity in the SOFR series.}

\begin{figure}[htb!]
    \centering
    \subfigure[Treasury Rate]{
        \includegraphics[width=.3\textwidth]{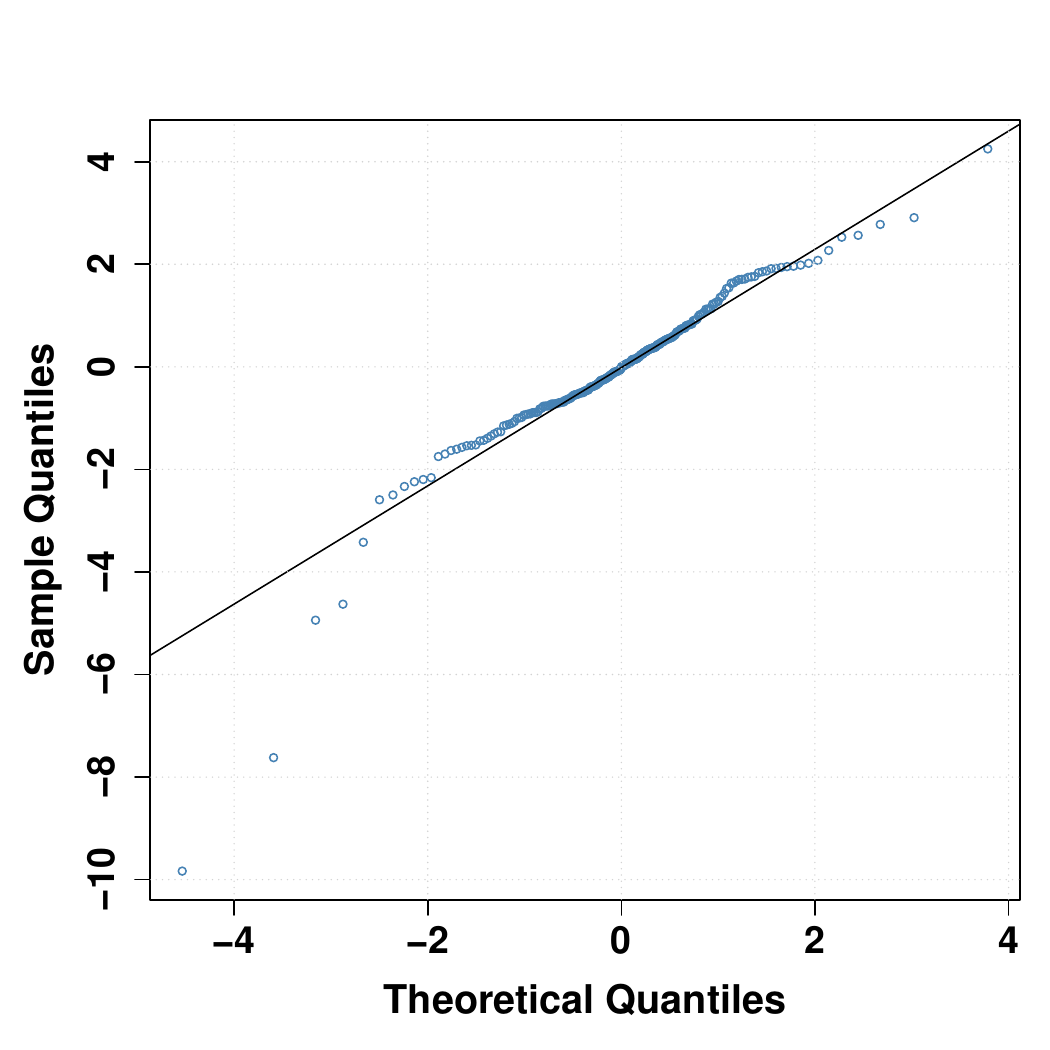}
        \label{fig:Treasury qq plot sged}
    }
    \subfigure[Inflation Rate]{
        \includegraphics[width=.3\textwidth]{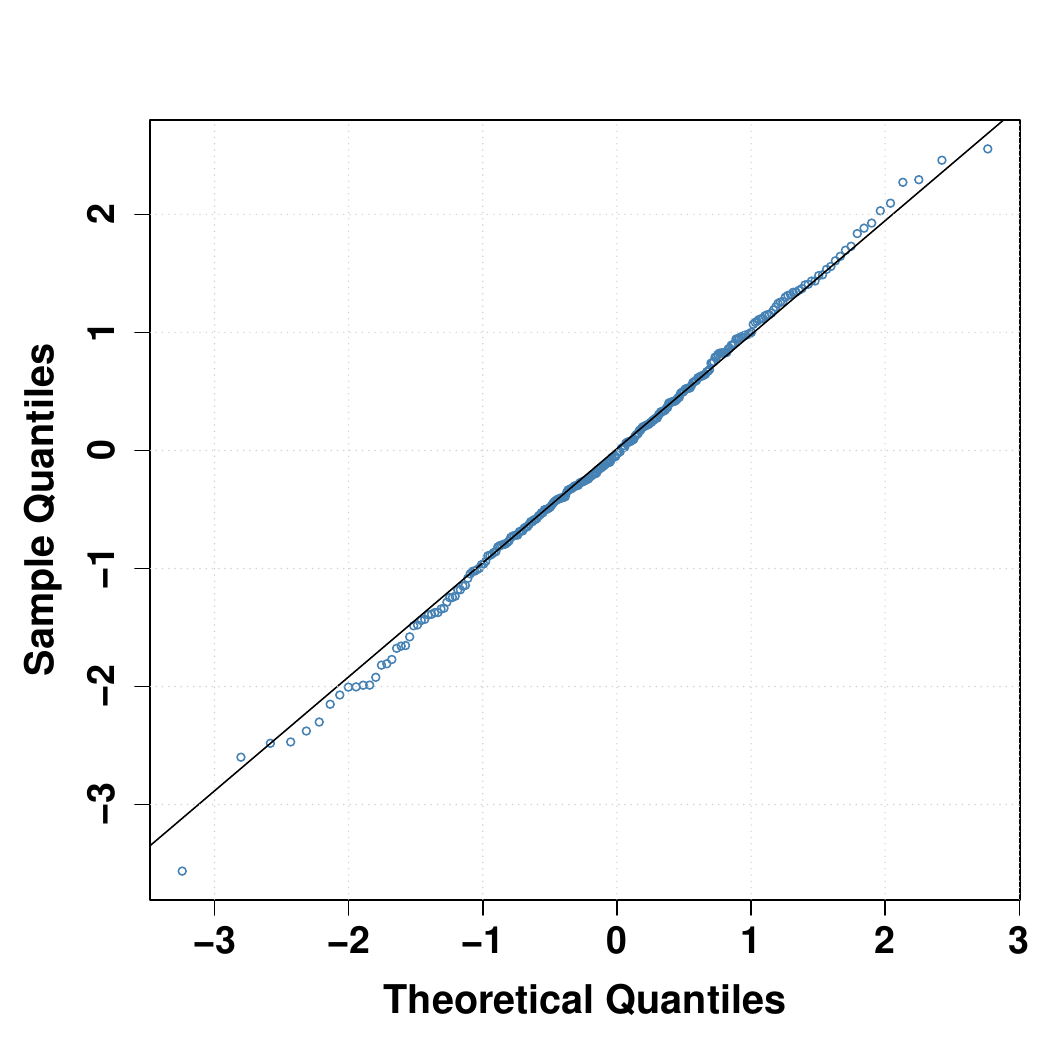}
        \label{fig:Inflation qq plot sged}
    }
    \subfigure[SOFR]{
        \includegraphics[width=.3\textwidth]{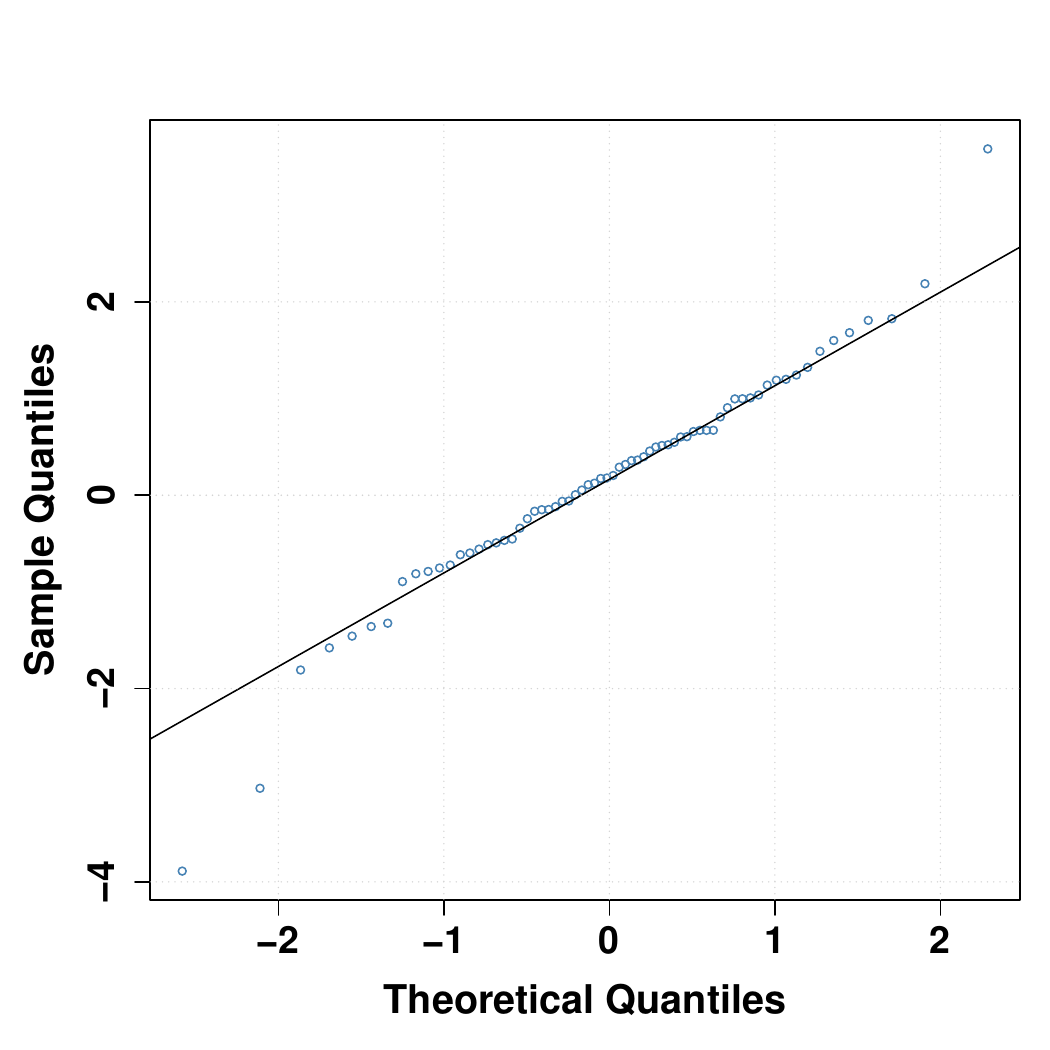}
        \label{fig:sofr qq plot snorm}
    }
    \caption{QQ plots of fitted financial risk rates.}
    \label{fig:financial qq plots}
\end{figure}

To further assess the goodness-of-fit of the proposed ARIMA-GARCH models, Figure~\ref{fig:financial qq plots} presents QQ plots of the standardized residuals for the Treasury, inflation, and SOFR.   Figure \ref{fig:Treasury qq plot sged} indicates that the Treasury residuals follow the theoretical distribution well in the central region, although some deviations are observed in the lower tail, suggesting occasional extreme negative shocks not fully captured by the model.  Figure \ref{fig:Inflation qq plot sged} shows that the inflation data aligns closely with the theoretical quantiles across almost the entire distribution. This confirms that the SGED-based model captures the empirical distribution well, including asymmetries and heavy tails. Figure \ref{fig:sofr qq plot snorm}, corresponding to SOFR, also shows close adherence to the theoretical line, especially in the central region, with minor deviations in both tails. This suggests that the model captures the bulk of the distribution well, with only limited departure in extreme observations. In addition, the Kolmogorov–Smirnov test is conducted for each residual series to evaluate the adequacy of the marginal distribution fitting. The resulting $p$-values are 0.4891 (Treasury), 0.9995 (inflation), and 0.3503 (SOFR), respectively.  

Overall, the QQ plots provide visual confirmation that the residuals from the fitted ARIMA-GARCH models conform reasonably well to the assumed innovation distributions.

\paragraph{Dependence modeling.}
To investigate the dependence structure among the financial risk factors, we compute both Pearson’s correlation coefficients ($\rho$) and Kendall’s $\tau$  based on the standardized residuals from the fitted ARIMA-GARCH models for the Treasury, inflation, and SOFR. To examine the cross-series dependence in a consistent manner, we extract the standardized residuals from the marginal models and use temporally aligned samples of size 68, corresponding to the shorter data availability for SOFR. Although the Treasury and inflation series span a longer period, the alignment ensures valid joint modeling across all three financial time series.

\begin{figure}
    \centering
    \includegraphics[width=0.4\linewidth]{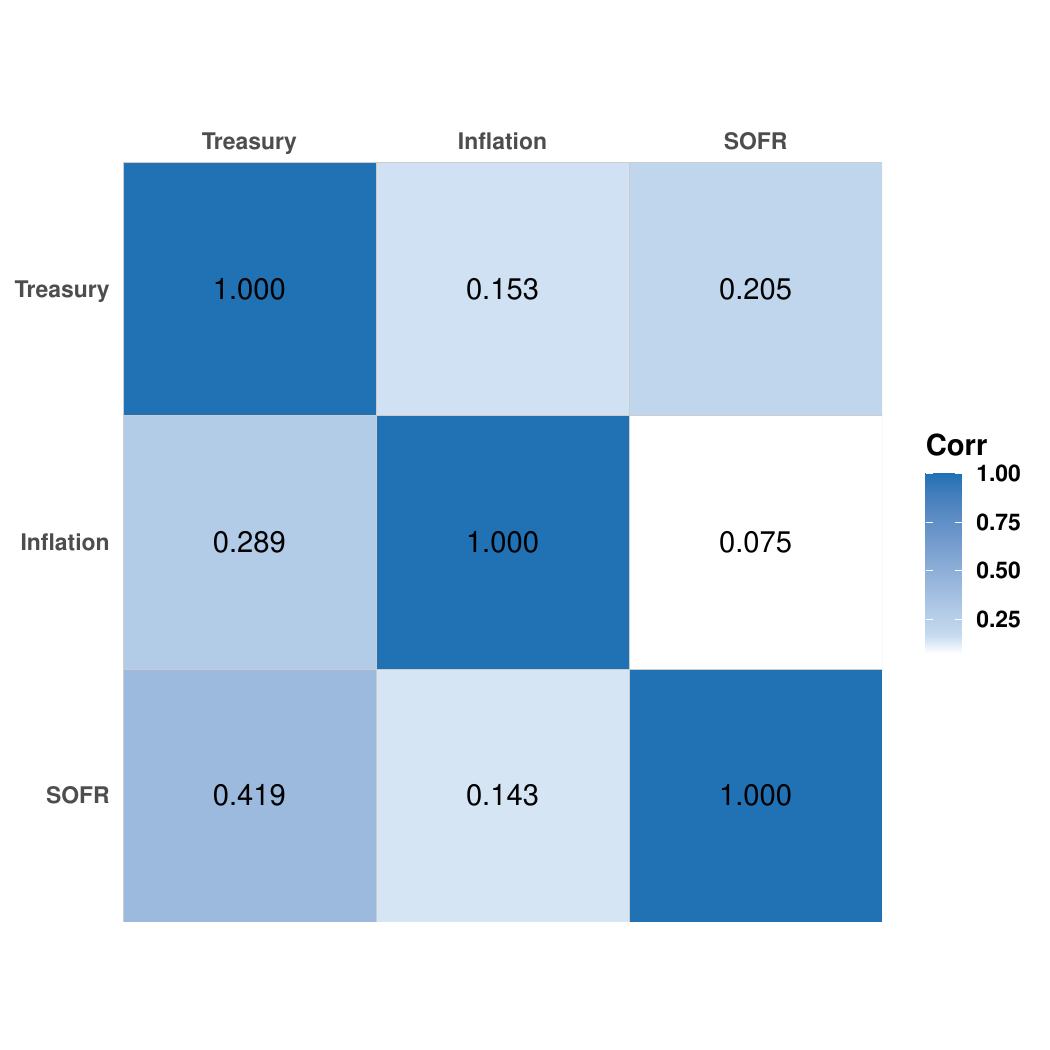}
    \caption{Heatmap of Pearson’s $\rho$ (lower triangle) and Kendall’s $\tau$ (upper triangle) correlations among the standardized residuals of Treasury, inflation, and SOFR time series. }
    \label{fig:contour pobs}
\end{figure}

As shown in Figure~\ref{fig:contour pobs}, all variable pairs exhibit moderate positive dependence, though the strength varies across pairs. The strongest dependence is observed between Treasury and SOFR, with $\rho = 0.419$ and $\tau = 0.205$. The correlation between Treasury and inflation is also non-negligible ($\rho = 0.2889$, $\tau = 0.153$), while the weakest dependence is found between inflation and SOFR ($\rho = 0.143$, $\tau = 0.075$). These asymmetric correlation patterns highlight the importance of a flexible dependence modeling framework that can capture pairwise and potentially higher-order conditional dependencies. To accommodate these heterogeneous relationships, we employ a three-dimensional C-vine copula model. The C-vine structure is particularly well suited for small-dimensional dependence modeling where one variable, such as the Treasury rate, can act as a central node, conditioning the dependence between the other pairs. This approach not only allows for distinct copula families for each pairwise link, but also enables the modeling of conditional dependence in a tractable yet expressive manner.

\begin{table}[htb!]
\centering
\caption{Summary of fitted C-vine copula structure for standardized residuals. Treasury is selected as the root node (node 1) according to the AIC criterion, followed by Inflation (node 2) and SOFR (node 3). Stars indicate significance: *** $<0.01$; ** $<0.05$; * $<0.1$.}
\label{tab:psnorm}
\begin{tabular}{@{}cllll@{}}
\toprule
\textbf{Tree} & \textbf{Edge} & \textbf{Copula Family} & \textbf{Parameter} & \textbf{Kendall’s $\bm{\tau}$} \\
\midrule
1 & (1, 2)      & Clayton        & $0.15^{*}$    & 0.07 \\
1 & (3, 1)      & Clayton        & $0.29^{***}$   & 0.13 \\
2 & (3, 2 \textbar 1) & Independence   & --      & 0.00 \\
\bottomrule
\end{tabular}
\end{table}
As discussed in the previous section, the standardized residuals from the ARIMA-GARCH models for the three financial time series are modeled using different innovation distributions. Specifically, the  SGED  is employed for the Treasury and inflation residuals, while the  SNORM  is used for SOFR. The standardized residuals are transformed into the unit interval $[0,1]$ using those fitted distributions.

\begin{figure}[htb!]
    \centering
    \includegraphics[width=0.35\linewidth]{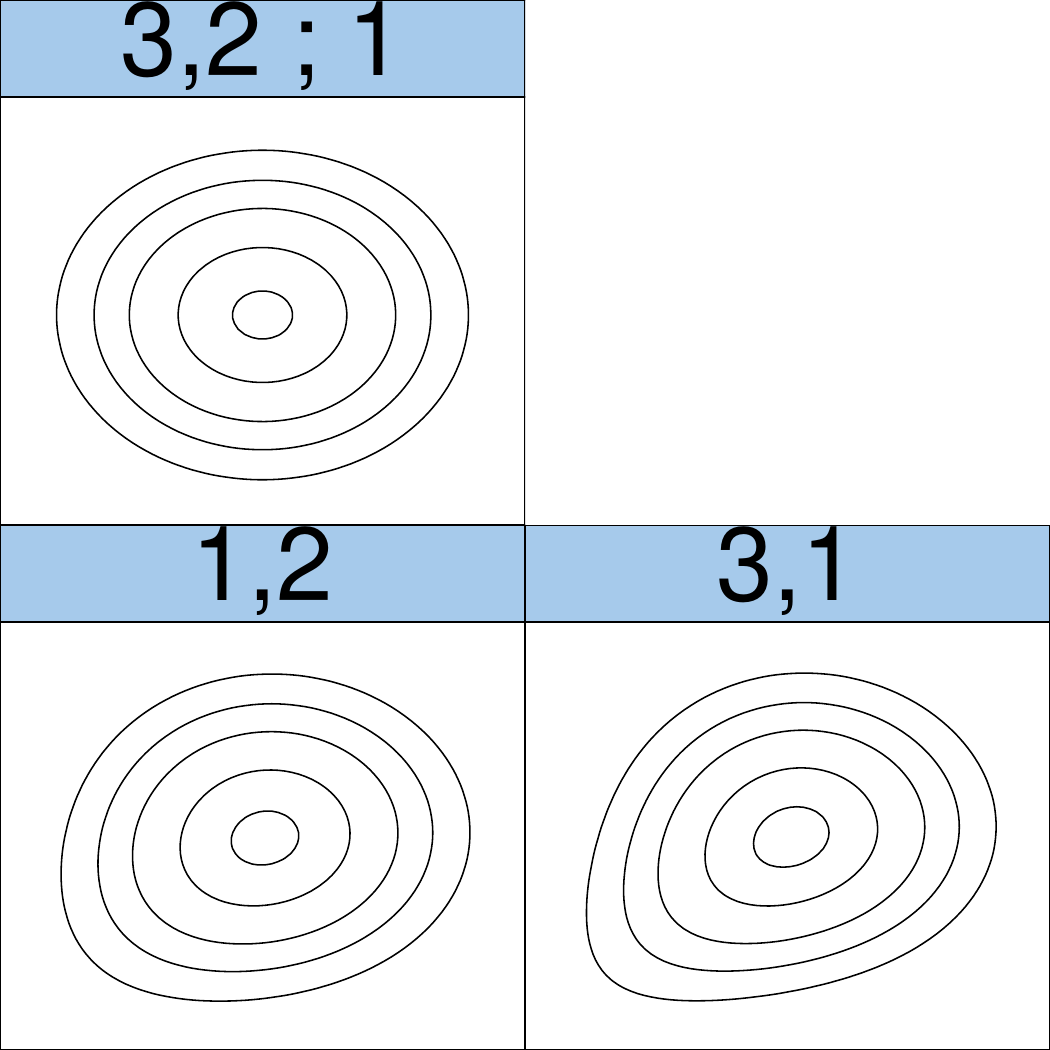}
    \caption{Contour plots illustrating the fitted C-vine copula structure linking Treasury, inflation, and SOFR residuals.}
    \label{fig:contour_psnorm}
\end{figure}

For the vine copula construction, as shown in Table \ref{tab:psnorm}, Treasury is selected as the root node (node 1) based on the AIC, with 
inflation and SOFR assigned as nodes 2 and 3, respectively. In the first tree, the edge between Treasury 
and inflation is modeled by a Clayton copula, with an estimated parameter of 0.15 and a Kendall’s $\tau$ 
of 0.07, indicating weak lower-tail dependence and low overall concordance. The edge between Treasury and SOFR is modeled with a Clayton copula, giving a parameter of 0.29 and Kendall’s $\tau$ of 0.13. This indicates a modest association, concentrated in the lower tail. In the second tree, we observe that SOFR and inflation given Treasury yields are independent. This implies that any residual dependence is negligible once Treasury yields are taken into account; see Figure \ref{fig:contour_psnorm}.

\section{Simulation analysis and contract design implications}
\label{sec:empirical}

Building on the statistical calibration developed in Section~\ref{sec:statistics1}, we
implement the pricing and contract-design framework of
Section~\ref{sec:framework} via simulation; see Section \ref{sec:na_chain} as well. Crypto-native loss dynamics are
modeled on $(\Omega^{L},\mathcal{F}^{L},\mathbb{P}^{L})$, while financial market
variables  are modeled on
$(\Omega^{F},\mathcal{F}^{F},\mathbb{P}^{F})$. The joint model is defined on the
product space
\[
(\Omega,\mathcal{F},\mathbb{P})
=
(\Omega^{L}\times\Omega^{F},\;
\mathcal{F}^{L}\otimes\mathcal{F}^{F},\;
\mathbb{P}^{L}\otimes\mathbb{P}^{F}).
\]
For a realization $(\gamma^{L},\gamma^{F})\in\Omega$, the on-chain settlement
rules in Section~\ref{sec:framework} generate the contract cashflows
\[
CF_k(\gamma^{L},\gamma^{F})
=
d\!\left(R_k(\gamma^{F}),\,Y_k(\gamma^{L}),\,Z_k(\gamma^{L})\right),
\qquad k=1,\ldots,T,
\]
where $d(\cdot)$ is defined in Eqs. \eqref{eq:cashflow}--\eqref{eq:principle},
$(Y_k,Z_k)$ denote the oracle-reported monthly maximum and cumulative loss
statistics, and $R_k$ is the monthly compounded SOFR rate. The discounted payoff
along $(\gamma^{L},\gamma^{F})$ is
\begin{equation}
\label{eq:discounted_payoff_path}
H(\gamma^{L},\gamma^{F})
=
\sum_{k=1}^{T}
\frac{
CF_k(\gamma^{L},\gamma^{F})
}{
D_k(\gamma^{F})
}.
\end{equation}
Because the trigger mechanism is nonlinear and path dependent, a closed-form characterization of
the distribution of $H$ is generally infeasible. We therefore approximate the payoff distribution
via Monte Carlo simulation. We specify the nominal discount factor along a financial
factor path $\gamma^{F}$ as
\[
D_k(\gamma^{F})
=
\prod_{s=0}^{k-1}
\bigl(1+\mathrm{TR}_s(\gamma^{F})\bigr)
\bigl(1+\mathrm{IF}_s(\gamma^{F})\bigr),
\]
where $\mathrm{TR}_s$ and $\mathrm{IF}_s$ denote the Treasury rate and inflation rate in period $s$,
respectively. The simulated distribution is then used (i) to compute par spreads by matching expected discounted
payoffs under the pricing measure $\mathbb{Q}$, where $D_k(\gamma^{F})$ provides the corresponding
discount along the simulated financial path, and (ii) to assess downside tail exposure
under the physical measure $\mathbb{P}$. In implementation, and given the one-year horizon, we approximate the financial pricing measure
for the auxiliary macro factors driving $\mathrm{TR}_s$ and $\mathrm{IF}_s$ by their historically
estimated dynamics. Equivalently, we assume that the associated interest-rate risk premia
 are negligible over this horizon. Valuation is then carried out by discounted-cashflow
simulation using the corresponding money-market numeraire, i.e., discounting by
$D_k(\gamma^{F})$ along each simulated financial path $\gamma^{F}$.

We study contract valuation and design
implications by comparing simulated price,  and tail-risk characteristics
across alternative trigger grids and principal repayment schedules. All results
reported in this section are obtained from the joint loss--financial model
estimated in Section~\ref{sec:statistics1}. We focus on the ETH category as a representative case; corresponding results for
BSC and Other protocols are reported in  the Supplementary Material.
Using the calibrated model, we generate the predictive distribution of
one-year discounted payoffs and holding-period returns for each candidate
design, and we summarize the implied pricing and sponsor risk trade-offs under
the dual-measure framework.

\paragraph{Trigger structure and principal designs.}
To determine the joint trigger state \(A_\xi B_\eta\), we partition the space
of realized losses into discrete intervals. Specifically, the monthly maximum
loss is divided into four intervals:
\[
A_1 = [0, \delta_1^M), \quad
A_2 = [\delta_1^M, \delta_2^M), \quad
A_3 = [\delta_2^M, \delta_3^M), \quad
A_4 = [\delta_3^M, \infty),
\]
and the cumulative monthly loss is divided into three bins:
\[
B_1 = [0, \delta_1^C), \quad
B_2 = [\delta_1^C, \delta_2^C), \quad
B_3 = [\delta_2^C, \infty).
\]

The joint trigger state \(A_\xi B_\eta\) determines coupon payment multipliers,
which are identical across all designs and follow the grid shown in
Figure~\ref{fig:coupon}. Principal repayment at maturity depends solely on the
cumulative loss bin \(B_\eta\) and varies across contract designs.

\begin{table}[htb!]
\centering
\caption{Principal repayment multipliers \(\Gamma(B_\eta)\) defining the four
representative contract designs \(L_1\)--\(L_4\). Designs are ordered from
most aggressive to most conservative in terms of principal protection.}
\label{tab:principal_grids_L1_L4}
\begin{tabular}{lccc}
\toprule
Design & \(\Gamma(B_1)\) & \(\Gamma(B_2)\) & \(\Gamma(B_3)\) \\
\midrule
\(L_1\) (Aggressive)    & 1.00 & 0.60 & 0.00 \\
\(L_2\) (Intermediate)  & 1.00 & 0.80 & 0.20 \\
\(L_3\) (Moderate)      & 1.00 & 0.90 & 0.80 \\
\(L_4\) (Conservative)  & 1.00 & 0.95 & 0.90 \\
\bottomrule
\end{tabular}
\end{table}
Consistent with the existence result of Theorem~3.3, we consider a finite
admissible subset \(D_N \subset D\) consisting of four representative
principal repayment designs \(L_1\)--\(L_4\), summarized in
Table~\ref{tab:principal_grids_L1_L4}. Design \(L_1\) corresponds to an
aggressive cliff-type structure, while designs \(L_2\) and \(L_3\) introduce
progressively smoother loss sharing. Design \(L_4\) represents a conservative
structure that preserves most principal even under severe cumulative losses.

\begin{algorithm}[htb!]
\small
	\caption{Simulation of predictive distribution of crypto CAT bond prices under joint crypto-financial model.}
	\label{alg:detailed_catbond_simulation}
	\SetKwInOut{Input}{\textbf{Input}}\SetKwInOut{Output}{\textbf{Output}}
	
	\Input{
	Number of simulations \( S \); Maturity \( T \);  Marginal models \(F_{k}, G_{k}\) for monthly maximum and aggregate loss; Fitted copula structures \(C_1\), $C_2$;  ARIMA-GARCH models for  $(R_k,\text{TR}_k,\text{IF}_k)$; Face value $F$.
	}
	
	\Output{Simulated predictive distribution \(\{P^{(j)}\}_{j=1}^{S}\) of the CAT bond price.}
	\BlankLine

	\For{\(j = 1, 2, \dots, S\)}{
	     Simulate  random vectors \(\{(U_k^{(j)}, V_k^{(j)})\}_{k=1}^T\) from the bivariate copula $C_1$. \\
	    Obtain simulated monthly maximum losses and monthly aggregate losses:
	    \[
	    Y_k^{(j)} = F_{k}^{-1}(U_k^{(j)}), \quad X_k^{(j)} = G_{k}^{-1}(V_k^{(j)}), \quad k=1,\dots,T.
	    \]\\
        Compute cumulative losses $Z_k^{(j)}=\sum_{i=1}^k X^{(j)}_i$, $k=1,\dots,T$.
        \BlankLine
	     Simulate   random vectors \(\{(Z_{1,k}^{(j)}, Z_{2,k}^{(j)}, Z_{3,k}^{(j)})\}_{k=1}^{T}\) from  copula \(C_2\). \\
	     Obtain simulated financial innovations via the inverse of the fitted SGED for Treasury, fitted SGED for inflation, and SNORM for SOFR.\\
         Obtain simulated financial risks via corresponding ARIMA-GARCH models with the simulated financial innovations.\\
        \BlankLine
	    \For{\(k = 1,2,\dots,T\)}{
	         Determine the trigger state by identifying the bin into which the simulated \((Y_k^{(j)}, Z_k^{(j)})\) falls, i.e.,
	        \[
	        (Y_k^{(j)}, Z_k^{(j)}) \in A_\xi B_\eta,
	        \]
	        and retrieve the corresponding coupon multiplier \(\Delta^{(j)}\) and principal multiplier \(\Gamma^{(j)}\). \\
	Evaluate the nominal cash flow function:
\[
d_k^{(j)} \;=\; d\!\left(R_k^{(j)}, Y_k^{(j)}, Z_k^{(j)}\right) 
= \left[\Delta^{(j)} \cdot \big(R_k^{(j)} + s\big) 
+ \mathbf{1}_{\{k=T\}} \cdot \Gamma^{(j)}\right]F .
\]
        \BlankLine
        }
	    Using the simulated financial paths for \(\text{TR}_s^{(j)}\) and \(\text{IF}_s^{(j)}\), calculate the present value:
	    \[
	    P^{(j)} = \sum_{k=1}^T \frac{d_k^{(j)}}{\prod_{s=0}^{k-1}\left(1+\text{TR}_s^{(j)}\right)\left(1+\text{IF}_s^{(j)}\right)}.
	    \]

    } 
	\textbf{Return:} Simulated predictive distribution \(\{P^{(j)}\}_{j=1}^{S}\).
\end{algorithm}

To determine the trigger thresholds, we first set the spread parameter to
$s=0$ in Eq.~\eqref{eq:coupon-with-spread} and search over candidate grids for
$\{\delta_1^M,\delta_2^M,\delta_3^M,\delta_1^C,\delta_2^C\}$. The trigger levels
are chosen so that, under the baseline spread-free contract, the simulated mean
annualized return is approximately $15\%$. We then use
Algorithm~\ref{alg:detailed_catbond_simulation} to approximate the distribution
of CAT bond prices via Monte Carlo simulation. In particular, using the
Treasury rate $r_0=4.959\%$ and inflation rate $\pi_0=3.4\%$ (as of
December~2023) as discounting inputs, we generate $100{,}000$ simulated
per-note prices for the ETH category under four trigger mechanisms,
$L_1$--$L_4$, with maturity fixed at $T=12$ months and face value set to
$F=1{,}000$. Table~\ref{tab:trigger_levels_eth} reports the resulting
prespecified trigger levels for ETH; corresponding results for the BSC and
Other categories are provided in the Supplementary Material.


\begin{table}[htb!]
    \centering
    \caption{Prespecified trigger levels for ETH. 
    Results for BSC and Other categories are reported in the Supplementary Material.}
    \label{tab:trigger_levels_eth}
    \begin{tabular}{lccccc}
        \toprule
        Level & $\delta_1^M$ & $\delta_2^M$ & $\delta_3^M$ & $\delta_1^C$ & $\delta_2^C$ \\
        \midrule
        $L_1$–$L_4$ & $1.99\times10^{6}$ & $6.68\times10^{6}$ & $1.18\times10^{7}$ & $2.37\times10^{9}$ & $2.56\times10^{9}$ \\
        \bottomrule
    \end{tabular}
\end{table}

Table~\ref{tab:trigger_levels_eth} summarizes the trigger thresholds that define the loss partitions in the contract structure. The monthly-maximum trigger is characterized by three levels, with $\delta_1^M = 1.99\times10^{6}$, $\delta_2^M = 6.68\times10^{6}$, and $\delta_3^M = 1.18\times10^{7}$. 
For the cumulative trigger, the two thresholds $\delta_1^C = 2.37\times10^{9}$ and $\delta_2^C = 2.56\times10^{9}$ define the transition into cumulative-loss states.

\paragraph{Optimal design and tail risk.}
We illustrate sponsor-side contract design  by
conducting a discrete comparison across a finite set of smart-contract-feasible
principal schedules $\Gamma$ (levels $L_1$--$L_4$) and evaluating the resulting
sponsor tail exposure. For each candidate design, we first calibrate the coupon
spread $s$ to satisfy the par-issuance condition in Eq. \eqref{eq:par_tail}.
Specifically,   we estimate the mean discounted payoff   implied by the
design and solve for the spread $s$ such that the simulated bond price equals
the face value $F$; see also Eq. \eqref{eq:spead}. The resulting calibrated spreads are reported in
Table~\ref{tab:spread}.

\begin{table}[htb!]
\centering
\caption{Spread value $s$ under different trigger levels.}
\label{tab:spread}
\footnotesize
\begin{tabular}{lcccc}
\toprule
Trigger level & $L_1$ & $L_2$ & $L_3$ & $L_4$ \\
\midrule
Spread value $s$ & 1.1809\% & 1.1451\% & 1.0823\% & 1.0689\% \\
\bottomrule
\end{tabular}
\end{table}


We then assess sponsor risk under the physical measure $\mathbb{P}$ by
computing $\mathrm{TVaR}_\alpha$ of the sponsor shortfall
$X_N=(L-P_N(\Delta,\Gamma))_{+}$ for $\alpha\in\{0.95,0.99\}$, where the total
collateralized principal is fixed at $NF=300$M. This corresponds to the
sponsor-side objective in Eq. \eqref{eq:tail_design_problem}: among par-calibrated
contracts,
the sponsor prefers designs that minimize residual tail exposure under
$\mathbb{P}$. Table~\ref{tab:tvar} shows that, at both confidence levels $\alpha=0.95$ and $\alpha=0.99$, the lowest TVaR values are attained under trigger level $L_1$. This outcome reflects the fact that $L_1$ corresponds to the most aggressive principal multiplier structure. Under this design, investor principal is written down more rapidly once losses enter adverse regions, so that a larger portion of extreme losses is absorbed by the bond rather than retained by the sponsor. Consequently, in scenarios with very large losses, the sponsor benefits from stronger protection, which translates into a smaller expected tail shortfall. 
\begin{table}[htb!]
\centering
\caption{Sponsor shortfall TVaR under $\mathbb{P}$ across trigger levels with total principal $NF=300$M.}
\label{tab:tvar}
\begin{tabular}{lcccc}
\toprule
Trigger level & $L_1$ & $L_2$ & $L_3$ & $L_4$ \\
\midrule
TVaR$_{95}$ & $2.301\times10^{9}$ & $2.325\times10^{9}$ & $2.367\times10^{9}$ & $2.376\times10^{9}$ \\
TVaR$_{99}$ & $2.527\times10^{9}$ & $2.587\times10^{9}$ & $2.739\times10^{9}$ & $2.768\times10^{9}$ \\
\bottomrule
\end{tabular}
\end{table}
It is important to note that the trigger thresholds are calibrated so that the investor’s expected return is
approximately $15\%$ under all four principal-multiplier levels. From the investor’s perspective, the four levels
therefore deliver broadly comparable expected returns, whereas from the sponsor’s perspective a more aggressive
level can reduce expected tail risk through stronger protection. We emphasize that, although
Theorem~\ref{thm:tail_exist} establishes existence of an optimal contract design in the continuous formulation, we do
not attempt to solve the full optimization problem in this empirical section. Instead, we conduct a 
 grid-based search over a  set of candidate trigger thresholds and multipliers to illustrate the
risk-return trade-offs and the qualitative implications of the theoretical results.



\section{Case Study: Design and Feasibility of a Chain-Wide ETH Catastrophe Bond}
\label{sec:case-study}

To illustrate the economic viability and operational feasibility of the proposed
on-chain settlement framework, we present a case study of a one-year, fully
collateralized ETH catastrophe bond. The instrument is designed to transfer
chain-wide crypto-system risk from a sponsor to capital market investors via the proposed framework.

In traditional markets, such SPVs (e.g., Long Walk Re\footnote{\url{https://www.artemis.bm/deal-directory/long-walk-re/} — Bermuda-based special purpose insurer sponsored by AXIS Capital, used for cyber and property catastrophe bond issuances since 2023.}), which act as dedicated issuing vehicles for specific risk transfer programs. The sponsor (i.e., the protection buyer seeking coverage against
blockchain-related catastrophic loss) is analogous to the cedant in the
traditional catastrophe bond market, such as (re)insurers and reinsurers
(e.g., AXIS Capital or Munich Re). It may also be a crypto-native risk carrier
or mutual-style protection provider (e.g., \emph{Nexus Mutual}\footnote{\url{https://nexusmutual.io/} --- an Ethereum-based mutual that offers cover for smart-contract and related crypto risks.}), or a large centralized crypto intermediary (e.g., major exchanges/custodians such as Coinbase or Binance) that seeks to hedge severe operational loss scenarios.
The legal structure mirrors market convention: SPV issuance and trustee-held collateral, while the blockchain layer is used  for calculation and distribution. Each month, the index engine computes the state $(Y_k,Z_k)$ and posts a signed record on-chain, which the contracts translate into the coupon multiplier; at maturity a final state produces the principal multiplier.  The rationale for the choice of this hybrid design, retaining the traditional SPV issuance and trustee custody structure while adding a blockchain-based operational layer, is as follows:
From a legal and regulatory standpoint, the SPV framework is a proven vehicle for isolating risk and securing collateral, familiar to rating agencies, regulators, and institutional ILS investors. This ensures that the product can be evaluated, documented, and distributed within existing market infrastructure without altering the legal underpinnings that investors rely on. At the same time, the blockchain layer delivers capabilities that are difficult to replicate off-chain: automated coupon and principal calculation, transparent publication of loss metrics, and verifiable distribution records. By limiting on-chain functions to calculation and settlement, the design avoids extra custody risk and structural frictions for investors. The underlying pricing and risk transfer remain governed by our model outputs. For example, consider an ETH catastrophe note under the $L_1$ trigger schedule. If at maturity the cumulative loss $Z_T$ falls into bin $B_2$, the pricing model specifies a principal multiplier 
$\Gamma=0.6$. Within the on-chain system, the Trigger Contract enforces this rule by reducing the redeemable 
principal to $60\%$ of par and updating investor token balances accordingly. 
This provides a direct and transparent implementation of the pre-calibrated payout logic, with no scope for ex post discretion. Similarly, each month the realized state $(Y_k,Z_k)$ is mapped to a cell $A_\xi B_\eta$, 
which determines the coupon multiplier $\Delta$. 
These multipliers are passed to the Coupon Distribution Contract, which computes the adjusted SOFR-linked 
coupon payments and makes them available to investors on a pro rata basis.

Since our simulation results indicate that, within the candidate family
$\mathcal{D}_N=\{L_1,\ldots,L_4\}$, the aggressive principal schedule $L_1$
achieves the strongest transfer of extreme tail exposure, we adopt $L_1$ as the benchmark
design for illustrating investor return outcomes. For each simulated scenario $j=1,\ldots,100{,}000$, we compute the one-year
holding-period return implied by the contract cash flows. Let $d_k^{(j)}$
denote the nominal cash flow paid at period $k$ in simulation $j$.
The return rate for simulation $j$ is defined as
\[
\mathrm{RR}^{(j)}=\frac{\sum_{k=1}^{T} d_k^{(j)}}{F}-1.
\]
This yields a sample of $100{,}000$ simulated return rates, each corresponding
to a distinct joint realization of crypto-loss dynamics and financial market
conditions. In addition, we report an illustrative \emph{pure realized return benchmark}
based on observed market outcomes in 2024. Let   $\mathrm{MV}$ denote the realized total value received at maturity
(including coupons and principal repayment) under the $L_1$ settlement rule
applied to the 2024 realized path. The pure realized return is then
\[
\mathrm{RR}_{\mathrm{real}}=\frac{\mathrm{MV}}{F}-1.
\]

\begin{table}[htb!]
\centering
\caption{Summary of simulated and pure realized return rates for the ETH category under trigger design $L_1$.}
\label{tab:eth_return_l1}
{
\begin{tabular}{llrrrrrrrr}
\toprule
\multicolumn{10}{c}{\textbf{Panel A: Simulated return rates}} \\
\midrule
Type & Trigger & 2.5\% & Q$_1$ & Median & Mean & Q$_3$ & 97.5\% & SD \\
\midrule
Simulated & $L_1$ & 0.13 & 0.15 & 0.16 & 0.16 & 0.17 & 0.19 & 0.01\\
\midrule
\multicolumn{10}{c}{\textbf{Panel B: Pure realized return benchmark (2024)}} \\
\midrule
\multicolumn{2}{l}{Purchase price} & \multicolumn{8}{c}{$L_1$} \\
\cmidrule(lr){1-2}\cmidrule(lr){3-10}
\multicolumn{2}{l}{Par ($F=1000$)} & \multicolumn{8}{c}{18.9\%} \\
\bottomrule
\end{tabular}}
\end{table}


Table~\ref{tab:eth_return_l1} reports the distribution of simulated return rates for the ETH category under trigger design $L_1$. Panel A indicates that both the median and the mean return are 16\%, suggesting a fairly symmetric central tendency. The central 95\% interval of simulated returns ranges from 13\% to 19\%, and the standard deviation is about 1\%, reflecting a relatively concentrated return distribution under this trigger structure. Panel B provides a pure realized return benchmark based on the par reference value ($F=1000$). Under the observed path, the realized return equals 18.9\%, which lies toward the upper region of the simulated distribution.

\paragraph{Sponsor cost components and evaluation.} A key contribution of the proposed on-chain settlement mechanism is the modification of the transaction cost curve. From the sponsor’s perspective, the total cost of risk transfer under the $L_1$ design consists of three main components:

\begin{enumerate}
    \item \textbf{Coupon payments.} These constitute the primary, recurring cost. Under the 2024 realized path, the pure realized holding-period return for $L_1$
is $18.9\%$. {Since principal is fully repaid} and the
purchase price is par, the realized return equals the
coupon fraction, so the sponsor's realized total coupon outlay is
$\$56.7$ million for $NF=\$300$ million.

    \item \textbf{On-chain administration.} The blockchain settlement layer, described in Section~2, executes monthly coupon state updates and a final principal settlement. Implemented on an  Ethereum-compatible Layer-2  network, such as Arbitrum or Optimism, with 2024 gas fees (e.g., 300{,}000 gas units at 0.1 gwei, ETH/USD in the low thousands), the annualized cost remains well below  {\$100} \citep{L2Fees2025}. We use an on-chain cost estimate of \$14{,}000, which reflects a conservative upper bound.\footnote{At prevailing 2024 Layer-2 gas prices, monthly state updates and final settlement require less than \$100 in total \citep{L2Fees2025}. The dominant component is an annual oracle subscription (typically around \$10{,}000 for a dedicated feed), with the remainder allocated to redundancy, monitoring, and contingency.} This cost is operationally negligible and has no material impact on the sponsor’s financial calculus.

    \item \textbf{Off-chain fixed cost.} As in conventional ILS programs, a one-year issuance entails fixed administrative and legal outlays: SPV formation and maintenance, trustee and custody, offering documentation, regulatory filings, and third-party modeling / rating \citep{NAICILS2020}. Consistent with market practice of parameterizing issuance costs either as a fixed dollar amount or as a percent of limit \citep{HKLegCoFSTB2021, GallagherRe2025}, we adopt two stylized specifications for off-chain issuance/administration in our scenarios: (i) a fixed benchmark of \$0.75 million per one-year deal and (ii) a sensitivity band of 1–2\% of notional\footnote{Although the proposed on-chain settlement layer can in principle reduce recurring costs 
by eliminating calculation and paying-agent fees, we do not reflect these savings in our base scenarios. 
First-year issuance will still incur the full legal, trustee, and regulatory budget, 
and published market benchmarks for issuance costs are reported on a total basis (fixed dollar 
or percent of notional). For comparability with conventional ILS studies, we therefore retain 
the 1--2\% specification here, while noting that steady-state recurring costs would plausibly 
be lower in practice once contracts and infrastructure are reused.}.

\end{enumerate}

\begin{table}[htb!]
\centering
\caption{Sponsor cost components with explicit on-chain fees (ETH, $L_1$; $NF=\$300$M; pure realized coupon-equivalent return $=18.9\%$).}
\label{tab:onchain_recalc}
\footnotesize
\begin{tabular}{lrrrrr}
\toprule
\textbf{Admin specification} & \textbf{Coupon} & \textbf{Off-chain admin} & \textbf{On-chain} & \textbf{Total (USD)} & \textbf{\% of limit}\\
\midrule
Fixed admin \$0.75M      & \$56{,}700{,}000 & \$750{,}000   & \$14{,}000 & \$57{,}464{,}000 & 19.15\% \\
Admin = 1\% of notional  & \$56{,}700{,}000 & \$3{,}000{,}000 & \$14{,}000 & \$59{,}714{,}000 & 19.90\% \\
Admin = 2\% of notional  & \$56{,}700{,}000 & \$6{,}000{,}000 & \$14{,}000 & \$62{,}714{,}000 & 20.90\% \\
\bottomrule
\end{tabular}
\end{table}

\noindent Table~\ref{tab:onchain_recalc} summarizes the sponsor’s cost structure
for an ETH-denominated catastrophe note under the $L_1$ trigger calibration with
total principal fixed at $NF=\$300$M. The coupon component is reported using the
 {pure realized} 2024 coupon-equivalent return of $18.9\%$ under $L_1$. We have the following observations.

\begin{itemize}
    \item \textbf{Coupon dominates total cost.}
    Coupon payments are the main cost driver, accounting for the overwhelming
    majority of total sponsor outlay across all scenarios (e.g., \$56.7M out of
    \$57.464M under the fixed-cost benchmark). By contrast, the on-chain
    component is operationally negligible (\$14{,}000, i.e., less than
    $0.01\%$ of limit), consistent with the view that the blockchain layer
    primarily changes the \emph{execution} cost curve rather than the economic
    cost of risk transfer.

    \item \textbf{Off-chain administration affects the effective cost ratio.}
    Varying off-chain issuance/administration assumptions from a fixed \$0.75M
    to 1--2\% of notional changes the all-in sponsor cost from 19.15\% to 20.90\%
    of limit. Although this variation is small relative to the coupon component,
    it is economically meaningful at scale: the total cost increases by about
    \$5.25M when moving from the fixed-cost benchmark to the 2\% specification
    (i.e., \$57.464M vs.\ \$62.714M for $NF=\$300$M). This highlights the
    importance of how off-chain costs are benchmarked in practice
    \citep{NAICILS2020,HKLegCoFSTB2021,GallagherRe2025}.

    \item \textbf{Level relative to mainstream ILS and calibration choice.}
    The all-in cost level implied by the 2024 realized benchmark (roughly
    19--21\% of limit) remains toward the high end relative to mature
    natural-cat ILS, reflecting both emerging-peril compensation
    \citep{GenevaAssociation2024CyberILS,ArtemisBeazley2024} and our calibration
    choices. In particular, the prespecified thresholds in
    Table~\ref{tab:trigger_levels_eth} were selected to target a relatively high
    expected investor return in the simulation design stage, which mechanically
    translates into a higher coupon-equivalent sponsor cost when principal is
    fully repaid. In practice, sponsors can reduce the required cost level by
    adjusting trigger sensitivity and payout schedules (e.g., shifting
    thresholds upward to reduce expected trigger incidence, or increasing
    $\Gamma$ in adverse bins to reduce expected compensation), thereby targeting
    return levels closer to prevailing ranges in cyber/ILS markets.\footnote{This
    reflects a standard design trade-off: lowering the investor return
    requirement reduces sponsor cost but also reduces the extent of tail-risk
    transfer.}
\end{itemize}

\section{Conclusion and Discussion} \label{sec:conclusion}

This paper develops a   multi-trigger framework for crypto
catastrophe instruments with on-chain settlement. On the theoretical side, we
(i) formalize a trigger-driven cash-flow architecture based on oracle-reported
monthly maximum and cumulative loss statistics, (ii) establish an arbitrage-free
valuation framework under an enlarged information filtration and a pricing
measure $\mathbb{Q}$, and (iii) derive a martingale-based decomposition that
separates the financial discounting component from the crypto-trigger component. We then
formulate sponsor-optimal contract design under a dual-measure setting in which
market valuation is governed by $\mathbb{Q}$ while sponsor exposure is assessed
under the physical measure $\mathbb{P}$, and we show that an optimal design
exists within the admissible smart-contract-feasible design class. Empirically, we operationalize the theory through a joint crypto--financial
statistical model. On the crypto side, monthly maximum and cumulative losses are
modeled using flexible GEV-based marginals with dependence captured by copulas;
on the financial side, Treasury rates, inflation, and SOFR dynamics are modeled
via ARIMA--GARCH marginals and a vine copula for cross-series dependence. These
models generate joint scenario paths that feed directly into the on-chain payout
maps, yielding predictive distributions for discounted bond prices, holding-period
returns, and sponsor shortfall via Monte Carlo simulation. A 2024 replay provides
an implementation-level check: applying the settlement rules to the realized path
produces outcomes consistent with a low-attachment year and lying within the
model-implied predictive envelope.

This work also has several limitations, similar to other studies in the ILS literature. 
First, non-stationarity and structural breaks in DeFi risk and macroeconomic conditions may challenge the validity of static copula assumptions. However, with the limited monthly data, dynamic dependence models are difficult to estimate reliably in the present setting. Static copulas therefore provide a reasonable first step, while longer time series will enable more robust exploration of time-varying structures in future work. Second, parameter uncertainty is not fully propagated into pricing, which can be addressed via the Bayesian approach. However,  given the limited sample sizes available here, Bayesian posterior simulation would add considerable complexity without necessarily improving robustness. We view full uncertainty quantification as a valuable direction for future work, particularly to produce credible intervals for bond prices and tail risk measures.  Third, beyond the
hybrid structure studied here, an important research
direction is the design of fully tokenized catastrophe instruments, issued,
held, and potentially traded natively as blockchain tokens, which could enable
more autonomous risk transfer and improved programmability in secondary-market
trading. Realizing this vision will require addressing two practical obstacles:
(i) regulatory and legal constraints on tokenized securities and cross-border
distribution, and (ii) collateral and valuation risks induced by crypto-market
volatility, which can affect both the stability of posted collateral and the
reliability of token-based settlement.

\medskip
{\color{blue}
\paragraph{Acknowledgments}
{The authors are grateful for the referee's constructive comments, which led to an improved version of the article.}
}

\bibliographystyle{apalike-ejor}
\bibliography{cat-crypto,att-all,breach,crypto,refbreach}
\end{document}